\def\@fpheader{\relax}
\preprint{IGC-17/9-1\\[-2.5em]}
\newlength{\bracewidth}
\newcommand{\myunderbrace}[2]{\settowidth{\bracewidth}{$#1$}#1\hspace*{-1\bracewidth}\smash{\underbrace{\makebox{\phantom{$#1$}}}_{#2}}}
\newcommand*\mystrut[1]{\vrule width0pt height0pt depth#1\relax}
\newcommand{\tra}{^\intercal} % transpose
\newcommand{\ii}{\mathrm{i}} % imaginary unit
\newtheorem{theorem}{Theorem}
\newtheorem{proposition}{Proposition}
\newtheorem{definition}{Definition}
\newtheorem*{corollary}{Corollary}
\title{\huge Linear growth of the entanglement entropy and the Kolmogorov-Sinai rate}
\author[a]{Eugenio Bianchi,}
\emailAdd{ebianchi@gravity.psu.edu}
\author[a]{Lucas Hackl,}
\emailAdd{lucas.hackl@psu.edu}
\author[a,b]{Nelson Yokomizo\,}
\emailAdd{yokomizo@fisica.ufmg.br}
\affiliation[a]{Institute for Gravitation and the Cosmos \& Physics Department,\\ Penn State, University Park, PA 16802, USA}
\affiliation[b]{Departamento de F\'isica - ICEx, Universidade Federal de Minas Gerais, \\
CP 702, 30161-970, Belo Horizonte, MG, Brazil}
\abstract{
The rate of entropy production in a classical dynamical system is characterized by the Kolmogorov-Sinai entropy rate $h_{\mathrm{KS}}$ given by the sum of all positive Lyapunov exponents of the system. We prove a quantum version of this result valid for bosonic systems with unstable quadratic Hamiltonian. The derivation takes into account the case of time-dependent Hamiltonians with Floquet instabilities. We show that the entanglement entropy $S_A$ of a Gaussian state grows linearly for large times in unstable systems, with a rate $\Lambda_A \leq h_{KS}$ determined by the Lyapunov exponents and the choice of the subsystem $A$. We apply our results to the analysis of entanglement production in unstable quadratic potentials and due to periodic quantum quenches in many-body quantum systems. Our results are relevant for quantum field theory, for which we present three applications: a scalar field in a symmetry-breaking potential, parametric resonance during post-inflationary reheating and cosmological perturbations during inflation. Finally, we conjecture that the same rate $\Lambda_A$ appears in the entanglement growth of chaotic quantum systems prepared in a semiclassical state.
}
\begin{document}
\maketitle
\flushbottom

\newpage

\section{Introduction}
Entanglement plays a central role in the thermalization of isolated quantum systems \cite{polkovnikov2011colloquium,gogolin2016equilibration,d2016quantum}. The paradigmatic setting consists in a Hamiltonian system prepared in a pure state and evolving unitarily, $|\psi_t\rangle=e^{-\mathrm{i}H t}|\psi_0\rangle$.  The objective is to study the thermalization of observables $\mathcal{O}_A$ belonging to a  subalgebra of observables $\mathcal{A}_A$ which define a bipartition $\mathcal{H}=\mathcal{H}_A\otimes\mathcal{H}_B$ of the system in a subsystem $A$ and its complement $B$. While the von Neumann entropy of the system vanishes at all times, the entropy of the subsystem $A$,
\begin{equation}
S_A(t)=-\mathrm{Tr}_A\big(\rho_A(t)\log\rho_A(t)\big)\qquad\textrm{with}\qquad \rho_A(t)=\mathrm{Tr}_B\big(|\psi_t\rangle\langle\psi_t|\big)\,,
\label{}
\end{equation}
in general does not vanish and has a non-trivial evolution. The origin of this entropy is the entanglement between the degrees of freedom in the subsystem $A$ and its complement. Equilibration in the subsystem $A$ occurs when the entanglement entropy $S_A(t)$ approaches an equilibrium value $S_\mathrm{eq}$, with thermalization corresponding to $S_\mathrm{eq}$ given by the thermal entropy.

A generic behavior has been observed for various systems prepared in a state with initially low entanglement entropy, $S_A(t_0)\ll S_\mathrm{eq}\;$: After a transient which depends on the details of the initial state of the system, the entropy of the subsystem goes through a phase of \emph{linear growth},
\begin{equation}
S_A(t)\sim \Lambda_A \,t\,,
\label{}
\end{equation}
until it saturates to an equilibrium value as described in figure~\ref{fig:A-Illustration}. This behavior is observed in the evolution of various isolated quantum systems, in particular in systems that show the signatures of quantum chaos \cite{Zurek:1994wd,miller1999signatures,pattanayak1999lyapunov,monteoliva2000decoherence,tanaka2002saturation}, in many-body quantum systems \cite{kim2013ballistic} and quantum fields \cite{Calabrese:2005in,Calabrese:2004eu,cotler2016entanglement} after a  quench, and in the thermalization of strongly-interacting quantum field theories studied using holographic methods \cite{balasubramanian2011thermalization,balasubramanian2011holographic,Hartman:2013qma,Liu:2013iza,Liu:2013qca}. Understanding the mechanism of this process is of direct relevance for the puzzle of fast thermalization of the quark gluon plasma produced in heavy-ion collisions \cite{muller2011entropy,kunihiro2010chaotic,Hashimoto:2016wme}, in models of black holes as fast scramblers of quantum information \cite{sekino2008fast}, and in the study of the quantum nature of space-time \cite{VanRaamsdonk:2010pw,Bianchi:2012ev,Bianchi:2015edy,Susskind:2014moa,Jefferson:2017sdb,Chapman:2017rqy}. In particular, being able to predict from first principles the rate of growth $\Lambda_A$ of the entanglement entropy in the phase of linear growth can provide us with crucial information on the time-scale of thermalization.

\begin{figure}[t]
	\centering
	\noindent\makebox[\linewidth]{
		\includegraphics[width=.9\linewidth]{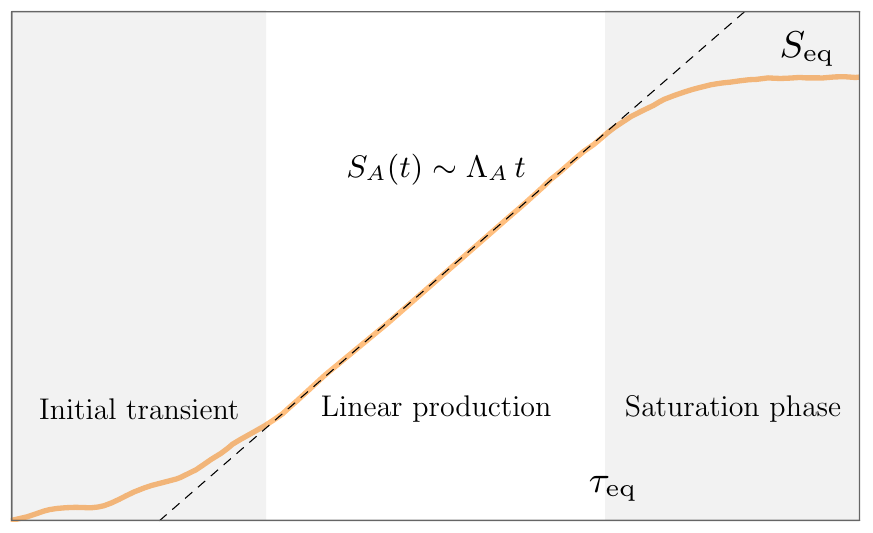}
	}
	\caption{\emph{Sketch of typical entanglement production}. Typical time dependence of the entanglement entropy $S_A(t)$ under unitary time evolution:  After an initial transient (a), linear production occurs with characteristic rate $\Lambda_A$ (b), and finally the system equilibrates in the saturation phase (c). The typical time scale for the equilibration of a state with initially vanishing entanglement entropy is $\tau_\mathrm{eq}\sim S_\mathrm{eq}/\Lambda_A$.}
	\label{fig:A-Illustration}
\end{figure}

On the other hand, at the classical level\,---\,in Hamiltonian chaotic systems\,---\,the coarse-grained entropy $S_\mathrm{cl}(t)$ shows a behavior similar to the one described in figure~\ref{fig:A-Illustration}, with a linear phase which has a known rate of growth $h_{\mathrm{KS}}$ \cite{latora1999kolmogorov,falcioni2005production},
\begin{equation}
S_\mathrm{cl}(t)\sim h_{\mathrm{KS}} \,t\,,
\label{}
\end{equation}
where $h_{\mathrm{KS}}$ is the Kolmogorov-Sinai rate of the system, an information-theoretic quantity that measures the uncertainty remaining on the future state of a system, once
an infinitely long past is known. The Kolmogorov-Sinai rate has dimension of \emph{time}${}^{-1}$ and for regular Hamiltonian systems is given by the sum of the positive Lyapunov exponents $\lambda_i$ of the system \cite{kolmogorov1958new,Sinai:2009,zaslavsky2008hamiltonian,vulpiani2010chaos}.

In quantum systems that have a classical chaotic counterpart, a relation between the rate of growth of the entanglement entropy $\Lambda_A$ and the classical Lyapunov exponents $\lambda_i$ is expected \cite{Zurek:1994wd,miller1999signatures,pattanayak1999lyapunov,monteoliva2000decoherence,tanaka2002saturation,kunihiro2009towards,Asplund:2015osa,Bianchi:2015fra}, despite the fact that Lyapunov exponents are global quantities which probe the phase space of the full system, not just of the subsystem $A$. 

In this paper we investigate the relation between $\Lambda_A$ and the Lyapunov exponents $\lambda_i$ by studying the evolution of Gaussian states in many-body systems and quantum field theories with quadratic time-dependent Hamiltonians. Non-trivial Lyapunov exponents arise in the presence of instabilities and of parametric resonances. In this context we prove that the linear growth of the entanglement entropy $S_A(t)$ has a classical counterpart: The entanglement rate $\Lambda_A$ equals the exponential rate of growth of the volume of a cell in the sub phase space of the subsystem $A$. We then provide an algorithm for computing $\Lambda_A$ in terms of the Lyapunov exponents $\lambda_i$ of the classical system and the choice of subsystem $A$. The methods developed apply both to quantum systems with finitely many degrees of freedom and to quantum fields in external time-dependent backgrounds when the subsystem is given by a finitely-generated Weyl subalgebra $\mathcal{A}_A$.\\

The paper is organized as follows. In section \ref{sec:growth}, we present our main result, theorem~\ref{th:SA}, which determines the asymptotic rate of growth of the entanglement entropy for Gaussian states in systems with quadratic Hamiltonians and establish its relation to the Kolmogorov-Sinai rate, including bounds for non-Gaussian initial states. Our results are then applied in section \ref{sec:applications-finite-dof} to the study of entanglement production in three example systems with finitely many degrees of freedom, including unstable potentials and periodic quantum quenches. Section~\ref{sec:qft} is dedicated to quantum field theories, where we consider again three example systems: a scalar field in a symmetry-breaking potential, parametric resonance during post-inflationary reheating and cosmological perturbations during inflation. We show that our results agree with numerical evaluations of the entanglement entropy for sufficiently large times. In sections~\ref{sec:classical} and \ref{sec:entropy} we present technical results required for the derivation of our main result in section \ref{sec:growth}. In particular, section~\ref{sec:entropy} reflects the structure of our proof for theorem~\ref{th:SA}. Finally, in section~\ref{sec:discussion} we discuss limitations and possible extensions of our work, and in particular a conjecture on entanglement production in chaotic systems. Moreover, we discuss the relation to linear growth of the entanglement entropy after a generic quantum quench. The paper is supplemented with appendices where we provide a summary of the relevant results in the study of dynamical systems and Lyapunov exponents and of general symplectic techniques for the study of the time-evolution and entanglement and R\'enyi entropies of Gaussian states.

\section{Results: Linear growth of the entanglement entropy}\label{sec:growth}
We state the main result which relates the asymptotic rate of growth of the entanglement entropy of a quantum system to classical instabilities encoded in the Lyapunov exponents of the classical system. Our proof is based on a set of technical results presented in sections \ref{sec:classical} and \ref{sec:entropy}.

\subsection{Entanglement entropy growth, instabilities and the volume exponent}\label{sec2:growth}
We consider a quadratic bosonic system with $N$ degrees of freedom. We denote linear observables by $\xi^a=(q_1,\dots,q_N,p_1,\dots,p_N)$ and assume canonical commutation relations $[q_i,q_j]=[p_i,p_j]=0$ and $[q_i,p_j]=\ii\delta_{ij}$. These relations can be more compactly phrased by stating $[\xi^a,\xi^b]=\ii\Omega^{ab}$ where $\Omega^{ab}$ is a symplectic form. The most general quadratic Hamiltonian is given by
\begin{align}
	H(t)=\frac{1}{2}h_{ab}(t)\xi^a\xi^b+f_a(t)\xi^a\,,
\end{align}
where we explicitly allow for dependence on time $t$. The time-evolution of an initial state $|\psi_0\rangle$ under the unitary dynamics $U(t)$ generated by $H(t)$  results in the evolution of the entanglement entropy of a subsystem
\begin{equation}
S_A(t)\equiv S_A\big(U(t)|\psi_0\rangle\big)\,.
\end{equation}
Before we state the main result, let us introduce two important notions:
\begin{itemize}
	\item \textbf{Subsystem exponents in classical dynamical systems}\\
	In classical dynamical systems, a quadratic Hamiltonian $H(t)$ generates a linear symplectic flow $M(t): V\to V$ on the classical phase space $V$ of the theory. The transpose $M(t)^\intercal$ of this flow acts on the dual phase space $V^*$. Given a linear observable $\ell\in V^*$, we can define the Lyapunov exponent of $\ell$ as the limit
	\begin{align}
		\lambda_{\ell}=\lim_{t\to\infty}\log\frac{1}{t}\frac{\lVert M(t)^\intercal\ell\rVert}{\lVert\ell\rVert}\,,
	\end{align}
	which is independent from the metric we choose to measure the length. A system decomposition $V=A\oplus B$ of the classical phase space into subsystem phase spaces $A$ and $B$ induces an equivalent decomposition $V^*=A^*\oplus B^*$ of the dual phase space. Here, we can generalize the notion of Lyapunov exponents to define the subsystem exponent $\Lambda_A$ defined by
	\begin{align}
		\Lambda_A=\lim_{t\to\infty}\frac{1}{t}\log\frac{\mathrm{vol}(M(t)^\intercal\mathcal{V}_A)}{\mathrm{vol}(\mathcal{V}_A)}\,,
		\label{eq:LambdaAintro}
	\end{align}
	where $\mathcal{V}_A\subset A^*$ is an arbitrary parallelepiped in the subspace $A^*$. The subsystem exponent captures the exponential volume growth of subsystem regions. The volume $\mathrm{vol}$ is measured on the subspace where $M(t)^\intercal\mathcal{V}_A$ lives, but the subsystem exponent is independent of the global metric on $V^*$ one chooses to define the volume form on arbitrary subspaces. We explain the relation between $\Lambda_A$ and $\lambda_\ell$ in section \ref{sec:classical}, while more technical details are summarized in appendix \ref{app}.
	\item \textbf{Entanglement of Gaussian states}\\
	It is well-known that a Gaussian bosonic state $|\psi\rangle$ can be completely characterized by its expectation value $\zeta^a=\langle \psi|\xi^a|\psi\rangle$ and its covariance matrix $G^{ab}=\langle \psi|\xi^a\xi^b+\xi^b\xi^a|\psi\rangle-2\,\zeta^a\zeta^b$. Recent progress on unifying methods for bosonic and fermionic Gaussian states \cite{Vidmar:2017uux} suggest an equivalent description where $G^{ab}$ is replaced by a linear complex structure $J^a{}_b=-G^{ac}\omega_{cb}$ with $\omega$ being the inverse of $\Omega$. Choosing a system decomposition $A\oplus B$ with complementary subsystems $A$ and $B$ allows us to compute the entanglement entropy $S_A(|\psi\rangle)$ between them. For a Gaussian state $|\psi\rangle$, this entanglement entropy can be directly computed from $J$, which we use in section \ref{sec:entropy} and review in appendix \ref{sec:quantum}.
\end{itemize}
With these preliminaries in hand, we can state the following theorem that applies to the evolution of the entanglement entropy of any Gaussian initial state.
\begin{theorem}[Entanglement growth] 
	\label{th:SA}
	Given a quadratic time-dependent Hamiltonian $H(t)$ and a subsystem $A$ with subsystem exponent $\Lambda_A$, the long-time behavior of the entanglement entropy of the subsystem is
	\begin{equation}
		S_A(t)\sim \Lambda_A\,t
		\label{}
	\end{equation}
	for all initial Gaussian states $|J_0,\zeta_0\rangle$.
\end{theorem}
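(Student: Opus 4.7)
The plan is to reduce the entropy to a determinant of the reduced covariance matrix and to identify that determinant geometrically as the squared volume whose growth rate defines $\Lambda_A$. Since $H(t)$ is quadratic, the evolved state remains Gaussian with covariance matrix $G(t) = M(t)\,G_0\,M(t)^\intercal$, and the mean $\zeta_t$ drops out of the entropy. Using the Gaussian entropy formula reviewed in section~\ref{sec:entropy} and appendix~\ref{sec:quantum}, one has $S_A(t) = \sum_i s(\nu_i(t))$, where $\{\nu_i(t)\}$ are the symplectic eigenvalues of the reduced covariance matrix $G_A(t) = \pi_A\,G(t)\,\pi_A^\intercal$ and $s(\nu) = \frac{\nu+1}{2}\log\frac{\nu+1}{2} - \frac{\nu-1}{2}\log\frac{\nu-1}{2}$. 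Since the function $s(\nu)-\log\nu$ extends continuously to $[1,\infty]$, it is uniformly bounded for $\nu\ge 1$; combined with Williamson's theorem and the fact that the symplectic diagonalizer has unit determinant (so $\det G_A = \prod_i \nu_i^{2}$), this gives
\[
S_A(t) \;=\; \tfrac{1}{2}\log\det G_A(t) + O(1),
\]
so the leading asymptotics of $S_A(t)$ is controlled entirely by $\det G_A(t)$.

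The second step is a Gram-matrix identification. Pick any basis $\{e_a\}$ of $A^*\subset V^*$ spanning the parallelepiped $\mathcal{V}_A$ appearing in (\ref{eq:LambdaAintro}). A direct computation yields
\[
[G_A(t)]_{ab} \;=\; \bigl(M(t)^\intercal e_a\bigr)^\intercal G_0 \bigl(M(t)^\intercal e_b\bigr),
\]
so $G_A(t)$ is the Gram matrix of the vectors $\{M(t)^\intercal e_a\}\subset V^*$ in the positive-definite inner product $G_0$ on $V^*$. Hence $\det G_A(t) = [\mathrm{vol}_{G_0}(M(t)^\intercal\mathcal{V}_A)]^2$. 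Because $\Lambda_A$ as defined in (\ref{eq:LambdaAintro}) is independent of the choice of metric used to measure the volume, applying it with the $G_0$-induced volume on $V^*$ yields $\log\mathrm{vol}_{G_0}(M(t)^\intercal\mathcal{V}_A) = \Lambda_A\,t + o(t)$. Combining with the first step gives $S_A(t) = \Lambda_A\,t + o(t)$, i.e.\ $S_A(t)\sim \Lambda_A\,t$, independently of the initial Gaussian state.

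The step I expect to be most delicate is the classical input itself: establishing (in section~\ref{sec:classical}) that the limit in (\ref{eq:LambdaAintro}) is well-posed, is independent of the metric used to compute volumes on arbitrary subspaces of $V^*$ and of the particular parallelepiped $\mathcal{V}_A$, and can actually be computed from the spectrum $\{\lambda_i\}$ of Lyapunov exponents together with the geometry of the subsystem $A^*\subset V^*$. Once this is in hand, two smaller technical points need to be verified: that $G_A(t)$ remains positive-definite for all $t$ (which follows from $G_0>0$ and the injectivity of $M(t)^\intercal\pi_A^\intercal:A^*\to V^*$, since $M(t)$ is invertible and $\pi_A^\intercal$ is an inclusion), so Williamson's decomposition is available throughout the evolution; and that symplectic eigenvalues which do not grow exponentially (corresponding to neutral or marginally stable directions) contribute only a bounded additive term to $S_A(t)$, as is ensured by the uniform bound on $s(\nu)-\log\nu$.
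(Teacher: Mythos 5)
Your proposal is correct and follows essentially the same route as the paper: the bound $0\leq S(\nu)-\log\nu<1-\log 2$ reduces the problem to the R\'enyi entropy $\tfrac{1}{2}\log\det[G_t]_A$ (your step 1 is the paper's inequality (\ref{eq:SA-RA})), the Gram-matrix identity $\det[G_t]_A=\big(\mathrm{Vol}_{G_0}(M\tra(t)\mathcal{V}_A)\big)^2$ is the paper's equation (\ref{eq:Renyi=Vol2}) combined with section~\ref{sec2:stretching}, and the final appeal to the definition of $\Lambda_A$ is the paper's step (iii). You also correctly locate the real technical burden in the classical input (existence and metric-independence of the limit in (\ref{eq:LambdaAintro}), handled in the paper via the regularity assumption of appendix~\ref{app:regular} and theorem~\ref{th:LambdaA}), which is exactly where the paper places it.
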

\begin{proof}
	The proof of this theorem involves three steps that rely on ingredients reviewed in section \ref{sec:entropy}. 
	\begin{itemize}[noitemsep]
		\item[(i)] \textbf{The entanglement entropy is bounded by the Renyi entropy:}\\
		We define the asymptotic rate of growth of the entanglement entropy as its long-time linear scaling $\lim_{t\to\infty}\frac{1}{t}S_A(U(t)|J_0,\zeta_0\rangle)$. We note that quadratic time-dependent Hamiltonians evolve the initial Gaussian state into a Gaussian state, (\ref{eq:Jt}). In section~\ref{sec2:bounds} we prove that the entanglement entropy of a Gaussian state is bounded from below by the R\'{e}nyi entropy $R_A(U(t)|J_0,\zeta_0\rangle)$ and from above by the R\'{e}nyi entropy plus a state-independent constant, inequality (\ref{eq:SA-RA}). Therefore, we have the equality
		\begin{equation}
			\lim_{t\to\infty}\frac{S_A(U(t)|J_0,\zeta_0\rangle)}{t}\;=\lim_{t\to\infty}\frac{R_A(U(t)|J_0,\zeta_0\rangle)}{t}\,,
			\label{}
		\end{equation}
		i.e., the asymptotic rate of growth of the entanglement entropy and of the R\'{e}nyi entropy coincide.
		\item[(ii)] \textbf{The Renyi entropy is given by a phase space volume:}\\
		In section~\ref{sec2:Renyi-phase-space} we prove that the R\'{e}nyi entropy of a Gaussian state equals the logarithm of the phase space volume of a symplectic cube $\mathcal{V}_A$ spanning the subsystem $A$, (\ref{eq:RAvol}). The volume is measured with respect to the metric induced by the state, (\ref{eq:J=Og}). In the case of the time-dependent Gaussian state $U(t)|J_0,\zeta_0\rangle$, we can measure the volume with respect to the time-dependent induced metric $G_t=M(t) G_0 M\tra(t)$. Equivalently, we can consider the time-dependent symplectic cube $M\tra(t)\mathcal{V}_A$ and measure its volume with respect to the initial metric $G_0$ induced by the initial state,
		\begin{equation}
			R_A(U(t)|J_0,\zeta_0\rangle)=\log \mathrm{Vol}_{G_0}(M\tra(t)\mathcal{V}_A)\,.
			\label{}
		\end{equation}
		\item[(iii)] \textbf{The Renyi entropy grows as regions in phase space are stretched:}\\
		The subsystem exponent $\Lambda_A$ introduced in Eq.~(\ref{eq:LambdaAintro}) and discussed in section~\ref{sec2:LambdaA} provides a generalization of the notion of Lyapunov exponents of a classical Hamiltonian system. It involves the choice of a subsystem $A$, a symplectic dynamics $M(t)$ and a reference metric $G_0$,
		\begin{equation}
			\Lambda_A=\lim_{t\to\infty}\frac{1}{t}\log\frac{\mathrm{Vol}_{G_0}(M\tra(t)\mathcal{V}_A)}{\mathrm{Vol}_{G_0}(\mathcal{V}_A)}\,.
			\label{}
		\end{equation}
		Despite the metric $G_0$ is needed for the definition, the value of the subsystem exponent $\Lambda_A$ is independent of $G_0$ for regular Hamiltonian systems. The subsystem exponent can be expressed in terms of the Lyapunov exponents of the system using the algorithm described in theorem~\ref{th:LambdaA}, (\ref{eq:LambdaAsum}).
	\end{itemize}
	Using (i), (ii) and (iii), we find that the asymptotic rate of growth of the entanglement entropy is given by the subsystem exponent $\Lambda_A$,
	\begin{equation}
		\lim_{t\to\infty}\frac{S_A(U(t)|J_0,\zeta_0\rangle)}{t}=\Lambda_A
		\label{eq:thSA}
	\end{equation}
	for all initial Gaussian states, therefore proving the statement of the theorem.
\end{proof}
We note that, as the entanglement entropies of complementary subsystems  $A$ and $B$ coincide, $S_A(|\psi\rangle)=S_B(|\psi\rangle)$, also their asymptotic rates of growth have to coincide. Consistency with the statement of the theorem implies that, at the classical level, the subsystem exponents defined in section~\ref{sec2:LambdaA} for a symplectic decomposition $V=A\oplus B$ coincide 
\begin{equation}
	\Lambda_A=\Lambda_B\,.
	\label{}
\end{equation}
This statement can be proven using the expression (\ref{eq:LambdaAsum}) of the subsystem exponents or more directly using the property $\det[J]_A=\det[J]_B$ for the restriction of a complex structure $J$ to complementary symplectic subspaces.

\subsection{Entanglement and the Kolmogorov-Sinai entropy rate}
\begin{theorem}[Entanglement growth -- generic subsystem] 
	\label{th:SA-generic}
	Given a quadratic time-dependent Hamiltonian $H(t)$ with Lyapunov exponents $\lambda_i$, the long-time behavior of the entanglement entropy of a generic subsystem $A$ is
	\begin{equation}
		S_A(t)\sim \Big(\sum_{i=1}^{2N_A}\lambda_i\Big)\,t
		\label{eq:SA=LambdaA}
	\end{equation}
	for all initial Gaussian states $|J_0,\zeta_0\rangle$ and all generic subsystems with $N_A$ degrees of freedom. 
	
	In particular, the rate of growth of the entanglement entropy is bounded from above by the Kolmogorov-Sinai rate $h_\mathrm{KS}$,
	\begin{equation}
		\lim_{t\to\infty}\frac{1}{t}S_A(t)\;\leq h_\mathrm{KS}\,.
		\label{eq:KSbound}
	\end{equation}
	The decomposition in two complementary subsystems both with dimension larger than the number of instabilities results in an entanglement growth proportional to the Kolmogorov-Sinai rate,
	\begin{equation}
		S_A(t)\sim h_\mathrm{KS}\;t\qquad\mathrm{for}\qquad 2N_A\geq N_I\quad\mathrm{and}\quad 2N_B\geq N_I\,,
		\label{eq:SAsaturate}
	\end{equation}
	and therefore saturates the bound (\ref{eq:KSbound}).
\end{theorem}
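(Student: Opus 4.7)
The plan is to apply Theorem~\ref{th:SA} to reduce the asymptotic growth of $S_A(t)$ to a computation of the subsystem exponent $\Lambda_A$, and then to evaluate $\Lambda_A$ for a generic subsystem via the algorithm of theorem~\ref{th:LambdaA} referenced in step (iii) of the proof above. Since Theorem~\ref{th:SA} already yields $S_A(t)\sim\Lambda_A\,t$ for every Gaussian initial state, the entire content of this theorem lies in expressing $\Lambda_A$ through the Lyapunov spectrum of $M(t)$ and extracting its consequences.

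For a generic subsystem I would analyze the volume $\mathrm{Vol}_{G_0}(M\tra(t)\mathcal{V}_A)$ where $\mathcal{V}_A$ is a parallelepiped spanning the $2N_A$-dimensional dual subspace $A^*\subset V^*$. Ordering the Lyapunov exponents as $\lambda_1\geq\lambda_2\geq\cdots\geq\lambda_{2N}$, the Oseledets-type filtration of $V^*$ under $M\tra(t)$ associates definite exponential growth rates to vectors in general position. A subspace $A^*$ transverse to each member of this filtration has its volume growth dominated by the top $2N_A$ expanding directions, yielding $\Lambda_A=\sum_{i=1}^{2N_A}\lambda_i$ and thus (\ref{eq:SA=LambdaA}) for generic $A$.

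The bounds then follow from the symplectic pairing of Lyapunov exponents. Since $M(t)$ is symplectic, the exponents come in pairs $\pm\lambda$, so among the $2N$ values there are $N_I$ strictly positive, $N_I$ strictly negative, and $2N-2N_I$ zero, and the Kolmogorov--Sinai rate reads $h_{\mathrm{KS}}=\sum_{\lambda_i>0}\lambda_i$. The sum of the top $2N_A$ exponents cannot exceed $h_{\mathrm{KS}}$, which proves (\ref{eq:KSbound}). Saturation requires both that the top $2N_A$ exponents include all $N_I$ positive ones (i.e.\ $2N_A\geq N_I$) and that no negative exponent enters the sum (equivalently $2N_A\leq 2N-N_I$, i.e.\ $2N_B\geq N_I$). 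In this window the sum collapses to $h_{\mathrm{KS}}$, giving (\ref{eq:SAsaturate}); the symmetry $\Lambda_A=\Lambda_B$ noted after Theorem~\ref{th:SA} is manifestly consistent with the symmetric form of the saturation conditions.

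The main obstacle is step two: rigorously characterizing what ``generic subsystem'' means and showing that the volume growth rate of a generic $A^*$ is the sum of the top $2N_A$ exponents. One must verify that $A^*$ projects nondegenerately onto each top Oseledets subspace, so that the leading exponentials survive in $\mathrm{Vol}_{G_0}(M\tra(t)\mathcal{V}_A)$ and are not suppressed by accidental cancellations. Equivalently, the algorithm of theorem~\ref{th:LambdaA} must reduce in the generic case to the plain sum $\sum_{i=1}^{2N_A}\lambda_i$. Fine-tuned subsystems---aligned with slow manifolds, or probing Lyapunov degeneracies---sit on a measure-zero set and require the full algorithm to handle properly.
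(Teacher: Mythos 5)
Your proposal is correct and follows essentially the same route as the paper: reduce to $\Lambda_A$ via theorem~\ref{th:SA}, identify $\Lambda_A=\sum_{i=1}^{2N_A}\lambda_i$ for generic subsystems (the content of theorem~\ref{th:LambdaA-generic}, which you correctly flag as the step requiring the transversality/measure-zero argument), and then combine the symplectic pairing of exponents with Pesin's formula to obtain the bound and the saturation window $N_I\leq 2N_A\leq 2N-N_I$. The only caveat worth noting is that the pairing $\pm\lambda$ of Lyapunov exponents, which your bound relies on, holds for regular Hamiltonian systems in the sense of the appendix, an assumption the paper also makes implicitly.
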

\begin{proof}
	The asymptotic rate of growth of the entanglement entropy of a Gaussian state is given by the subsystem exponent $\Lambda_A$ as stated in theorem~\ref{th:SA}, (\ref{eq:thSA}). For a generic subsystem, theorem~\ref{th:LambdaA-generic} states that the subsystem exponent equals the sum of the $2N_A$ largest Lyapunov exponents, (\ref{eq:LambdaGeneric}). Together with Pesin's theorem (\ref{eq:Pesin}), this result implies that the asymptotic rate of growth is bounded from above by the Kolmogorov-Sinai rate of the system,
	\begin{equation}
		\lim_{t\to\infty}\,\frac{1}{t}S_A(t)\;=\;\sum_{i=1}^{2N_A}\lambda_i\;\leq\; h_\mathrm{KS}\,.
		\label{}
	\end{equation}
	Moreover, the subsystem exponent $\Lambda_A$ equals the Kolmogorov-Sinai rate $h_\mathrm{KS}$ when its dimension is in the range $N_I\leq 2 N_A\leq 2N-N_I$, (\ref{eq:saturate}). Recalling that $N_A+N_B=N$, this range coincides with the requirement that the dimension of each of the two complementary subsystems is larger than the number of instabilities, $2N_A\geq N_I$ and $2N_B\geq N_I$. In this case the bound (\ref{eq:KSbound}) is saturated. 
\end{proof}
We note that quantum many-body systems often have only a small finite number of unstable directions $N_I$ compared to the number of degrees of freedom of the system, $N_I\ll N$. A generic decomposition in two complementary subsystems that encompass the fractions $f_A=N_A/N$ and $f_B=1-f_A$ of the full system satisfies (\ref{eq:SAsaturate}) if the fractions are in the range 
\begin{equation}
	\frac{N_I}{2N}\,\leq\, f_A\,\leq 1-\frac{N_I}{2N}\,.
	\label{}
\end{equation}
As a result, in the limit $N\to \infty$ with $N_I$ finite, we have that the entanglement growth is proportional to the Kolmogorov-Sinai rate $S_A(t)\sim h_\mathrm{KS}\;t$ for all partitions of the system into two complementary subsystems each spanning a finite fraction of the system, except for a set of partitions of measure zero.

\subsection{Bounds on non-Gaussian initial states}
Computing the entanglement entropy growth of non-Gaussian states is a non-trivial problem as efficient tools similar to the ones discussed in (\ref{eq:BHY}) are not available. Nevertheless upper bounds that generalize theorems~\ref{th:SA} and~\ref{th:SA-generic} can be established in the case of evolution driven by a quadratic time-dependent Hamiltonian.

Let us consider an initial non-Gaussian state $|\psi_0\rangle$ and the unitary evolution $U(t)$ generated by a quadratic time-dependent Hamiltonian of the most general form described in (\ref{eq:quadHhat}). The symmetric part of the connected $2$-point function at the time $t$ is given by
\begin{equation}
	G^{ab}_t\equiv \langle\psi_t| \hat{\xi}^a \hat{\xi}^b+\hat{\xi}^b\hat{\xi}^a|\psi_t\rangle-2\langle\psi_t| \hat{\xi}^a|\psi_t\rangle\langle\psi_t| \hat{\xi}^b|\psi_t\rangle\;=\;M^a{}_c(t)M^b{}_d(t)\,G^{cd}_0
	\label{eq:Gt}
\end{equation}
where $|\psi_t\rangle\equiv U(t)|\psi_0\rangle$ and $M^a{}_b(t)$ is the symplectic matrix defined in (\ref{eq:Mh}). There always exists a mixed Gaussian state $\rho_0$ which has the same correlation function $G^{ab}_0$ at the time $t=0$ \cite{holevo2011probabilistic}. By construction, the $2$-point function of the unitarily evolved Gaussian state $U(t)\rho_0 U^{-1}(t)$ is the function $G^{ab}_t$ of (\ref{eq:Gt}). Moreover one can show that the entanglement entropy of the non-Gaussian state $|\psi_t\rangle$ is bounded from above by the entanglement entropy of the mixed Gaussian state having the same $2$-point function $G^{ab}_t$, i.e. $S_A(\rho_{NG})\leq S_A(\rho_{G})$  where $\rho_{NG}=\mathrm{Tr}_B(|\psi_t\rangle\langle\psi_t|)$ is the reduced density matrix of the non-Gaussian state, and $\rho_G=\mathrm{Tr}_B(U(t)\rho_0 U^{-1}(t))=\exp( -\frac{1}{2}k_{rs}(t)\hat{\xi}^r\hat{\xi}^s+E_0(t))$ is the reduced density matrix of the Gaussian state \cite{Bianchi:2015fra}. The proof is immediate: Recalling that the relative entropy is a positive function \cite{vedral2002role,ohya2004quantum}, we have
\begin{align}
	0\;\leq &\;\;S(\rho_{NG}\|\rho_{G})\equiv \mathrm{Tr}_A(\rho_{NG}\log \rho_{NG}\;-\;\rho_{NG}\log \rho_{G})\\[.5em]
	&=-S_A(\rho_{NG})+S_A(\rho_{G})+\frac{1}{2}k_{rs}(t)\underbrace{\Big(\mathrm{Tr}_A(\hat{\xi}^r\hat{\xi}^s \rho_{NG})-\mathrm{Tr}_A(\hat{\xi}^r\hat{\xi}^s \rho_{G})\Big)}_{=0}\;,
	\label{}
\end{align}
where $S(\rho_{NG}\|\rho_{G})$ is the relative entropy and the term in parenthesis vanishes as the two states have the same correlation function  by construction. On the other hand, theorem~\ref{th:SA} generalizes to mixed Gaussian states implying the asymptotic growth $S_A(\rho_{G})\sim \Lambda_A \,t$ for the entanglement entropy of a subsystem $A$. As a result we find the inequality
\begin{equation}
	\lim_{t\to \infty}\frac{1}{t}S_A(|\psi_t\rangle)\;\leq\; \Lambda_A
	\label{eq:upper-bound}
\end{equation}
which states that the asymptotic rate of growth of the entanglement entropy of a non-Gaussian state $|\psi_t\rangle$ which evolves unitarily with a quadratic Hamiltonian is bounded from above by the subsystem exponent $\Lambda_A$. This result generalizes theorems~\ref{th:SA} and~\ref{th:SA-generic} to non-Gaussian states and establishes the Kolmogorov-Sinai rate $h_{\mathrm{KS}}$ as the upper bound for the asymptotic rate of growth of the entanglement entropy. 

Preliminary numerical investigations indicate that, under the unitary evolution given by a quadratic time-dependent Hamiltonian, the upper bound $\Lambda_A$ in (\ref{eq:upper-bound}) might in fact be saturated by all initial states and not just by Gaussian states \cite{Hackl:2017ndi}.

\section{Applications: unstable potentials and periodic quenches} \label{sec:applications-finite-dof}
We briefly discuss three examples of simple systems that show a linearly growing entanglement entropy and allow us to test our results.

\subsection{Particle in a 2d inverted potential}\label{sec2:constrained-particle}
In our first example, we study a simple system consisting of just two degrees of freedom. Despite its simplicity, the example captures the main features of the theorems presented above. It also resembles the system studied in \cite{Asplund:2015osa} and thereby illustrates how our theorems simplify the involved steps to understand the asymptotic behavior of the entanglement entropy.

We consider a system which can be described as a quantum particle with mass $m=1$ moving in a plane with coordinates $(q_1,q_2)$ and corresponding momenta $(p_1,p_2)$. The instabilities arise from an inverted harmonic potential $V(q_1,q_2)=-\frac{\lambda_1^2}{2}q_1^2-\frac{\lambda_2^2}{2}q_2^2$ with $\lambda_1\geq\lambda_2>0$. The Hamiltonian of this system is explicitly given by
\begin{align}
	H=\frac{1}{2}\left(p_1^2+p_2^2-\lambda^2_1q_1^2-\lambda^2_2q_2^2\right)\,.
	\label{eq:2d-inverted}
\end{align}
If we choose the Darboux basis $\mathcal{D}_V=(p_1,p_2,q_1,q_2)$, the matrices $h$ and $K=\Omega h$ become
\begin{align}
	h=\left(\begin{array}{cc|cc}
		1 & & &\\
		& 1 & &\\
		\hline
		& & -\lambda_1 & \\
		& & & -\lambda_2
	\end{array}\right)\quad\Rightarrow\quad K=\left(\begin{array}{cc|cc}
		& & -\lambda_1^2 & \\
		& & & -\lambda_2^2\\
		\hline
		-1 & & &\\
		& -1 & &
	\end{array}\right)\,.
\end{align}
The Lyapunov exponents $(\lambda_1,\lambda_2,-\lambda_2,-\lambda_1)$ are given by the eigenvalues of $K$ and the Lyapunov basis $\mathcal{D}_L=(\ell^1,\ell^2,\ell^3,\ell^4)=(Q_1,Q_2,P_2,P_1)$ are the corresponding eigenvectors
\begin{align}
	\left\{
	\begin{array}{cl}
		Q_1&=q_1-\lambda_1\,p_1\\
		Q_2&=q_2-\lambda_2\,p_2\\
		P_2&=p_2+\frac{1}{\lambda_2}\,q_2\\
		P_1&=p_1+\frac{1}{\lambda_1}\,q_1
	\end{array}
	\right.
\end{align}

\begin{figure}[t]
	\centering
	\noindent\makebox[\linewidth]{
		\includegraphics[width=.9\linewidth]{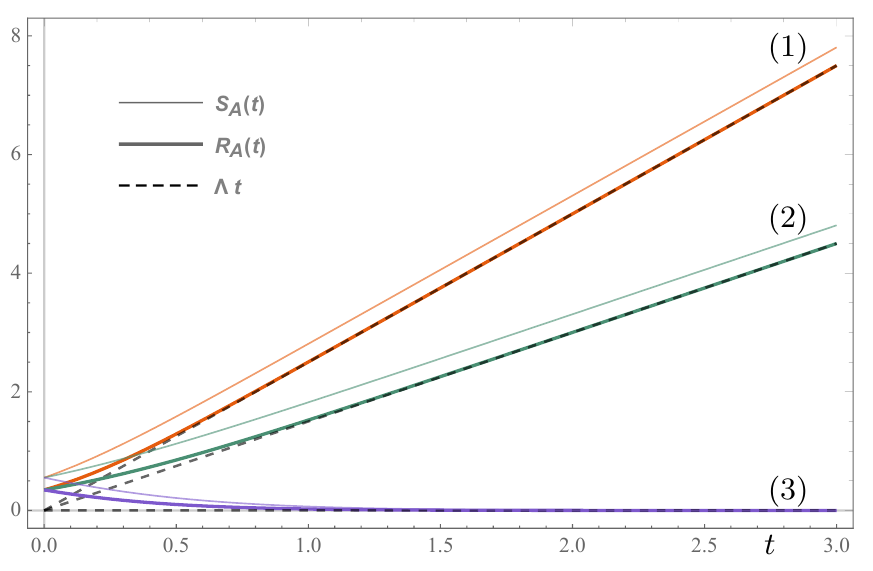}}
	\caption{\emph{Particle in a 2d inverted potential}. The plot shows the exact behavior of the R\'{e}nyi entropy $R_A(t)$ (thick) and entanglement entropy $S_A(t)$ (thin) in comparison to the predicted asymptotics $\Lambda t$ (dashed). The system is defined in (\ref{eq:2d-inverted}) with subsystems specified in (\ref{eq:2d-inverted-subsystem1}--\ref{eq:2d-inverted-subsystem3}). For the computation, we choose $\lambda_1=-\lambda_4=2$ and $\lambda_2=-\lambda_3=1/2$. The initial state is chosen to be $|J_0\rangle$ with associated metric $G_0(\ell^i,\ell^j)=\delta^{ij}$. In the case of examples (1) and (2), we have $S_A(t)-R_A(t)\to c=0.31$, while for example (3), we have $S_A(t)\to R_A(t)\to 0$ for large $t$.}
	\label{fig:Entanglement-Examples}
\end{figure}

With these definitions, let us consider the three different choices of subsystem $A$ discussed also in section~\ref{sec2:LambdaA-Lyapunov}:
\begin{align}
	\label{eq:2d-inverted-subsystem1}
	(1)\,\,&\left\{
	\begin{array}{cl}
		\phi&=\,Q_1\\[4pt]
		\pi&=\,Q_2+P_1
	\end{array}
	\right.&\Rightarrow\quad T=\left(\begin{array}{cccc}
		1 & 0 & 0 & 0\\
		0 & 1 & 0 & 1
	\end{array}\right)\quad\Rightarrow\quad
	\Lambda_A=\lambda_1+\lambda_2\geq 0\,,\\[1em]
	(2)\,\,&\left\{
	\begin{array}{cl}
		\phi&=\,Q_1+Q_2\\[4pt]
		\pi&=\,P_2
	\end{array}
	\right.&\Rightarrow\quad T=\left(\begin{array}{cccc}
		1 & 1 & 0 & 0\\
		0 & 0 & 1 & 0
	\end{array}\right)\quad\Rightarrow\quad
	\Lambda_A=\lambda_1-\lambda_2\geq 0\,,
	\label{}\\[1em]
	(3)\,\,&\left\{
	\begin{array}{cl}
		\phi&=\,Q_1+Q_2\\[4pt]
		\pi&=\,P_1
	\end{array}
	\right.&\Rightarrow\quad T=\left(\begin{array}{cccc}
		1 & 1 & 0 & 0\\
		0 & 0 & 0 & 1
	\end{array}\right)\quad\Rightarrow\quad
	\Lambda_A=\lambda_2-\lambda_2=0\,.
	\label{eq:2d-inverted-subsystem3}
\end{align}
We can study the entanglement entropy for these subsystems numerically where we start with the (entangled) initial state given by $G_0(\ell^i,\ell^j)=\delta^{ij}$. Figure~\ref{fig:Entanglement-Examples} shows excellent agreement with our predictions. In particular, we also see that the entanglement entropy $S_A(t)$ and the R\'{e}nyi entropy $R_A(t)$ only differ by the constant $1-\log(2)$ if the system is strongly entangled.

\subsection{Quadratic potential with instabilities}
The second example consists in the evolution in a time-independent potential with instabilities. Let us consider a classical system with $N$ degrees of freedom which we parametrize by $N$ conjugate pairs $(q_i,p_i)$ of coordinates in phase space. The Hamiltonian $H$ of the system consists of a standard kinetic term and a quadratic potential,
\begin{align}
	H=\sum^{N}_{i=1}\frac{1}{2}p_i^2+\sum^{N}_{i,j=1}\frac{1}{2}\,V_{ij}\,q_i\,q_j\,.
	\label{eq:potential}
\end{align}
The potential is determined by the symmetric matrix $V_{ij}$ with eigenvalues $v_i$. This classical system has  $2N$ Lyapunov exponents $\lambda_i$ determined by the eigenvalues of  $V_{ij}$ and given by $\lambda_i=\pm \mathrm{Im}(\sqrt{v_i})$.  Positive eigenvalues correspond to stable directions of the potential, lead to oscillatory motion and vanishing Lyapunov exponent. In the presence of negative eigenvalues $v_i<0$, the system is unstable, the classical motion is unbounded and nearby trajectories in phase space diverge at an exponential rate given by the $\lambda_i=+ \mathrm{Im}(\sqrt{v_i})$. Now we consider the associated quantum system prepared in a Gaussian state and study the behavior of a generic subsystem. If the potential $V_{ij}$ couples the subsystem $A$ with the rest of the system, we expect that the entanglement entropy of the subsystem changes in time. Our theorem states that the entanglement entropy of a generic subsystem $A$ with $N_A$ degrees of freedom asymptotically grows at a rate given by the sum of the $2N_A$ largest Lyapunov exponents, (\ref{eq:SA=LambdaA}). We give a concrete example: We consider a system with $N=20$ degrees of freedom and quadratic potential specified by a $N\times N$ real symmetric random matrix $V_{ij}$. Negative eigenvalues of $V_{ij}$ correspond to unstable directions of the potential and non-vanishing Lyapunov exponents with the following values:
\begin{equation}
	\begin{array}{cccccccccccccccccccc}
		\lambda_1 & \lambda_2 &\lambda_3 &\lambda_4 &\lambda_5 &\lambda_6 &\lambda_7 &\lambda_8 &\lambda_9 &\lambda_{10} & \lambda_{11} & \lambda_{12} &\lambda_{13}&\lambda_{14} &\lambda_{15} &\lambda_{16} &\lambda_{17} &\lambda_{18} &\lambda_{19} &\lambda_{20}\\
		.55 & .45 & .34 & .31 & .29 & 0 & 0 & 0 & 0 & 0 &0 &0 &0 &0 &0 &{-.29} &{-.31} &{-.34} &{-.45} &{-.55}
	\end{array}
	\label{eq:random}
\end{equation}
figure~\ref{fig:Example} shows the growth of the entanglement entropy of an initially un-entangled Gaussian state for generic subsystems of different dimensions. For a one-dimensional subsystem $N_A=1$, theorem~\ref{th:SA-generic} predicts the asymptotic growth $S_A(t)\sim (\lambda_1+\lambda_2)\, t$. Note that, as the Lyapunov exponents appear in couples $(\lambda,-\lambda)$, the asymptotic growth of the rest of the system has the same rate, $S_B(t)\sim (\lambda_1+\ldots+\lambda_{18})\, t \;=\;(\lambda_1+\lambda_2)\, t$, as expected for the entanglement entropy of a pure state. 

Clearly, the statement that the entanglement growth is linear in time applies only to generic subsystems and not to subsystems that are aligned to the shape of the potential. As an example, let us call $(Q_1,\ldots,Q_N,\,P_N,\ldots,P_1)$ the Lyapunov basis of the system and consider the subsystem spanned by the canonical couple $(Q_1,P_1)$. Theorem~\ref{th:SA} predicts a sublinear rate $S_A(t)\sim (\lambda_1+\lambda_{20})\, t=(\lambda_1-\lambda_1)\, t\,\sim\,o(t)$, which is consistent with the statement that the entropy will stay constant as the subsystem is isolated. Moreover, note that the difference of Lyapunov exponents can also appear in the asymptotic rate of the entanglement growth, for instance the subsystem spanned by $(Q_2,P_2+P_3)$ has asymptotic rate $S_A(t)\sim (\lambda_2-\lambda_3)\, t$. This is the quantum version of the example discussed in (\ref{eq:l2-l3}). It is important to remark however that these subsystems are non-generic and form a subset of measure zero as discussed in the proof of theorem~\ref{th:LambdaA-generic}.

\begin{figure}[t]
	\centering
	\noindent\makebox[\linewidth]{
		\includegraphics[width=.9\linewidth]{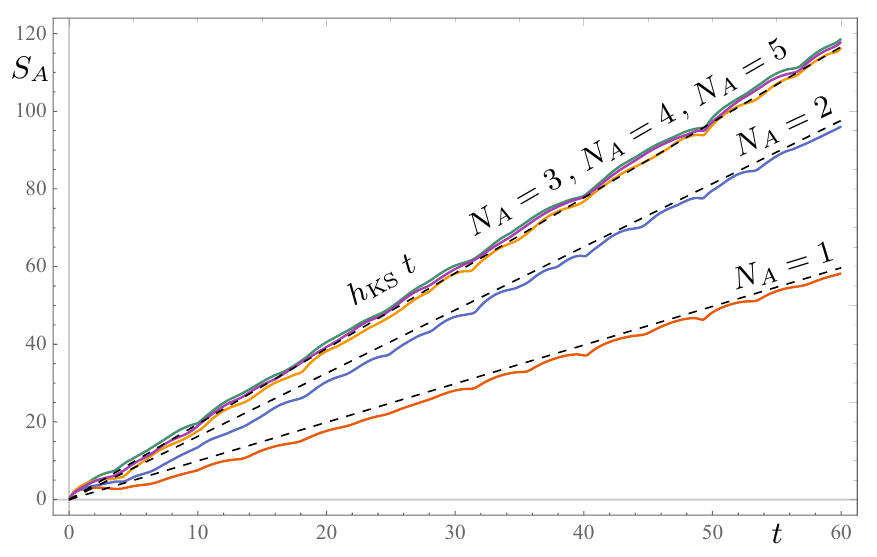}
	}
	\caption{\emph{Quadratic potential with instabilities}. This plot compares the behavior of the entanglement entropy with the asymptotic prediction of our theorem. The system consists of $N=10$ degrees of freedom. The time evolution is determined by a Hamiltonian with random quadratic potential $V$. The $2N=20$ Lyapunov exponents are given by $(.55, .45, .34, .31, .29, 0, \ldots, 0,{-.29},{-.31},{-.34},{-.45},{-.55})$. We plot five subsystems with $1\leq N_A\leq 5$. The entanglement entropy (colored lines) agrees with the predicted asymptotics (dashed lines). Note that for $N_A\geq 3$, the asymptotic behavior is the same due to the stable Lyapunov exponents $\lambda_i=0$ for $6\leq i\leq 15$. These are exactly the cases for which we have $\Lambda_A=h_{\mathrm{KS}}$, which means that entropy production rate coincides with the classical Kolmogorov-Sinai entropy rate.}
	\label{fig:Example}
\end{figure}

The random quadratic potential $V_{ij}$ with Lyapunov exponents specified in (\ref{eq:random}) has $N_I=5$ unstable directions and classical Kolmogorov-Sinai rate $h_\mathrm{KS}=\lambda_1+\ldots+\lambda_5\simeq 1.94$. Theorem~\ref{th:SA-generic} states that at the quantum level the long-time behavior of the entanglement entropy is linear with rate $h_\mathrm{KS}$ for all generic subsystem decompositions such that $2N_A\geq N_I$ and $2N_B\geq N_I$, i.e., $S_A(t)\sim h_\mathrm{KS}\,t$ for generic subsystems of dimension $3\leq N_A\leq 18$.  Figure~\ref{fig:Example} shows the numerical evolution of the entanglement entropy and the cases $N_A=3$, $N_A=4$, $N_A=5$ exhibit a linear growth with rate given by the Kolmogorov-Sinai rate as predicted.

A remarkable feature of the predictions stated in theorems~\ref{th:SA} and~\ref{th:SA-generic} is that the asymptotic growth of the entanglement entropy is determined by the subsystem and completely independent from the choice of initial Gaussian state. This feature is a consequence at the quantum level of the fact that the Lyapunov exponents of a classical system are independent of the choice of metric used to measure the distance between trajectories. In the language of complex structures $J_0$ that specify the initial Gaussian state $|J_0,\zeta_0\rangle$, the Lyapunov exponents are given by the eigenvalues of the matrix $L=\lim_{t\to\infty}\frac{1}{2t}\log\big( M^{-1}(t)\,J_0\,M(t)\big)$ defined in (\ref{eq:Lmatrix}) and are independent of $J_0$.
Figure~\ref{fig:Example} shows only initial states with vanishing entanglement entropy, but the two theorems~\ref{th:SA} and~\ref{th:SA-generic} apply to all Gaussian states, even to ones that have large initial entanglement entropy. Clearly the theorem applies only to the asymptotic behavior of the entanglement entropy. In fact we could take as initial state the time-reversal of the Gaussian state used in  figure~\ref{fig:Example} at late times. In this case the entanglement entropy would initially decrease, reach a minimum and eventually start growing linearly as predicted by the theorems on the asymptotic growth.

\subsection{Periodic quantum quenches in a harmonic lattice}\label{sec:periodic-quenches}
As a third example of a system that displays a linear growth of the entanglement entropy, we discuss the case of a harmonic lattice subject to periodic quantum quenches. The Hamiltonian of the system is 
\begin{equation}
	H(t)=\sum^{N}_{i=1}\frac{1}{2}\Big(p_i^2+\,\Omega^2(t)\;q_i^2 +\,\kappa \,(q_{i+1}-q_i)^2\Big)\,,
	\label{eq:Hperiodic}
\end{equation}
which describes the dynamics of a one-dimensional chain of $N$ bosons with nearest-neighbor coupling $\kappa$ and boundary conditions $q_{N+1}=q_1$, $p_{N+1}=p_1$. The one-particle oscillation frequency $\Omega(t)$ is periodically switched between the values $\Omega_0\pm\varepsilon$ with period $2\,T_0$,
\begin{align}
	&\Omega(t)\;=\;\left\{
	\begin{array}{cl}
		\Omega_0-\varepsilon&\qquad \textrm{for}\quad \;\;0\leq t < T_0\\[4pt]
		\Omega_0+\varepsilon&\qquad \textrm{for}\quad T_0\leq t < 2\,T_0
	\end{array}
	\right.\\[4pt]
	&\Omega(t+2\,T_0)=\Omega(t)\qquad \textrm{and}\qquad \varepsilon\ll \Omega_0\,.
	\label{}
\end{align}
The system is prepared in the ground state of the instantaneous Hamiltonian $H(0)$ at the time $t=0$ and then let evolve unitarily. The state of the system at stroboscopic times $t_n=2\,n\,T_0$ which are multiples if the period $2\,T_0$ can be obtained by computing the Floquet Hamiltonian of the system, $H_F$:
\begin{equation}
	U(2\,n\,T_0)=\Big(U(2\,T_0)\Big)^n=e^{-\mathrm{i}\, 2\,n\,T_0\, H_F}\quad \textrm{with}\quad H_F\equiv \frac{\mathrm{i}}{2\,T_0}\log\big(e^{-\mathrm{i}\,H_2 T_0}\,e^{-\mathrm{i}\,H_1 T_0}\big)\,,
	%\frac{1}{2\,T_0}\log\mathcal{T}\!\exp\left(-\mathrm{i}\int_0^{2\,T_0}H(t')dt'\right).
	\label{}
\end{equation}
where $H_1=H(T_0)$ and $H_2=H(2\,T_0)$. We note that the Floquet Hamiltonian $H_F$ is quadratic and time-independent, but is not of the standard form consisting of the sum of a kinetic and a potential term as in the case of (\ref{eq:potential}). This general form is taken into account in theorems~\ref{th:SA} and~\ref{th:SA-generic} which apply to all quadratic Hamiltonians.

\begin{figure}[t]
	\centering
	\noindent\makebox[\linewidth]{
		\includegraphics[width=.9\linewidth]{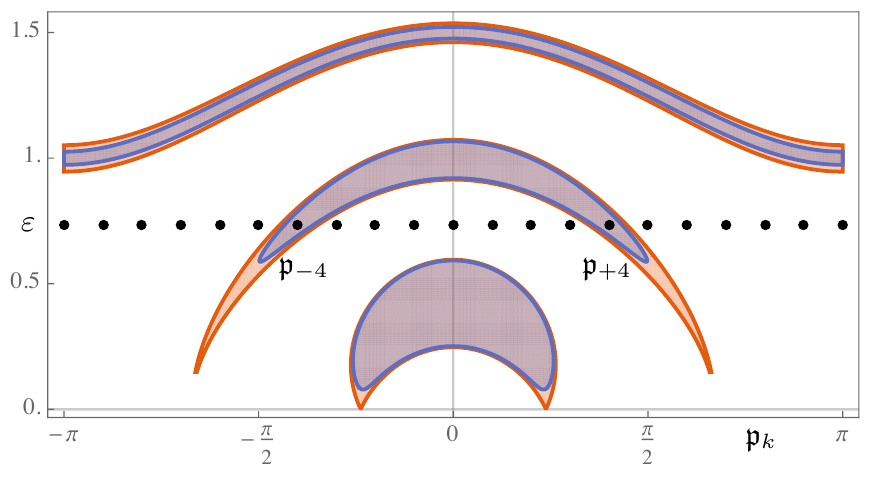}
	}
	\caption{\emph{Instability bands in a periodic quantum quench}. We sketch the instability region of Floquet exponents $\mu$ with positive real part (red: $\mathrm{Re}(\mu)>0.03$, blue: $\mathrm{Re}(\mu)>0.05$) for the system described in (\ref{eq:Hperiodic-k}) as a function of $\varepsilon$ and $\mathfrak{p}_k$. We chose the values $T_0\simeq \pi$, $\Omega_0\simeq 0.3$ and $\kappa\simeq 0.3$. Furthermore, we indicate the discrete momenta $\mathfrak{p}_k$ for a periodic chain with $N\simeq20$ and $\varepsilon\simeq 0.735$, such that there are two modes with unstable Floquet exponents, namely $\mathfrak{p}_{\pm4}=\pm \frac{2\pi}{5}$.}
	\label{fig:Floquet}
\end{figure}

The classical system described by (\ref{eq:Hperiodic}) shows instabilities when small perturbations from the equilibrium configuration are amplified via the mechanism of parametric resonance. Floquet theory \cite{floquet1883equations,chicone1999or} provides the tools for the description of the dynamics driven by an Hamiltonian which is periodic in time, as is the case for (\ref{eq:Hperiodic}). The eigenvalues of the symplectic evolution matrix (\ref{eq:Mh}) evaluated at a period $M(2\,T_0)$ come in quadruplets $(e^{+2\,T_0\,\mu},e^{-2\,T_0\,\mu},e^{+2\,T_0\,\mu^*},e^{-2\,T_0\,\mu^*})$ where the complex numbers $\mu$ are the Floquet exponents of the system. The stability of the system is measured by the real part of the Floquet exponents which coincide with the Lyapunov exponents, $\lambda=\mathrm{Re}(\mu)$.

\begin{figure}[t]
	\centering
	\noindent\makebox[\linewidth]{
		\includegraphics[width=.9\linewidth]{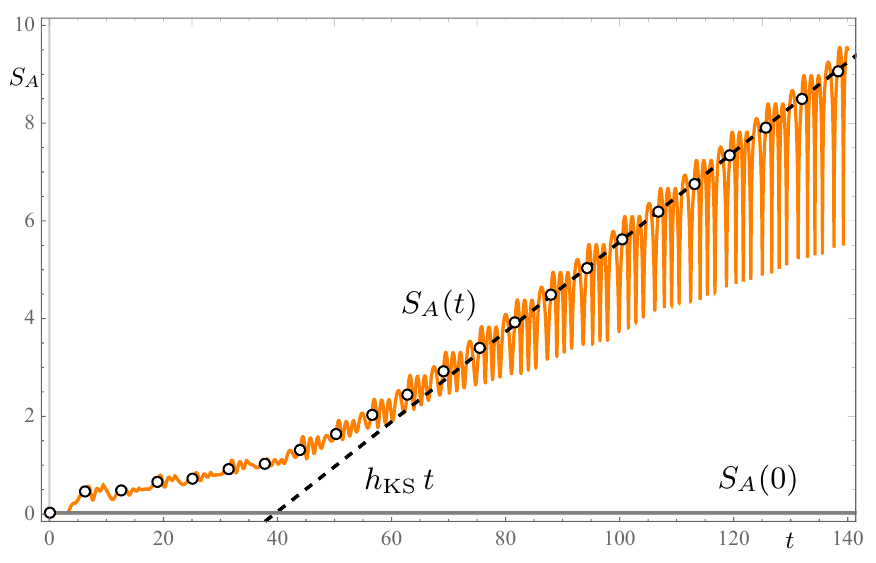}
	}
	\caption{\emph{Periodic quantum quenches in a harmonic lattice}. We show the entanglement entropy $S_A(t)$ as a function of time for the subsystem spanned by $(q_1,p_1)$. The stroboscopic entanglement entropy $S_A(n2T_0)$ is indicated by white dots. The asymptotic prediction of the Kolmogorov-Sinai production rate $S_A(t)\sim h_{\mathrm{KS}}\,t$ with $h_{\mathrm{KS}}=0.092$ is shown as a dashed line where we adjusted the offset for easy comparison of the slope. Note that the entanglement entropy stays constant in the interval $[0,T_0]$ as expected from the fact that the system is prepared in the ground state of the initial Hamiltonian $H(0)$.}
	\label{fig:Floquet-production}
\end{figure}

The Lyapunov exponents of the system (\ref{eq:Hperiodic}) can be easily determined. In Fourier transformed variables\footnote{The Fourier transformed canonical variables are defined as $Q_k=\frac{1}{\sqrt{N}}\sum_l q_l\, e^{\mathrm{i}\frac{2\pi k}{N}l}$, $P_k=\frac{1}{\sqrt{N}}\sum_l p_l \,e^{\mathrm{i}\frac{2\pi k}{N}l}$, so that $[Q_k,P_{-k'}]=\mathrm{i}\,\delta_{k,k'}$} $Q_k$, $P_k$ with $k=0,\pm1,\pm2,\ldots,\pm (N-1)/2$, the Hamiltonian takes the form
\begin{equation}
	H(t)=\sum_k\frac{1}{2}\Big(|P_k|^2+\,\omega_k^2(t)\;|Q_k|^2\Big)\,,
	\label{eq:Hperiodic-k}
\end{equation}
with 
\begin{equation}
	\omega_k(t)\equiv\,\sqrt{\Omega^2(t)+4\kappa\sin^2(\mathfrak{p}_k/2)}\,\qquad\textrm{and}\qquad \mathfrak{p}_k\equiv\frac{2\pi k}{N}\,.
	\label{eq:modek}
\end{equation}
In particular, the speed of sound of the mode of momentum $\mathfrak{p}_k$ switches periodically between the values $v_k(T_0)$ and $v_k(2\,T_0)$, with $v_k(t)\equiv\partial\omega_k(t)/\partial \mathfrak{p}_k$. As Fourier modes with different $|k|$ are decoupled, we can analyze the stability of the system mode by mode. The classical evolution of the coupled modes $(Q_{k},Q_{-k},P_{-k},P_{k})$ is given by
\begin{equation}
	\left(
	\begin{array}{l}
		Q_{k}(t)\\[4pt]
		P_{k}(t)
	\end{array}
	\right)\;=
	M_k(t)\;
	\left(
	\begin{array}{l}
		Q_{k}(0)\\[4pt]
		P_{k}(0)
	\end{array}
	\right)
	\label{}
\end{equation}
with
\begin{equation}
	M_k(2\,T_0)\;=\;
	\left(
	\begin{array}{cc}
		\cos (\omega_2 T_0) \;& \;-\omega_2 \sin (\omega_2 T_0) \\[4pt]
		\frac{1}{\omega_2} \sin (\omega_2 T_0)\;& \;\cos (\omega_2 T_0)
	\end{array}
	\right)
	\left(
	\begin{array}{cc}
		\cos (\omega_1 T_0) \;& \;-\omega_1 \sin (\omega_1 T_0) \\[4pt]
		\frac{1}{\omega_1} \sin (\omega_1 T_0)\;& \;\cos (\omega_1 T_0)
	\end{array}
	\right)
	\label{}
\end{equation}
given by a symplectic block of the symplectic matrix $M(2\,T_0)$ defined in (\ref{eq:Mh}), and $\omega_1=\omega(T_0)$ and $\omega_2=\omega(2\,T_0)$. The Lyapunov exponents of the system are the real parts of the Floquet exponents, i.e.
\begin{equation}
	\pm\lambda_k=\mathrm{Re}\left(\frac{1}{2T_0}\log \mathrm{Eig\big[M_k(2\,T_0)\big]}\right)\,.
	\label{}
\end{equation}
Analytic expressions of $\lambda_k$ can be found assuming that the periodic perturbation is small $\varepsilon\ll\Omega_0$ and the mode is at or near a parametric resonance. Defining $\delta \omega\equiv \omega_2-\omega_1$, $\,\omega_0\equiv\frac{\omega_1+\omega_2}{2}$, with $\delta\omega\ll \omega_0$, we find that the system is in parametric resonance when the average frequency $\omega_0$ of the mode is an half-integer multiple of the frequency of the perturbation, i.e.,
\begin{equation}
	\omega_0=\frac{n\,\pi}{2T_0}\,.
	\label{}
\end{equation}
At the parametric resonance, the positive Lyapunov exponents of the system are given by
\begin{equation}
	\lambda_k=
	\left\{
	\begin{array}{ll}
		+\frac{\delta\omega}{n\,\pi}&\qquad \textrm{if}\;\;n\;\;\textrm{odd},\\[1em]
		+\frac{\,T_0\,(\delta\omega)^2}{4\,n\,\pi}&\qquad \textrm{if}\;\;n\;\;\textrm{even}.
	\end{array}
	\right.
	\label{}
\end{equation}
For a finite perturbation, the stability of the system can be determined numerically. Figure~\ref{fig:Floquet} shows which modes $\mathfrak{p}_k$ are unstable for a given finite value of the perturbation parameter $\varepsilon$. In the example we have $N=20$, $\Omega_0\simeq 0.3$, $\kappa\simeq 0.3$ and $T_0\simeq \pi$. For $\varepsilon\simeq 0.735$ we have two unstable modes with $k=\pm 4$ and Lyapunov exponents $\lambda_{+4}\simeq \pm 0.046$, $\lambda_{-4}\simeq \pm 0.046$. The Kolmogorov-Sinai rate of the system is $h_{\mathrm{KS}}=0.092$. Figure~\ref{fig:Floquet-production} shows the growth of the entanglement entropy of a subsystem and the relation to $h_{\mathrm{KS}}$.

\section{Applications: quantum field theory}\label{sec:qft}
In section~\ref{sec:growth} we presented our main results for a bosonic quantum system with $N$ degrees of freedom. These results can be extended with minor modifications to the case of a bosonic quantum field. In particular, the formulation of theorems~\ref{th:SA} and~\ref{th:SA-generic} in terms of complex structures $J$ is motivated by and tailored to applications to quantum field theory in curved spacetimes \cite{ashtekar1980geometrical,ashtekar1975quantum,wald1994quantum}.

\subsection{Definition of a subsystem and the algebraic approach}
The presence of infinitely many degrees of freedom in quantum field theory has two immediate consequences which are relevant for our analysis \cite{haag2012local}:
\begin{itemize}
	\item[i)] the existence of unitarily inequivalent representations of the algebra of observables,
	\item[ii)] the lack of a factorization of the Hilbert space into a tensor product over local factors.
\end{itemize}
The algebraic approach to quantum field theory\,---\,together with the language of complex structures\,---\,provides a natural setting for discussing both aspects and formulating the analysis of the growth of the entanglement entropy of a subsystem in quantum field theory.

At the classical level, the phase space $V$ of a free scalar field has coordinates $\xi^a=(\varphi(\vec{x}),\pi(\vec{x}))$ with $\vec{x}$ a point on a Cauchy slice $\Sigma$. We adopt abstract indices and use the symbol $\omega_{ab}$ for the symplectic form on the infinite-dimensional vector space $V$. Carrying out the rigorous construction of the infinite-dimensional phase space requires the choice of a positive definite metric $g_{ab}$ compatible with the symplectic form $\omega_{ab}$, such that $V$ arises as the completion with respect to this metric. Contracting $\omega_{ab}$ with the inverse metric $G^{ab}$ gives rise to the complex structure $J^a{}_b=-G^{ac}\omega_{cb}: V\to V$. Given a reference complex structure $J_0$ and a symplectic transformation $M$, we can define a transformed complex structure $J_M=M^{-1}J_0 M$. The transformation $M$ is said to belong to the \emph{restricted} symplectic group if the commutator $A=[J_0,J_M]$ is a Hilbert-Schmidt operator, i.e. $\mathrm{tr}( A^\dagger A)<+\infty$ \cite{shale1962linear,shale1964states,ottesen2008infinite}.

At the quantum level, the choice of a complex structure $J_0$ defines a Gaussian state $|J_0\rangle$ which can be used as vacuum for building a Fock representation of the algebra of observables \cite{ashtekar1980geometrical,ashtekar1975quantum,wald1994quantum}. Representations built over Fock vacua $|J_0\rangle$ and $|J_M\rangle$ are unitarily equivalent if and only if the symplectic transformation $M$ belongs to the restricted symplectic group described above. When interpreted in terms of particle excitations, the state $|J_M\rangle$ describes a superposition of particle pairs over the vacuum $|J_0\rangle$. A  symplectic transformation $M$ which does not belong to the restricted group corresponds to a Bogoliubov transformation that produces an infinite number of particles \cite{berezin:1966}. 

Gaussian states and quadratic time-dependent Hamiltonians appear in the description of particle production in the early universe \cite{Birrell:1982ix,parker2009quantum}, Hawking radiation in black hole evaporation \cite{Hawking:1974sw}, in the Schwinger effect \cite{schwinger1951gauge,Greiner:1985ce}, in the dynamical Casimir effect \cite{casimir1948attraction,moore1970quantum} and more generally in all cases where the free quantum field evolves in a time-dependent background. It is known that, for some time-dependent backgrounds, the time-evolution\,---\,which, at the classical level, is encoded in a symplectic transformation $M$\,---\,cannot be implemented as a unitary operator in a Fock space at the quantum level \cite{Torre:1998eq}. Nevertheless the correlation functions in the quantum theory are still well-defined in terms of a complex structure $J_0$ and a symplectic transformation $M$ as described in (\ref{eq:2point}) \cite{Agullo:2015qqa}. The algebraic approach  focuses on correlation functions and does not involve the construction of a Fock space. It provides sufficient structure for defining the (abstract) state of the system and computing the evolution of the entanglement entropy of a subsystem, despite the potential lack of a standard unitary implementation of the time evolution in a Fock space, (i).\\

The second aspect which needs some clarification regards the definition of a subsystem in quantum field theory, (ii). It is a well-known fact about the ground state of a quantum field that the entanglement entropy of a region of space is divergent and\,---\,when an ultraviolet cutoff is introduced\,---\,it scales as the area of the boundary of the region \cite{sorkin1983entropy,srednicki1993entropy,eisert2010colloquium}. The divergence of the geometric entanglement entropy has an algebraic origin: The local subalgebra of observables associated to a region in space is of type III, i.e. it does not identify a factorization of the Fock space in a tensor product of Hilbert spaces \cite{haag2012local,Hollands:2017dov}. Three standard strategies to address this issue are: (a) a modification the ultraviolet behavior of the theory, for instance introducing a lattice cut-off \cite{sorkin1983entropy,bombelli1986quantum,srednicki1993entropy}, or (b) computing the mutual information between a region and a carved version of its complement so to introduce a ``safety corridor'' \cite{Casini:2008wt,Bianchi:2014bma,Hollands:2017dov}, or (c) focusing on the excess entropy of a state with respect to the one of the ground state \cite{holzhey1994geometric,Bianchi:2014bma}. Here we illustrate a different strategy: We focus on the entanglement entropy of a subsystem with a finite number $N_A$ of degrees of freedom. The geometric entanglement entropy which captures infinitely many degrees of freedom can be recovered in the limit of increasingly large subsystems \cite{Bianchi2017entropy}.\\

A simple example of a subsystem with a single degree of freedom, $N_A=1$, is provided by a linear smearing of the fields against given test functions $f(\vec{x})$ and $g(\vec{x})$:
\begin{equation}
	\hat{\varphi}_f=\int f(\vec{x})\, \hat{\varphi}(\vec{x})\,d^3\vec{x}\,,\quad \hat{\pi}_g=\int g(\vec{x})\,\hat{\pi}(\vec{x})\, d^3\vec{x}\,.
	\label{}
\end{equation}
The observables $\hat{\varphi}_f$ and $\hat{\pi}_f$ generate a Weyl algebra $\mathcal{A}_A$ of type I which, as in section~\ref{sec:entropy}, induces a factorization of the Hilbert space into $\mathcal{H}=\mathcal{H}_A\otimes \mathcal{H}_B$. The symplectic structure $\Omega_A$ of the subsystem can be computed from the commutator,
\begin{equation}
	[\hat{\varphi}_f,\hat{\pi}_g]=\mathrm{i}\int f(\vec{x})\,g(\vec{x})\,d^3\vec{x}\,.
	\label{}
\end{equation} 
Given a Gaussian state $|J\rangle$ of the quantum field, the symmetrized correlation function restricted to the subsystem $A$ is
\begin{equation}
	[G]_A=
	\left(
	\begin{array}{cc}
		2\;\langle J|\hat{\varphi}_f \,\hat{\varphi}_f|J\rangle &
		\langle J|\hat{\varphi}_f \,\hat{\pi}_g+\hat{\pi}_g \,\hat{\varphi}_f|J\rangle\\[.5em]
		\langle J|\hat{\varphi}_f \,\hat{\pi}_g+\hat{\pi}_g \,\hat{\varphi}_f|J\rangle & 
		2\; \langle J|\hat{\pi}_g \,\hat{\pi}_g|J\rangle
	\end{array}
	\right)\,.
	\label{}
\end{equation}
The restricted complex structure $[J]_A$ is given by ${[J]_A}^a{}_b=-[G]^{ac}_A \,(\Omega^{-1}_A)_{cb}$ and its eigenvalues $\pm\nu$ determine the entanglement entropy of the subsystem $A$ through (\ref{eq:sentropy}).

We illustrate our result on three paradigmatic cases in quantum field theory where the time-dependence of the entanglement entropy of a subsystem can be computed and our results on the linear growth can be tested.

\subsection{Dynamics of symmetry breaking and the inverted quadratic potential}
We consider a scalar field $\varphi(x)$ which goes through a symmetry breaking transition in real time \cite{weinberg1995quantum,Strocchi:2015uaa}. A simple model is described by the action\footnote{We adopt the notation $x=(t,\vec{x})$ for a spacetime point and use the signature $(-+++)$.}
\begin{equation}
	S[\varphi]=\int\textstyle \Big(-\frac{1}{2}\partial_\mu \varphi\,\partial^\mu\varphi-V(\varphi)\Big)\,d^4x
	\label{}
\end{equation}
with a quartic potential,
\begin{equation}
	\textstyle V(\varphi)=\frac{1}{2}\alpha(t)\,\varphi^2+\frac{1}{4!}\,\varepsilon\, \varphi^4 \,.
	\label{eq:Vquartic}
\end{equation}
The quadratic coupling $\alpha(t)$ is chosen so that, for $t>0$, a minimum of the potential breaks the symmetry $\varphi\to -\varphi$. We set
\begin{equation}
	\alpha(t)=
	\left\{
	\begin{array}{ll}
		+m^2, \qquad&t\leq 0\\[.5em]
		-\mu^2,& t>0
	\end{array}
	\right.
	\qquad \textrm{and}\qquad 0<\varepsilon \ll 1\,.
	\label{eq:Valpha}
\end{equation}
The system is initially prepared in the ground state at $t<0$ and then let evolve. For small quartic coupling $\varepsilon$ and short time, the evolution is described perturbatively by a tachyonic instability: At the onset of the symmetry-breaking transition,  the scalar field evolves as if it was free and had a negative mass-squared, $-\mu^2$. We focus on this initial phase.\\

We set $\varepsilon=0$ and study the free evolution governed by a quadratic Hamiltonian which transitions from a stable phase to an unstable phase. It is useful to adopt Fourier transformed canonical variables
\begin{equation}
	\varphi(\vec{k})=\int d^3\vec{x}\, \varphi(\vec{x})\,e^{\mathrm{i}\vec{k}\cdot\vec{x}}\quad,\qquad 
	\pi(\vec{k})=\int d^3\vec{x}\, \pi(\vec{x})\,e^{\mathrm{i}\vec{k}\cdot\vec{x}}
	\label{}
\end{equation}
so that the canonical commutation relations read $[\varphi(\vec{k}),\pi(\vec{k}')]=i (2\pi)^3\delta^3(\vec{k}+\vec{k}')$ and the symplectic structure in these coordinates is
\begin{equation}
	\Omega(\vec{k},\vec{k}')=
	\left(
	\begin{array}{cc}
		0 &
		+1\\
		-1& 
		0
	\end{array}
	\right)\;(2\pi)^3\,\delta^{3}(\vec{k}+\vec{k}')\,.
	\label{}
\end{equation}
For $t<0$, the Hamiltonian is 
\begin{equation}
	H=\int\frac{d^3\vec{k}}{(2\pi)^3}\,\frac{1}{2}\Big(|\pi(\vec{k})|^2+(\vec{k}^2+m^2)\,|\varphi(\vec{k})|^2\Big)\,
	\label{eq:Hstab}
\end{equation}
and the system is stable. The ground state is the Gaussian state $|J_0\rangle$ with correlation functions
\begin{align}
	G_0(\vec{k},\vec{k}')&=
	\left(
	\begin{array}{cc}
		2\;\langle J_0|\varphi(\vec{k})\varphi(\vec{k}')|J_0\rangle &
		\langle J_0|\varphi(\vec{k})\pi(\vec{k}')+\pi(\vec{k}')\varphi(\vec{k})|J_0\rangle\\[.5em]
		\langle J_0|\varphi(\vec{k})\pi(\vec{k}')+\pi(\vec{k}')\varphi(\vec{k})|J_0\rangle & 
		2\; \langle J_0|\pi(\vec{k})\pi(\vec{k}')|J_0\rangle
	\end{array}
	\right)\\[1em]
	&=
	\left(
	\begin{array}{cc}
		\frac{1}{\sqrt{\vec{k}^2+m^2}} &
		0\\[.5em]
		0& 
		\sqrt{\vec{k}^2+m^2}
	\end{array}
	\right)\;(2\pi)^3\,\delta^{3}(\vec{k}+\vec{k}')\,.
	\label{}
\end{align}
The complex structure of the ground state is therefore $J_0=-G_0\,\Omega^{-1}$, i.e.,
\begin{equation}
	J_0(\vec{k},\vec{k}')=
	\left(
	\begin{array}{cc}
		0 & \frac{1}{\sqrt{\vec{k}^2+m^2}}\\[.5em]
		-\sqrt{\vec{k}^2+m^2}& 0
	\end{array}
	\right)\;(2\pi)^3\,\delta^{3}(\vec{k}-\vec{k}')\,.
	\label{}
\end{equation} 

For $t>0$, that is, after the transition from the stable to the unstable phase, the Hamiltonian governing the free evolution is given by
\begin{equation}
	H=\int\frac{d^3\vec{k}}{(2\pi)^3}\,\frac{1}{2}\Big(|\pi(\vec{k})|^2+(\vec{k}^2-\mu^2)\,|\varphi(\vec{k})|^2\Big)\,.
	\label{eq:Hinst}
\end{equation}
Modes with $\vec{k}^2$ smaller than $\mu^2$ are unstable and have Lyapunov exponents which come in pairs $\pm\lambda(\vec{k})$, with
\begin{equation}
	\lambda(\vec{k})=\sqrt{\mu^2-\vec{k}^2}\qquad\textrm{for}\qquad 0\leq\vec{k}^2<\mu^2\,.
	\label{}
\end{equation}
As a result, infrared modes are unstable and the largest Lyapunov exponent $\lambda(0)=\mu$ is associated to the homogeneous mode. 

The classical evolution generated by the unstable Hamiltonian (\ref{eq:Hinst}) is given by the symplectic transformation $M_t(\vec{k},\vec{k}')=M_t(\vec{k})\,(2\pi)^3\,\delta^{3}(\vec{k}+\vec{k}')$, with
\begin{equation}
	M_t(\vec{k})=
	\left(
	\begin{array}{cc}
		\cosh\Big(\sqrt{\mu^2-\vec{k}^2}\;t\Big) & \sqrt{\mu^2-\vec{k}^2}\;\sinh\Big(\sqrt{\mu^2-\vec{k}^2}\;t\Big)\quad \\[1em]
		\frac{1}{\sqrt{\mu^2-\vec{k}^2}}\;\sinh\Big(\sqrt{\mu^2-\vec{k}^2}\;t\Big)& 
		\cosh\Big(\sqrt{\mu^2-\vec{k}^2}\;t\Big)\quad
	\end{array}
	\right)\,.
	\label{}
\end{equation}
As a result, in the quantum theory, the correlation functions at the time $t$ for the system initially prepared in the Gaussian state $|J_0\rangle$ are given by the evolved complex structure $J_t(\vec{k},\vec{k}')$,
\begin{equation}
	J_t(\vec{k},\vec{k}')=
	M_t(\vec{k}){}^{-1}\,\left(
	\begin{array}{cc}
		0 & \frac{1}{\sqrt{\vec{k}^2+m^2}}\\[.5em]
		-\sqrt{\vec{k}^2+m^2}& 0
	\end{array}
	\right)\;
	M_t(\vec{k})\;
	\;(2\pi)^3\,\delta^{3}(\vec{k}-\vec{k}')\,,
	\label{}
\end{equation}
which defines a Gaussian state $|J_t\rangle$ at the time $t$.\\

Let us consider a measuring device which probes the field and its momentum only in a neighborhood of the point $\vec{x}=0$ with a linear size $R$. This device can be modeled by a Gaussian smearing function $f(\vec{x})$. The subsystem $A$ defined by such measurements is encoded in the subalgebra of observables $\mathcal{A}_A$  generated by
\begin{equation}
	\hat{\varphi}_A=\int \hat{\varphi}(\vec{x})\, f(\vec{x})\,d^3\vec{x}\,,\quad \hat{\pi}_A=\int \hat{\pi}(\vec{x})\, f(\vec{x})\,d^3\vec{x}\qquad\textrm{with}\qquad 
	\textstyle
	f(\vec{x})=\frac{1}{(\sqrt{2\pi}\, R)^3}e^{-\frac{|\vec{x}|^2}{2R^2}}\,.
	\label{eq:subalgebra-phiA}
\end{equation}
This subsystem has $N_A=1$ bosonic degrees of freedom. Fluctuations of the observables $\hat{\varphi}_A$ and $\hat{\pi}_A$ at the time $t$ are encoded in the correlation functions of the subsystem
\begin{equation}
	[G_t]_A=
	\left(
	\begin{array}{cc}
		2\;\langle J_t|\hat{\varphi}_A \,\hat{\varphi}_A|J_t\rangle &
		\langle J_t|\hat{\varphi}_A \,\hat{\pi}_A+\hat{\pi}_A \,\hat{\varphi}_A|J_t\rangle\\[.5em]
		\langle J_t|\hat{\varphi}_A \,\hat{\pi}_A+\hat{\pi}_A \,\hat{\varphi}_A|J_t\rangle & 
		2\; \langle J_t|\hat{\pi}_A \,\hat{\pi}_A|J_t\rangle
	\end{array}
	\right)
	=\int G_t(\vec{k},\vec{k}')f(\vec{k})f(\vec{k}')\frac{d^3\vec{k}}{(2\pi)^3}\frac{d^3\vec{k}'}{(2\pi)^3}\,,
	\label{}
\end{equation}
where $f(\vec{k})=e^{-\frac{1}{2}R^2|\vec{k}|^2}$ is the Fourier transform of $f(\vec{x})$. Similarly, the restricted complex structure is the $2\times 2$ matrix
\begin{equation}
	[J_t]_A
	=\int J_t(\vec{k},\vec{k}')f(\vec{k})f(\vec{k}')\frac{d^3\vec{k}}{(2\pi)^3}\frac{d^3\vec{k}'}{(2\pi)^3}\,.
	\label{}
\end{equation}
The eigenvalues of $\mathrm{i}\,[J_t]_A$ come in pairs $\pm\nu(t)$ and the entanglement entropy of the subsystem $A$ is given by
\begin{equation}
	S_A(t)=S(\nu(t))
	\label{}
\end{equation}
where $S(\nu)$ is the function (\ref{eq:sentropy}). The predicted asymptotic rate of growth of the entanglement entropy of a subsystem with $N_A=1$ is given by the subsystem exponent $\Lambda_A=2\mu$, which is the sum of the two largest Lyapunov exponents, i.e.,
\begin{equation}
	S_A(t)\sim\,2\mu \,t\,.
	\label{}
\end{equation}
A numerical plot of the entanglement entropy as a function of time, together with the predicted rate of growth, is shown in figure~\ref{fig:G-Inverted-mass}.\\

\begin{figure}[t]
	\centering
	\noindent\makebox[\linewidth]{
		\includegraphics[width=.9\linewidth]{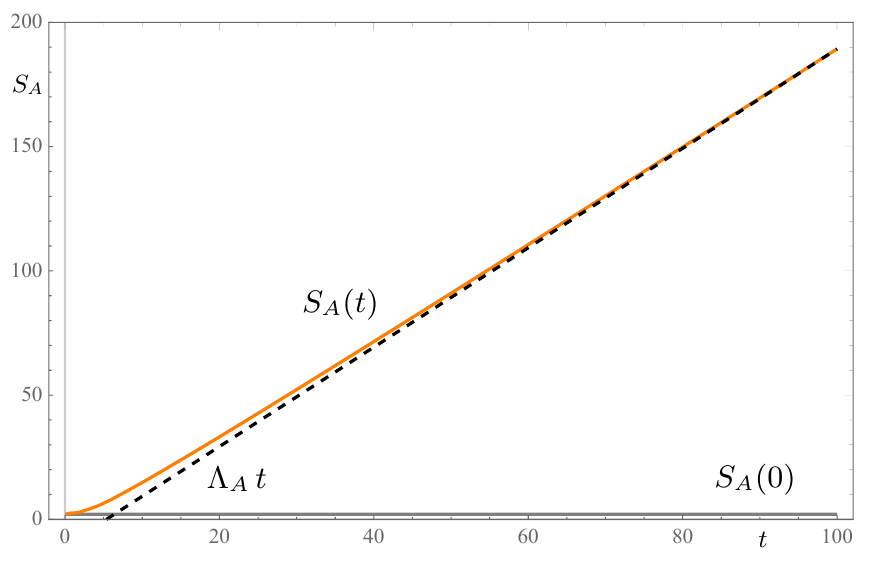}
	}
	\caption{\emph{Symmetry breaking and the inverted quadratic potential}. We compute the entanglement entropy $S_A(t)$ numerically for the time evolution with the unstable Hamiltonian (\ref{eq:Hinst}). We set $\mu=1$. The subsystem $A$ is defined in (\ref{eq:subalgebra-phiA}) with $R=1$.}
	\label{fig:G-Inverted-mass}
\end{figure}

We note that, as the positive Lyapunov exponents of the system appear in a continuous band $\lambda(\vec{k})=\sqrt{\mu^2-\vec{k}^2}$, the prediction for the asymptotic growth of the entanglement entropy of a subsystem with $N_A$ degrees of freedom is simply $S_A(t)\sim\,2N_A\,\mu \,t$. In the case of a subsystem with infinitely many degrees of freedom, it is useful to induce an infrared cutoff, for instance a cubic volume $V=L^3$. The boundary conditions induce a quantization of the momentum $\vec{k}=\big(\frac{2\pi}{L}n_x,\frac{2\pi}{L}n_y,\frac{2\pi}{L}n_z\big)$ which splits the degeneracy of the Lyapunov exponents and results in a discrete sequence $\lambda(\vec{k})$. We can now define the number $N_I$ of unstable degrees of freedom of the system. In the limit $L\gg \frac{2\pi}{\mu}$ we find
\begin{equation}
	N_I\sim L^3\int \Theta(\lambda(\vec{k}))\frac{d^3\vec{k}}{(2\pi)^3}\,=\frac{(\mu L)^3}{6\pi^2}\,.
	\label{}
\end{equation}
A generic subsystem which probes infinitely many degrees of freedom, as in the case of the geometric entanglement entropy of a region of space, would probe all the unstable degrees of freedom of the system. As a result the asymptotic growth of the entanglement entropy is expected to be given by
\begin{equation}
	S_A(t)\sim \mathfrak{h}_{\mathrm{KS}}\;L^3\,t\,,
	\label{}
\end{equation}
where $\mathfrak{h}_{\mathrm{KS}}$ is the Kolmogorov-Sinai rate per unit volume,
\begin{equation}
	\mathfrak{h}_{\mathrm{KS}}=\int \Theta(\lambda(\vec{k}))\,\lambda(\vec{k})\,\frac{d^3\vec{k}}{(2\pi)^3}\;=\;\frac{\mu^4}{32\pi}\,.
	\label{}
\end{equation}
As a result, for a generic subsystem with infinitely many degrees of freedom, the quantity $\mathfrak{h}_{\mathrm{KS}}$ describes the asymptotic behavior of the entanglement entropy per unit space-time volume.\\

In this analysis we assumed that the quartic term $\frac{1}{4!}\,\varepsilon\, \varphi^4$ is not present in the potential and the evolution is simply described by a quadratic Hamiltonian with instabilities. In section~\ref{sec:interaction} we discuss when this approximation is expected to be valid.

\subsection{Preheating and parametric resonance}
The simplest model of parametric resonance in quantum field theory is described by the Hamiltonian
\begin{equation}
	H(t)=\int\frac{d^3\vec{k}}{(2\pi)^3}\,\frac{1}{2}\Big(|\pi(\vec{k})|^2+\big(\vec{k}^2+v_0^2\, \sin^2(M_0 \,t)\big)\,|\varphi(\vec{k})|^2\Big)
	\label{eq:Hresonance}
\end{equation}
which has a quadratic potential that oscillates in time with period $2\pi/M_0$. For small values of the amplitude of oscillation,
\begin{equation}
	v_0^2\ll M_0^2\,,
	\label{}
\end{equation}
we have a narrow resonance band around the frequency of the perturbation,
\begin{equation}
	|\vec{k}|\in\left[M_0-\frac{v_0^2}{4M_0},\,M_0+\frac{v_0^2}{4M_0}\right]\,.
	\label{eq:narrow-band}
\end{equation}
Modes with momentum $|\vec{k}|\approx M_0$ are parametrically amplified. The Lyapunov exponents of the system can be determined via Floquet analysis as we already did in section~\ref{sec:periodic-quenches}. The canonical subsystem spanned by $\big(\varphi(\vec{k}),\pi(-\vec{k})\big)$ with $\vec{k}$ in the band (\ref{eq:narrow-band}) has Lyapunov exponents $\pm\lambda(\vec{k})$ with 
\begin{equation}
	\lambda(\vec{k})\approx \sqrt{\Big(\frac{v_0^2}{4M_0}\Big)^2-\big(M_0-|\vec{k}|\big)^2}\,.
	\label{}
\end{equation}
Given an initial Gaussian state\,---\,for instance the Minkowski vacuum\,---\,the evolution of the correlation functions of the system can be computed analytically in terms of Mathieu functions. In figure~\ref{fig:I-reheating-production} we show the time-evolution of the entanglement entropy of a subsystem $A$ defined by a subalgebra of observables $\mathcal{A}_A$ generated by the linear observables (\ref{eq:subalgebra-phiA}). At the classical level, the subsystem exponent $\Lambda_A$ can be easily computed: It is given by the sum of the two largest Lyapunov exponents of the systems, which are degenerate in value and correspond to modes exactly at the resonance $|\vec{k}|=M_0$. Therefore we have
\begin{equation}
	\Lambda_A= 2\,\lambda(M_0)=\frac{v_0^2}{2M_0}\,.
	\label{}
\end{equation}
As predicted by theorem~\ref{th:SA}, the entanglement entropy of the subsystem initially prepared in a Gaussian state $|J_0\rangle$ is observed to grow as $S_A(t)\sim \Lambda_A\,t$. \\

\begin{figure}[t]
	\centering
	\noindent\makebox[\linewidth]{
		\includegraphics[width=.9\linewidth]{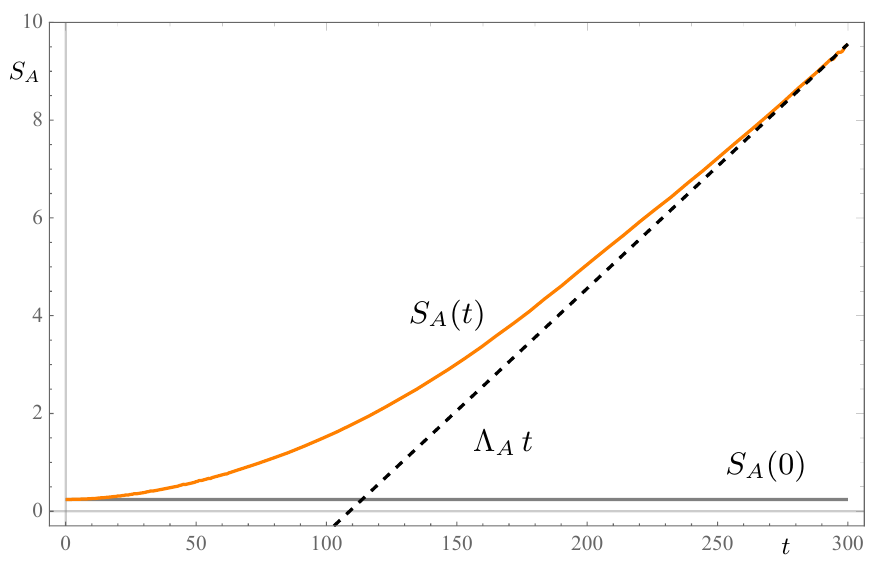}
	}
	\caption{\emph{Preheating and parametric resonance}. We compute the entanglement entropy $S_A(t)$ numerically. The subsystem is defined in (\ref{eq:subalgebra-phiA}), the time evolution is governed by the Hamiltonian presented in (\ref{eq:Hresonance}) and we start in the Minkowski vacuum state. We set $R=1$, $M_0=1$ and $v_0^2=0.1$ leading to the entanglement production rate $\Lambda_A=0.05$. The quantity $S_A(0)$ is the entanglement entropy of the Minkowski vacuum and the linear production phase is reached after an initial transient.}
	\label{fig:I-reheating-production}
\end{figure}

The phenomenon of parametric resonance plays a central role in a variety of far-from-equilibrium processes in quantum field theory \cite{calzetta1988nonequilibrium,Berges:2015kfa,Berges:2002cz}. We briefly discuss three examples: preheating in cosmology \cite{traschen1990particle,kofman1994reheating,allahverdi2010reheating,amin2015nonperturbative},
the formation of the chiral condensate in relativistic heavy-ion collisions \cite{mrowczynski1995reheating}, and the dynamical Casimir effect in trapped Bose-Einstein condensates \cite{busch2014quantum}.\\

At the end of cosmological inflation, the inflaton oscillates coherently around the minimum of its potential. Such oscillations excite the vacuum of matter fields via the phenomenon of parametric resonance. This phase of explosive non-thermal particle production is called preheating and is followed by a thermalization phase which provides the initial conditions for Big Bang Nucleosynthesis. A simple model of preheating consists in a coupling $V=\frac{1}{2}g^2 \Phi^2\,\varphi^2$ between the inflaton $\Phi$ and a field $\varphi$ which serves as proxy for Standard Model fields. Coherent oscillations of the inflaton, $\langle\Phi(\vec{x},t)\rangle=\Phi_0\,\sin(M_0\,t)$, result in an effective dynamics for the matter field described by an Hamiltonian of the form (\ref{eq:Hresonance}) with a coupling constant $v_0^2=g^2\Phi_0^2$. At the beginning of the oscillatory phase, the state of matter can be assumed to be the vacuum $|J_0\rangle$ because of the dilution effect of the inflationary phase. Its evolution in the preheating phase results in a Gaussian state $|J_t\rangle$ which is far from equilibrium: The Floquet instability of the Hamiltonian results in an explosive production of particles with momenta in the resonance band. For a given subalgebra of observables $\mathcal{A}_A$, such as the one discussed in (\ref{eq:subalgebra-phiA}), theorem~\ref{th:SA} predicts a linear growth of the  entropy with a rate given by the subsystem exponents $\Lambda_A$. Phenomenologically, the relevant choice of subalgebra of observables or coarse graining of the system is dictated by the interaction with its environment. The interaction of the produced particles and the expansion of the universe have the effect of reducing the efficiency of the resonance and eventually lead to a thermal-equilibrium radiation-dominated phase, with an expected entropy profile qualitatively similar to the one illustrated in figure~\ref{fig:Example}.\\

A similar preheating phenomenon is discussed in the context of relativistic heavy-ion collisions where, in the late stages of the evolution of a quark-gluon plasma, a chirally symmetric state rolls down and oscillates around to the minimum of the effective chiral potential \cite{mrowczynski1995reheating}. Coherent pion excitations are described by a quark condensate $\Phi=\langle \bar{q}q\rangle$ and $\vec{\phi}=\langle \bar{q}\vec{\sigma}q\rangle$ and a $O(4)$ linear sigma model with $\phi_a=(\Phi,\vec{\phi}\,)$ with explicit symmetry breaking. The action of the system is
\begin{equation}
	S[\Phi,\vec{\phi}\,]=\int d^4x\Big( -\frac{1}{2}\partial_\mu\phi_a\partial_\mu\phi^a\,-\frac{1}{4}g\,(\phi_a\phi^a-f_\pi^2)^2-m_\pi^2\,f_\pi \Phi_0\Big)
	\label{}
\end{equation}
with parameters $g\approx 20$, $f_\pi\approx 90$ MeV and $m_\pi\approx 140$ MeV. The coherent field $\phi_a$ is initially in a chirally symmetric state. As it rolls down the potential and oscillates around the minimum $\phi_a\approx (f_\pi,\vec{0}\,)$, a squeezed state of coherent pion pair excitations is produced via parametric resonance. In this phase, the entanglement entropy of a generic subsystem $A$ is predicted to grow with a rate given by the subsystem exponent $\Lambda_A$. As discussed in \cite{muller2011entropy}, the linear entropy growth is expected to be bounded by the Kolmogorov-Sinai rate of the system.\\

Time-dependent Hamiltonians of the form (\ref{eq:Hresonance}) appear also in the description of stimulated quasi-particle production in cold atomic Bose gases \cite{fedichev2004cosmological,carusotto2010density}. A periodic modulation of the external potential that traps the gas induces a response in the condensed portion of the gas which acts as a time-dependent background for quasi-particles. The study of entanglement entropy growth in cold atomic Bose gases is of particular relevance because of current experiments which can probe the non-separability of phonon pair creation \cite{jaskula2012acoustic,Steinhauer:2015saa}.

\subsection{Cosmological perturbations and slow-roll inflation}
During slow-roll inflation, quantum perturbations of the metric and the inflaton field are stretched and squeezed. We illustrate this phenomenon\,---\,together with the associated growth of the entanglement entropy\,---\,using a simple model consisting of a minimally-coupled massless scalar field in a cosmological spacetime. The action of the system is
\begin{equation}
	S[\varphi]=-\int d^4 x \textstyle \; \frac{1}{2}\sqrt{-g}g^{\mu\nu}\,\partial_\mu\varphi\,\partial_\nu\varphi
	\label{}
\end{equation}
where with a metric  $g_{\mu\nu}$ that defines the line element of a Friedmann-Lema\^itre-Robertson-Walker spacetime, $ds^2=g_{\mu\nu}dx^\mu dx^\nu=-dt^2+a(t)^2\,d\vec{x}^2$. The evolution of the field in the cosmic time $t$ is generated by the time-dependent Hamiltonian
\begin{equation}
	H(t)=\int\frac{d^3\vec{k}}{(2\pi)^3}\,\frac{1}{2}\Bigg(\frac{|\pi(\vec{k})|^2}{a(t)^3}+a(t) \,\vec{k}^2\,|\varphi(\vec{k})|^2\Bigg)\,,
	\label{eq:H-FLRW}
\end{equation}
where $\vec{k}$ is the comoving momentum and $\pi(\vec{k})=a(t)^3 \,d\varphi(\vec{k})/dt$. During slow-roll inflation, the Hubble rate changes slowly in time. To illustrate the analysis of the stability of the system, here we model this quasi-de Sitter phase with a de Sitter scale factor,
\begin{equation}
	a(t)=e^{H_0\,t}\,.
	\label{eq:scale-factor}
\end{equation}
The canonical subsystem spanned by $\big(\varphi(\vec{k}),\pi(-\vec{k})\big)$ with comoving momentum $\vec{k}$ is not a regular Hamiltonian system because of exponential collinearity (see appendix~\ref{app:regular}). In fact the angle between the two Lyapunov vectors $\ell_1(\vec{k})$ and $\ell_2(\vec{k})$  approaches $0$ as $e^{-H_0t}$. As a result, the Lyapunov exponents of the mode $\vec{k}$ do not have to be opposite in sign. In fact they are found to be given by
\begin{equation}
	\lambda_1(\vec{k})=H_0\quad\textrm{and}\quad \lambda_2(\vec{k})=0\,,
	\label{}
\end{equation}
where $H_0$ is the Hubble rate.
\begin{figure}[t]
	\centering
	\noindent\makebox[\linewidth]{
		\includegraphics[width=.9\linewidth]{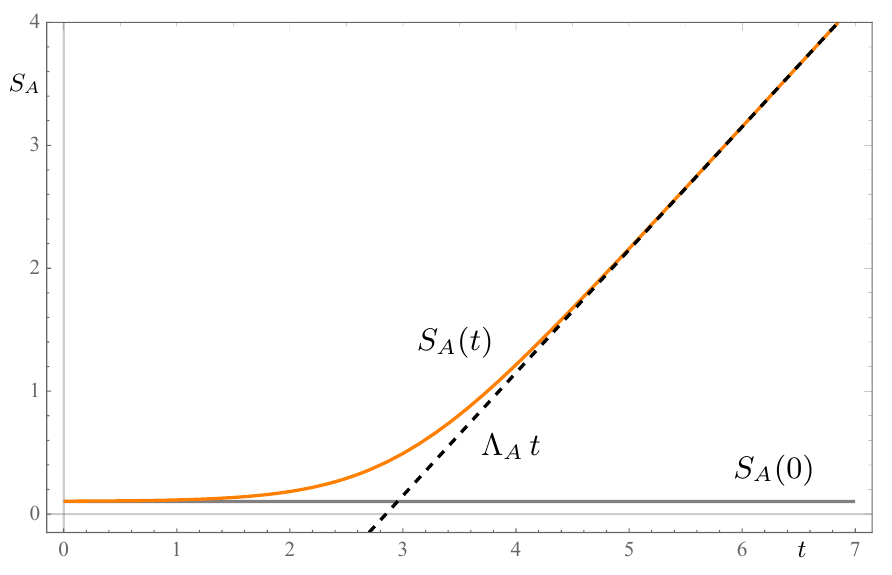}
	}
	\caption{\emph{Quantum field in de Sitter space}. We plot the entanglement entropy $S_A(t)=S(\nu(t))$ from (\ref{eq:expansion-J-eigenvalues}) associated to the subsystem described in (\ref{eq:subalgebra-phiA-Laplacian}). In the phase of linear growth, the entropy is observed to grow with rate given by the Hubble rate $H_0$ as predicted by theorem~\ref{th:SA}. We set $H_0=1$ and $R=1$ in this plot.}
	\label{fig:H-Comoving-expansion}
\end{figure}
In the quantum theory we consider an initial state at the time $t_0\to-\infty$ given by the Bunch-Davies vacuum. The correlation functions of this state at the time $t$ can be determined in closed form and are given by
\begin{equation}
	G_t(\vec{k},\vec{k}')=
	2\left(
	\begin{array}{cc}
		\frac{1}{2|\vec{k}|}\,e^{-2H_0 t} \,+\,\frac{H_0^2}{2|\vec{k}|^3}\;\;\quad&
		-\frac{H_0}{2|\vec{k}|}\,e^{+H_0 t} \\[1.5em]
		-\frac{H_0}{2|\vec{k}|}\,e^{+H_0 t}& 
		\frac{|\vec{k}|}{2}\,e^{+2H_0 t}
	\end{array}
	\right)\;(2\pi)^3\,\delta^{3}(\vec{k}+\vec{k}')\,,
	\label{}
\end{equation} 
from which we can read the complex structure $J_t(\vec{k},\vec{k}')$.

We analyze the entanglement growth of a subsystem spanned by a linear smearing of the field and the momentum. In order to guarantee that the dispersion of the linear observables are finite, we consider a smearing of the form
\begin{equation}
	\hat{\varphi}_A=\int \Delta\hat{\varphi}(\vec{x})\, f(\vec{x})\,d^3\vec{x}\,,\quad \hat{\pi}_A=\int \Delta\hat{\pi}(\vec{x})\, f(\vec{x})\,d^3\vec{x}\qquad\textrm{with}\qquad 
	\textstyle
	f(\vec{x})=\frac{1}{(\sqrt{2\pi}\, R)^3}e^{-\frac{|\vec{x}|^2}{2R^2}}\,,
	\label{eq:subalgebra-phiA-Laplacian}
\end{equation}
where $\Delta\hat{\varphi}(\vec{x})=\delta^{ij}\partial_i\partial_j\hat{\varphi}(\vec{x})$ is the comoving Laplacian and the Gaussian smearing is over a region of comoving size $R$. The eigenvalues of the restricted complex structure $[\mathrm{i}J_t]_A$ come in pairs $\pm\nu(t)$ and are given by
\begin{equation}
	\textstyle \nu(t)=\frac{16}{5\sqrt{3\pi}}\sqrt{1+\frac{1}{6}H_0^2 R^2\,e^{+2H_0\,t}}\,.
	\label{eq:expansion-J-eigenvalues}
\end{equation}
The entanglement entropy $S_A(t)=S(\nu(t))$ of the subsystem is plotted in figure~\ref{fig:H-Comoving-expansion} and for long time, i.e. for large number of efoldings, grows linearly as $S_A(t)\sim H_0\,t$. This is exactly the asymptotic growth predicted by theorem~\ref{th:SA} written in terms of the subsystem exponent $\Lambda_A=H_0$.\footnote{At the classical level, the momentum $\pi_A$ grows exponentially fast as $e^{H_0t}$ while the smeared field $\varphi_A$ has a norm which does not change exponentially and  does not approach $\pi_A$ exponentially fast. As a result the two vectors span a parallelogram whose area grows as $e^{H_0t}$ leading to a subsystem exponent $\Lambda_A=H_0$.} 

We note that previous studies of the growth of the entanglement entropy of cosmological perturbations focus on the $(\vec{k},-\vec{k})$ subsystem \cite{Campo:2005sy,Polarski:1995jg,Kiefer:1999sj,Martin:2015qta}. On the other hand the results presented here apply to all subsystems defined by smeared fields.

\section{Proof, part I: classical ingredients}\label{sec:classical}
In this section and the subsequent section we collect and prove results used in the proof of the main theorem presented in section~\ref{sec:growth}.\\

We consider a classical dynamical system with $N$ degrees of freedom. We assume that the system has a Hamiltonian dynamics defined in a linear phase space \cite{arnol2013mathematical}. We also restrict attention to quadratic time-dependent Hamiltonians. In this case we discuss the notions of stability, Lyapunov exponents and the growth of the volume of subsystems \cite{zaslavsky2008hamiltonian,vulpiani2010chaos}.

\subsection{Linear phase space and quadratic time-dependent Hamiltonians}\label{sec2:phasespace}
We consider a system with $N$ degrees of freedom described by a linear phase space $V=\mathbb{R}^{2N}$. Phase space observables $\mathcal{O}$ are smooth functions of $2N$ real variables denoted $\xi^a$,
\begin{equation}
\begin{split}
\mathcal{O}:\;\mathbb{R}^{2N}&\to \mathbb{R}\\
\xi^a\;\;&\mapsto \mathcal{O}(\xi)\,.
\label{}
\end{split}
\end{equation}
The space of observables is equipped with a Lie algebra structure defined by the Poisson brackets
\begin{equation}
\{f(\xi),g(\xi)\}=\Omega^{ab}\,\partial_a f(\xi)\,\partial_b g(\xi)
\label{}
\end{equation}
where $\Omega^{ab}$ is a nondegenerate antisymmetric matrix. In this paper we mostly focus on linear observables $v=v_a \xi^a$ and quadratic observables $\mathcal{O}=\frac{1}{2}h_{ab}\xi^a\xi^b$. We call $V^*$ the vector space formed by all linear observables, and denote by $v_a$ the elements of $V^*$ and by $w^a$ the elements of  $V$. The restriction of the Poisson brackets to the space of linear observables is
\begin{equation} 
\{u,v\} = \Omega^{ab} u_a v_b \, .
\label{eq:Omega-V}
\end{equation}
A Darboux basis\footnote{Technically, $\mathcal{D}_V$ is a basis of the dual phase space $V^*$, but we refrained from bloating our notation by writing $\mathcal{D}_{V^*}$. All Darboux bases $\mathcal{D}$ in this paper will live in the dual phase space $V^*$.} (also called symplectic basis) of phase space functions consists of a set $\mathcal{D}_V = (q_1, \dots, q_N, p_1,\dots, p_N)$ of linear observables\footnote{When using abstract indices, a Darboux basis $\mathcal{D}_V=(\xi^1_a,\dots,\xi^{2N}_a)$ consisting of $2N$ linear observables $\xi^a_b$ can be read as a concrete representation of the Kronecker delta $\delta^a{}_b=\xi^a_c\xi^c_b=\xi^a_b$ when we read both indices as abstract indices. However, when referring to an explicit basis $(q_1,\dots,q_N,p_1,\dots,p_N)$, we will use lower indices to match standard conventions.}
\begin{equation}
q_i=q_{ia}\,\xi^a \quad \mathrm{and}\quad p_i=p_{ia}\,\xi^a\quad \mathrm{with}\quad i=1,\dots,N
\label{eq:qp}
\end{equation}
satisfying canonical Poisson brackets $\{q_i,q_j\}=0$, $\{p_i,p_j\}=0$, $\{q_i,p_j\}=\delta_{ij}$. \\

The notions of symplectic vector space and symplectic transformations play a central role in the description of the system.  A symplectic structure on $V$ is an antisymmetric non-degenerate bilinear map $\omega_{ab}:V\times V\to\mathbb{R}$. It provides a canonical map from $V$ to $V^*$ given by $v_a=\omega_{ab}v^b$. The couple $(V,\omega_{ab})$ defines a symplectic vector space. The inverse of the symplectic structure, denoted $\Omega^{ab}$, is the antisymmetric bilinear map defined by $\Omega^{ac}\,\omega_{cb}=\delta^a{}_b$ and is a symplectic structure on $V^*$. The space $V^*$ of linear observables on a linear phase space, equipped with the bilinear map $\Omega^{ab}$ describing the restriction of the Poisson brackets to $V^*$ as in \eqref{eq:Omega-V}, is a symplectic vector space. In a Darboux basis, the symplectic structure $\Omega^{ab}$ and its inverse $\omega_{ab}$ take the $2N\times 2N$ matrix form $\Omega=(\Omega^{ab})$ and $\omega=(\omega_{ab})$,
\begin{equation}
\Omega=\left(
\begin{array}{c|c}
0& +\mathbbm{1}\\
\hline
-\mathbbm{1} & 0
\end{array}
\right)\,,\qquad
\omega\equiv\Omega^{-1}=\left(
\begin{array}{c|c}
0& -\mathbbm{1}\\
\hline
+\mathbbm{1} & 0
\end{array}
\right).
\label{}
\end{equation}
The linear symplectic group $\mathrm{Sp}(2N)$ is the group of $2N\times 2N$ matrices $M^a{}_b$ satisfying the relation $M^a{}_c\, M^b{}_d\,\Omega^{cd}=\Omega^{ab}$. In matrix form we have $M\Omega M\tra=\Omega$. Note that the inverse of a symplectic matrix is given by $M^{-1}=\Omega M\tra \omega$. The matrices $M^a{}_b$ can be interpreted as linear maps either on $V$ or on $V^*$, and preserve the corresponding symplectic structures. \\

The dynamics of a Hamiltonian system is prescribed by a Hamilton function $H(t)$ that we allow to be time-dependent. The Hamilton equations of motion of an observable $\mathcal{O}$ are
\begin{equation}
\dot{\mathcal{O}}(t)=\{\mathcal{O}(t),H(t)\}+\frac{\partial \mathcal{O}(t)}{\partial t}\,.
\label{}
\end{equation}
In particular, for the linear observables $\xi^a$ we have
\begin{equation}
\dot{\xi}^a(t)=\Omega^{ab}\partial_b H(t)\,.
\label{eq:xidot}
\end{equation}
In this paper we focus on time-dependent quadratic Hamiltonians, i.e. phase space functions of the form
\begin{equation}
H(t)=\frac{1}{2}h_{ab}(t)\,\xi^a\xi^b+f_a(t)\,\xi^a\,.
\label{eq:quadH}
\end{equation}
In this case the Hamilton equations simplify to the linear equation
\begin{equation}
\dot{\xi}^a(t)=K^a{}_b(t)\,\xi^b\,+\,\Omega^{ab}f_b(t)
\label{}
\end{equation}
where the matrix $K^a{}_b(t)$ is defined in terms of the quadratic term in the Hamiltonian by
\begin{equation}
K^a{}_b(t)=\Omega^{ac}h_{cb}(t)\,.
\label{eq:Kab}
\end{equation}
The solution of this equation provides the time evolution of the linear observable $\xi^a$,
\begin{equation}
\xi^a(t)=M^a{}_b(t)\,\xi^b(0)+\eta^a(t)\,.
\label{eq:xit}
\end{equation}
The matrix $M^a{}_b(t)$ solves the differential equation $\dot{M}^a{}_b(t)=K^a{}_c(t)M^c{}_b(t)$ with the identity as initial condition, and can be expressed as a time-ordered exponential,
\begin{equation}
M^a{}_b(t)=\mathcal{T}\!\exp\left(\int_0^t K^a{}_b(t')\,dt'\right).
\label{eq:Mh}
\end{equation}
As the time evolution preserves the Poisson brackets, the matrix $M^a{}_b(t)$ belongs to the linear symplectic group $\mathrm{Sp}(2N)$, i.e. $M^a{}_c(t) M^b{}_d(t)\Omega^{cd}\,=\Omega^{ab}$. The time-dependent shift $\eta^a(t)$ in (\ref{eq:xit}) satisfies $\dot{\eta}^a(t)=K^a{}_b(t)\eta^b(t)+\Omega^{ab}f_b(t)$. It is given by 
\begin{equation}
\eta^a(t)=M^a{}_b(t)\int_0^t M^{-1}(t')^b{}_c\,\Omega^{cd}f_d(t')\,dt'
\label{eq:etat}
\end{equation}
and it vanishes if the linear term $f_a(t)\xi^a$ is not present in the Hamiltonian.\\

A simple example of time-dependent quadratic Hamiltonian of the form (\ref{eq:quadH}) is given by a system of coupled harmonic oscillators with time-dependent couplings or driven by external forces. Another important example arises in the description of the lowest-order expansion of the evolution of a time-independent non-linear system around a classical solution chosen as background. In this case the time-dependence of the effective Hamiltonian arises from the background classical solution.

\subsection{Linear stability and Lyapunov exponents}
To characterize the linear stability of a dynamical system we consider a small perturbation $\delta \xi^a(t)$ of a classical solution $\xi^a_0(t)$ that satisfies the Hamilton equations. Substituting $\xi^a(t)=\xi^a_0(t)+\delta \xi^a(t)$ into  (\ref{eq:xidot}) and expanding at linear order in the perturbation we find the linear equation
\begin{equation}
\delta \dot{\xi}^a(t)=\left.\Omega^{ac}\partial_c\partial_b H(t)\right|_{\xi_0}\,\delta\xi^b(t)\,,
\label{}
\end{equation}
with $K^a{}_b(t)\equiv \left.\Omega^{ac}\partial_c\partial_b H(t)\right|_{\xi_0}$ the stability matrix of the classical solution $\xi_0^a(t)$. For the quadratic Hamiltonian (\ref{eq:quadH}) the stability matrix is simply given by $K^a{}_b(t)=\Omega^{ac}h_{cb}(t)$. As a result, the time evolution of the perturbation is given by
\begin{equation}
\delta \xi^a(t)=M^a{}_b(t)\;\delta\xi^b(0)
\label{eq:time-pert}
\end{equation}
where the symplectic matrix $M(t)$ is given by (\ref{eq:Mh}). In order to measure the separation of two configurations in phase space we introduce a metric $g_{ab}$, i.e. a positive definite symmetric bilinear, and define the norm $||\delta \xi||\equiv \sqrt{g_{ab}\,\delta \xi^a\,\delta \xi^b}$. The exponential rate of separation of two sufficiently close classical solutions is given by the Lyapunov exponent $\lambda_{\delta \xi}$ defined as
\begin{equation}
\lambda_{\delta \xi}=\lim_{t\to\infty}\frac{1}{t}\log\frac{\lVert\delta \xi(t)\rVert}{\lVert\delta \xi(0)\rVert}\,.
\label{}
\end{equation}
We note that the Lyapunov exponent $\lambda_{\delta \xi}$ is independent from the choice of metric $g_{ab}$ used to measure the distance between the classical trajectories $\xi_0^a(t)$ and $\xi_0^a(t)+\delta \xi^a(t)$. See appendix~\ref{app:Lyapunov} for a proof of this statement.

It is also useful to define Lyapunov exponents of linear observables $\ell(\delta\xi)=\ell_a\delta\xi^a$ that probe the perturbation $\delta\xi^a(t)$ and live in the dual phase space $\ell_a\in V^*$. From the time evolution equation~(\ref{eq:time-pert}), we can read off that $\ell(t)$ evolves as
\begin{equation}
	\ell_a(t)=M^b{}_a(t)\,\ell_b(0)=\left(M\tra\!(t)\,\ell(0)\right)_a\,.
\end{equation}
In order to define a norm $\lVert \ell\rVert=\sqrt{G^{ab}\ell_a\ell_b}$, we use the inverse metric $G^{ab}$, such that $G^{ac}g_{cb}=\delta^a{}_b$. Again, the Lyapunov exponent
\begin{equation}
\lambda_{\ell}=\lim_{t\to\infty}\frac{1}{t}\log\frac{\lVert\ell(t)\rVert}{\lVert\ell(0)\rVert}\,.
\label{}
\end{equation}
will be independent of the metric that we choose.

The metric $g_{ab}$, used above to define Lyapunov exponents, is said to be compatible with the symplectic structure $\omega_{ab}$ with inverse $\Omega^{ab}$ if the matrix $J^a{}_b\equiv\Omega^{ac}g_{cb}$ is symplectic, $J^a{}_c\, J^b{}_d\,\Omega^{cd}=\Omega^{ab}$, and squares to minus the identity $J^a{}_cJ^c{}_b=-\delta^a{}_b$. In this case, $J^a{}_b$ defines a complex structure. The inverse metric $G^{ab}$ is then compatible with the symplectic structure $\Omega^{ab}$ in the dual space. A compatible metric $g_{ab}$ allows us to define the limiting matrix $L_a{}^b$,
\begin{equation}
L_a{}^b\equiv\lim_{t\to\infty}\frac{1}{2t}\log \Big(g_{ac}\,M^{c}{}_d\,G^{de}\,M^{b}{}_e \Big)\,,
\label{eq:Lmatrix}
\end{equation}
that characterizes the long-time stability of the system.\footnote{In matrix form, $L=\lim_{t\to\infty}\frac{1}{2t}\log\big( gM(t)GM\tra(t)\big)$.} Provided that the Hamiltonian system is regular in the sense of appendix~\ref{app:regular}, Lyapunov exponents exist for all linear observables $\ell_a$ and are given by the eigenvalues of the limiting matrix $L_a{}^b$. As the matrix $L_a{}^b$ is symmetric and belongs to the symplectic algebra $\mathrm{sp}(2N)$, its eigenvalues are real and come in pairs with opposite sign $(\lambda,-\lambda)$. The Lyapunov spectrum consists of the ordered Lyapunov exponents given by
\begin{equation}
\lambda_1\geq\dots\geq \lambda_N\geq 0\geq \lambda_{N+1}\geq\dots\geq\lambda_{2N}\,,
\label{eq:lambdaOrder}
\end{equation}
with $\lambda_{2N+1-i}=-\lambda_i$ for regular Hamiltonian systems, as explained in appendix~\ref{app:Hamiltonian}. The dimension of an eigenspace is how often the associated exponent appears in this list. The eigenvectors $\ell_b$ of $L_a{}^b$ provide us with a Darboux basis of $V^*$ adapted to the unstable directions of the system. This basis, called the Lyapunov basis $\mathcal{D}_L$,
\begin{equation}
\mathcal{D}_L=(\ell^1,\dots, \ell^{2N}),
\label{}
\end{equation}
is defined so that the only non-trivial Poisson brackets are $\{\ell^i,\ell^{2N-i+1}\}=1$ for $i=1,\dots,N$ and $\lim_{t\to\infty}\frac{1}{t}\log \lVert\ell^i(t)\rVert/\lVert\ell^i(0)\rVert\,=\lambda_i$ with $\ell^i(0)=\ell^i$. Note that $\mathcal{D}_L$ is not unique because it depends on our choice of metric $G^{ab}$, but subsequent results will be independent of this choice \cite{eckmann1985ergodic,ginelli2007characterizing}.\\

We will discuss examples of unstable quadratic systems in section~\ref{sec:growth}. The prototypical example is the inverted harmonic oscillator, with a potential $V$ that is unbounded from below. The Lyapunov exponents of the system are related to the unstable directions of the potential. Another example is provided by periodically driven systems, i.e., systems with a quadratic time-dependent Hamiltonian of the form (\ref{eq:quadH}) with periodic coefficients $h_{ab}(t+T)=h_{ab}(t)$. In this case instabilities appear due to the phenomenon of parametric resonance \cite{arnol2013mathematical}. The real part of the Floquet exponents of the system coincide with the notion of Lyapunov exponents described above.

\subsection{Subsystems and the subsystem exponent $\Lambda_A$}\label{sec2:LambdaA}
The partition of a Hamiltonian system in two complementary subsystems corresponds to a decomposition of the phase space $V$ and its dual $V^*$ into direct sums
\begin{equation}
V=A\oplus B\qquad\text{and}\qquad V^*=A^*\oplus B^*
\label{eq:V=A+B}
\end{equation}
with dimension $\dim A=2N_A$, $\dim B=2N_B$ where $N_A$ is the number of degrees of freedom in the subsystem $A$ and $N_A+N_B=N$. This decomposition can be understood as induced by a choice of subspace of linear observables $\phi_i=\phi_{ia}\xi^a$ and $\pi_i=\pi_{ia}\xi^a$ with $i=1,\dots, 2N_A$ and canonical Poisson brackets $\{\phi_i,\phi_j\}=0$, $\{\pi_i,\pi_j\}=0$, $\{\phi_i,\pi_j\}=\delta_{ij}$.

This set of observables provides us with a Darboux basis of linear observables $A^*$ that only probe the degrees of freedom in $A$
\begin{equation}
\mathcal{D}_A=(\theta^1,\dots,\theta^{2N_A})=(\phi_i,\pi_i),
\label{}
\end{equation}
and it can be completed to a Darboux basis of the full system by introducing a Darboux basis of $B^*$,
\begin{equation}
\mathcal{D}_B=(\varTheta^1,\dots,\varTheta^{2N_B})=(\Phi_i,\Pi_i).
\label{}
\end{equation}
so that
\begin{equation}
\mathcal{D}_V=(\mathcal{D}_A,\mathcal{D}_B).
\label{}
\end{equation}

Given a Darboux basis $\theta^r$ of $A^*$ and its dual basis $\vartheta_r$ of $A$ satisfying $\theta^r_a\vartheta_s^a=\delta^r{}_s$, we can restrict tensors to the subsystem by appropriate contractions. Most importantly, we will consider the restriction $[J]_A$ of the complex structure $J^a{}_b$ and $[G]_A$ of a metric $G^{ab}$:
\begin{equation}
[J]_A=(\theta^r_a\,J^a{}_b\,\vartheta^b_s)\qquad\text{and}\qquad [G]_A=(\theta^r_a\,G^{ab}\,\theta_b^s)\,.
\label{}
\end{equation}
Note that, as $\theta^r_a$ is a Darboux basis, the restriction of the symplectic structure $\Omega^{ab}$ is still a symplectic structure, $[\Omega]_A=\Omega_A$. On the other hand, the restriction $[J]_A$ of a complex structure $J$ is not in general again a complex structure, because it does not necessarily satisfy $[J]_A^2=-\mathds{1}_A$. \\

We give some examples of subsystems. Consider for instance a linear chain of $N$ oscillators, with the oscillator at site $i$ having canonical coordinates $(q_i,p_i)$. A first example of subsystem $A$ corresponds to the subset of observables $(q_i,p_i)$ with $i=1,\dots,N_A$ associated to a geometric decomposition of the chain in two complementary intervals. A second example of subsystem is given by a subset of normal-mode observables $(\tilde{\phi}_k,\tilde{\pi}_k)$ with $k=1,\dots,N_A$ corresponding to the long-wavelength perturbations of the system. A third example is provided by a detector that makes measurement of the localized observables $(Q,P)$ only, with $Q=\frac{1}{N_d}\sum_{i=1}^{N_d} q_i$ and $P=\sum_{i=1}^{N_d} p_i$ probing only average properties of a localized subset of oscillators. Each example shows that the choice of a subsystem $A$ corresponds to a coarse graining of the system that preserves the symplectic structure of the accessible observables.\\

Given a subsystem $A$ we introduce a new notion of characteristic exponent $\Lambda_A$ that generalizes the notion of Lyapunov exponents of the system. A Darboux basis $\mathcal{D}_A=(\theta^1,\dots,\theta^{2N_A})$ of the subsystem defines a symplectic cube
\begin{equation}
	\mathcal{V}_A=\left\{\sum^{2N_A}_{r=1} c_r\,\theta^r\Bigg|\;0\leq c_i\leq 1\right\}\subset A^*\,.
\end{equation}
Given the metric $G^{ab}$, we can compute the volume $\mathrm{Vol}_G(\mathcal{V}_A)$ of the symplectic cube $\mathcal{V}_A$ as the square root of the determinant of the $2N_A\times 2N_A$ Gramian matrix $(\theta_a^r\, G^{ab}\, \theta_b^s)$,
\begin{equation}
\mathrm{Vol}_G(\mathcal{V}_A)\equiv\sqrt{\det (\theta_a^r\, G^{ab}\, \theta_b^s)}=\sqrt{\det [G]_A}\,.
\label{}
\end{equation}
We will be interested in how this volume changes under time evolution. Let us recall that the action on $V^*$ is given by the transpose $M\tra(t)_a{}^b=M^b{}_a(t)$. If we evolve the symplectic cube $\mathcal{V}_A$ with $M\tra$, we have
\begin{equation}
\mathrm{Vol}_G(M\tra(t)\mathcal{V}_A)=\sqrt{\det([M(t)\, G\,M\tra(t)]_A)}\,.
\label{}
\end{equation}
We define the subsystem exponent $\Lambda_A$ as the exponential rate of growth of the volume of the subsystem measured with respect to the metric $G^{ab}$,
\begin{equation}
\Lambda_A=\lim_{t\to\infty}\frac{1}{t}\log\frac{\mathrm{Vol}_G(M\tra(t)\mathcal{V}_A)}{\mathrm{Vol}_G(\mathcal{V}_A)}\,.
\label{eq:LambdaA}
\end{equation}
For a regular Hamiltonian system this limit exists and is independent of the metric $g_{ab}$, see appendix~\ref{app:regular}. Note that the exponent of the full system $\Lambda_V$ vanishes because $[M\tra(t)]_V=M\tra(t)$ and the determinant of a symplectic matrix is equal to one. The vanishing of $\Lambda_V$ is a special case of the Liouville theorem. We say that the subsystem $A$ is unstable under time evolution if it has a positive exponent $\Lambda_A$.

\subsection{Relation of the exponent $\Lambda_A$ to Lyapunov exponents}\label{sec2:LambdaA-Lyapunov}
We now show how to compute the subsystem exponent $\Lambda_A$ once the Lyapunov spectrum of the system is known. The result is stated and proven below.

\begin{theorem}[Subsystem exponent]
\label{th:LambdaA}
	Given a regular Hamiltonian system with Lyapunov spectrum $(\lambda_1,\dots,\lambda_{2N})$ and Lyapunov basis $\mathcal{D}_{L}=(\ell^1,\dots,\ell^{2N})$, the subsystem exponent $\Lambda_A$ associated to the symplectic decomposition $V=A^*\oplus B^*$ can be determined as follows:
	\begin{enumerate}
		\item Choose a Darboux basis $\mathcal{D}_A=(\theta^1,\dots,\theta^{2N_A})$ of the symplectic subspace $A^*\subset V^*$.
		\item Compute the unique transformation matrix $T$ that expresses $\mathcal{D}_A$ in terms of the Lyapunov basis $\mathcal{D}_{L}=(\ell^1,\dots,\ell^{2N})$:
		\begin{equation}
		\left(\begin{array}{c}
		\theta^1\\
		\vdots\\
		\theta^{2N_A}
		\end{array}\right)=\left(\begin{array}{ccc}
		\smash{\fbox{\color{black}\rule[-37pt]{0pt}{1pt}$\,\,T^1_1\,\,$}} & \cdots & \smash{\fbox{\color{black}\rule[-37pt]{0pt}{1pt}$\,T_{2N}^1$}}\\
		\vdots & \ddots & \vdots\\
		\myunderbrace{\mystrut{1.5ex}T_1^{2N_A}}{\vec{t}_1} & \cdots & \myunderbrace{\mystrut{1.5ex}T_{2N}^{2N_A}}{\vec{t}_{2N}}
		\end{array}\right)\left(\begin{array}{c}
		\ell^1\\
		\vdots\\
		\ell^{2N}
		\end{array}\right)\color{white}{\begin{array}{c}
			.\\ \\ \\ \\ ,
			\end{array}}\color{black}
			\label{eq:Tv}
		\end{equation}

We refer to the $2N$ columns of $\,T$ as $\vec{t}_i$.
		\item Find the first $2N_A$ linearly independent \footnote{Here, we mean that $\vec{t}_i$ cannot be expressed as a linear combination of the vectors $(\vec{t}_1,\dots,\vec{t}_{i-1})$ standing to the left in the matrix $T$.} columns $\vec{t}_i$ of $T$ which we can label by $\vec{t}_{i_k}$ with $k$ ranging from $1$ to $2N_A$. The result is a map $k\mapsto i_k \in (1,\dots,2N)$ with $i_{k+1}>i_k$.
	\end{enumerate}
	The subsystem exponent $\Lambda_A$ is then given by the sum over the $2N_A$ Lyapunov exponents $\lambda_{i_k}$,
	\begin{equation}
	\Lambda_A=\sum^{2N_A}_{k=1}\lambda_{i_k}\,,
	\label{eq:LambdaAsum}
	\end{equation}
	where the index $i_k$ is defined above.
\end{theorem}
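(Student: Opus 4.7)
The plan is to compute the asymptotic behavior of the volume $\mathrm{Vol}_G(M\tra(t)\mathcal{V}_A)$ by expanding $\mathcal{V}_A$ in the Lyapunov basis and isolating the dominant exponential term. Throughout, I identify volumes of $2N_A$-dimensional parallelepipeds with the $G$-norm of the corresponding wedge product, so that
\begin{equation}
\mathrm{Vol}_G(M\tra(t)\mathcal{V}_A)=\bigl\|M\tra(t)\theta^1\wedge\cdots\wedge M\tra(t)\theta^{2N_A}\bigr\|_G,
\end{equation}
with $\|v_1\wedge\cdots\wedge v_k\|_G^2=\det(G^{ab}v_{ia}v_{jb})$.

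First, I would substitute $\theta^r=\sum_i T^r_i\ell^i$ and use multilinearity of the wedge product (Cauchy--Binet, in coordinates) to obtain
\begin{equation}
M\tra(t)\theta^1\wedge\cdots\wedge M\tra(t)\theta^{2N_A}=\sum_{I}\det(T_I)\;M\tra(t)\ell^{i_1}\wedge\cdots\wedge M\tra(t)\ell^{i_{2N_A}},
\end{equation}
where $I=(i_1<\cdots<i_{2N_A})$ ranges over ordered multi-indices and $T_I$ denotes the corresponding $2N_A\times2N_A$ minor of $T$ whose columns are $(\vec{t}_{i_1},\ldots,\vec{t}_{i_{2N_A}})$. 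This reduces the problem to understanding the asymptotics of each wedge product on the right-hand side.

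Next, I would use the defining property of the Lyapunov basis summarized in section~\ref{sec2:LambdaA-Lyapunov} and appendix~\ref{app}: for a regular Hamiltonian system, the Oseledets filtration guarantees the subexponential precision estimate
\begin{equation}
\frac{1}{t}\log\bigl\|M\tra(t)\ell^{i_1}\wedge\cdots\wedge M\tra(t)\ell^{i_{2N_A}}\bigr\|_G\;\xrightarrow{t\to\infty}\;\sum_{k=1}^{2N_A}\lambda_{i_k}.
\end{equation}
This is the standard statement that the Lyapunov exponent of a $k$-vector is the sum of the Lyapunov exponents of its factors. Combining this with the previous step, and bounding $\|v+w\|_G\leq\|v\|_G+\|w\|_G$ together with the reverse bound obtained from the dominant term not being cancelled (the wedge products for distinct multi-indices $I$ lie in distinct eigenspaces of the asymptotic flag and therefore cannot annihilate the leading contribution), I obtain
\begin{equation}
\lim_{t\to\infty}\frac{1}{t}\log\mathrm{Vol}_G(M\tra(t)\mathcal{V}_A)=\max\Bigl\{\,\sum_{k=1}^{2N_A}\lambda_{i_k}\;\Big|\;\det(T_I)\neq 0\Bigr\}.
\end{equation}

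The final step is combinatorial. A minor $\det(T_I)$ is nonzero iff the columns $\vec{t}_{i_1},\ldots,\vec{t}_{i_{2N_A}}$ are linearly independent in $\mathbb{R}^{2N_A}$, so the admissible multi-indices are precisely the bases of the column matroid of $T$. Since the Lyapunov spectrum is ordered as in \eqref{eq:lambdaOrder}, the weight function $I\mapsto\sum_k\lambda_{i_k}$ is maximized over matroid bases by the greedy algorithm applied with smallest-index-first priority: at each step include the next column $\vec{t}_i$ iff it is not a linear combination of those already chosen. This greedy prescription is exactly the one stated in the theorem, and it yields $\Lambda_A=\sum_{k=1}^{2N_A}\lambda_{i_k}$, proving \eqref{eq:LambdaAsum}.

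The main technical obstacle is the justification of the dominant-term argument, namely that the sum over $I$ does not experience exponential cancellation in the $G$-norm. A clean way around this is to pick the metric $G$ so that the Lyapunov directions $\ell^i$ form a $G$-orthonormal frame (legitimate because $\Lambda_A$ is metric-independent, as recalled after \eqref{eq:LambdaA}); then the Gram determinant factorizes into a sum of squared minors by Cauchy--Binet, each term has a definite sign, and the $\max$ is attained without cancellations. The rest of the argument — extracting the maximizer via the matroid greedy property — is then straightforward.
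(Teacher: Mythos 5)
Your route is genuinely different from the paper's. The paper's proof row-reduces $T$: it collects the first $2N_A$ independent columns into an invertible matrix $U$, passes to the echelon basis $\tilde\theta^k=\ell^{i_k}+\sum_{j>i_k}\tilde T^k_j\ell^j$ (each of which has exponent exactly $\lambda_{i_k}$ because all admixtures grow no faster), and then argues that the volume of the parallelepiped they span grows as $\exp(\sum_k\lambda_{i_k}t)$ because regularity forbids exponential collinearity. You instead expand the wedge product over all $2N_A\times 2N_A$ minors of $T$, identify the growth rate of each term, take the maximum over nonvanishing minors, and solve the resulting combinatorial optimization by the matroid greedy theorem (valid here since the ordering \eqref{eq:lambdaOrder} is exactly the decreasing-weight order, and all bases of the column matroid have equal cardinality, so negative exponents cause no trouble). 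Your reduction buys a transparent separation between the analytic input (growth rates of wedge products of Lyapunov vectors) and the combinatorics (which multi-index wins), and it makes manifest why the answer is the lexicographically-first independent set; the paper's echelon-form argument is more economical and is essentially the content of propositions~\ref{app:prop2} and~\ref{app:prop3} in appendix~\ref{app}, which the authors note constitute an alternative full proof.

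One caveat: your proposed fix for the dominant-term step does not work as stated. Choosing $G$ so that $(\ell^1,\dots,\ell^{2N})$ is orthonormal at $t=0$ does not keep the evolved frame $M\tra(t)\ell^i$ orthonormal, so the induced inner products $\langle e_I^{(t)},e_J^{(t)}\rangle_G=\det\bigl(G(M\tra(t)\ell^{i_k},M\tra(t)\ell^{j_l})\bigr)$ are not diagonal in $I,J$ for $t>0$; the Cauchy--Binet sum of squares you obtain is indexed by column subsets of $\Theta M(t)S$ (with $G=SS\tra$), each of which still mixes all the Lyapunov multi-indices, so the cancellation question is not resolved by this choice of metric. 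The correct justification of the lower bound is precisely the regularity input the paper uses: the vanishing of the Lyapunov defect (proposition~\ref{app:prop3}), equivalently the subexponential decay of angles between the evolved Lyapunov directions, which guarantees that the component of the fixed $2N_A$-vector $\sum_I\det(T_I)\,e_I$ along the fastest-growing Oseledets block of the exterior-power cocycle cannot be exponentially suppressed in norm. With that substitution your argument closes and agrees with \eqref{eq:LambdaAsum}.
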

\begin{proof}
The rectangular matrix $T$ in (\ref{eq:Tv}) allows us to express the elements of the Darboux basis $\mathcal{D}_A$ of the subsystem in terms of the Lyapunov basis, $\theta^r=\sum^{2N}T^r_i \ell^i$. Denoting the columns of $T$ by $\vec{t}_i$ we can select the first $2N_A$ linearly independent columns in the ordered set $(\vec{t}_1,\dots,\vec{t}_{2N})$. We label them $\vec{t}_{i_k}$ and organize them in the $2N_A\times 2N_A$ square matrix $U$,
\begin{equation}
U=\left(\begin{array}{c|c|c}
\vec{t}_{i_1} & \dots & \vec{t}_{i_{2N_A}}
\end{array}\right)\,.
\end{equation}
Due to their linear independence, the inverse $U^{-1}$ exists and turns $T$ into an upper triangular matrix $\tilde{T}$ of the form
\begin{equation}
\tilde{T}=U^{-1}T=\left(\begin{array}{lll}
0 &\cdots \, 0\;  1 \,*\, * \cdots\cdots\cdots\cdots\cdots\cdots&*\\
0 &\cdots \cdots\cdots \, 0\;  1 \,*\, *\cdots\cdots\cdots\cdots&*\\
\,\vdots &\qquad\qquad \vdots\qquad\qquad \vdots\qquad\qquad&\vdots\\
0 &\cdots \cdots\cdots\cdots \cdots \cdots 0\;  1 \,*\, *\;\cdots&*
\end{array}\right)\,,
\end{equation}
where the $*$ represents an unspecified value. Acting with $U^{-1}$ on the left and the right-hand side of (\ref{eq:Tv}) we find
\begin{equation}
\tilde{\theta}^k=\ell^{i_k}+\sum^{2N}_{j>i_k}\tilde{T}^k_j\, \ell^j\,,
\label{eq:wk}
\end{equation}
where $\tilde{\theta}_k=(U^{-1}\theta)_k$. Note that the vectors $\tilde{\theta}^k$ satisfy $\lim_{t\to\infty}\frac{1}{t}\log \lVert M\tra(t)\tilde{\theta}^k\rVert/\lVert\tilde{\theta}^k\rVert\,=\,\lambda_{i_k}$. Moreover the $2N_A$ vectors $\tilde{\theta}^k$ are linearly independent and form a (generally non-symplectic) basis of $A^*$. Therefore the cube $M\tra(t)\,\mathcal{V}_A$ is given by a time-independent linear transformation of the one spanned by $M\tra(t)\,\tilde{\theta}_k$. In the limit $t\to\infty$ the volume of the subsystem scales as $\mathrm{Vol}_G(M\tra(t)\,\mathcal{V}_A)\sim \exp(\sum_{k=1}^{2N_A} \lambda_{i_k} t)$ if there are no directions that become collinear in an exponentially fast way under time evolution. As the exponential collinearity is excluded by the assumption of regularity (see appendix~\ref{app:regular}), the subsystem exponent is given by (\ref{eq:LambdaAsum}).
\end{proof}

\begin{figure}
	\centering
	\noindent\makebox[\linewidth]{
		\includegraphics[width=.9\linewidth]{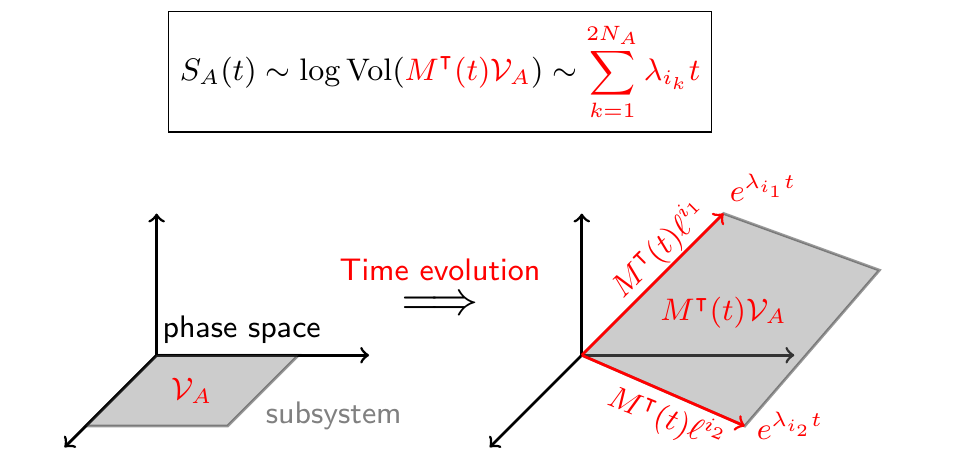}
	}
	\caption{\emph{Subsystem exponents due to phase space stretching}. We illustrate the statement of theorem~\ref{th:LambdaA}. We start with a symplectic cube $\mathcal{V}_A\subset A^*$ in the subsystem and time-evolve it to the deformed cube $M\tra(t)\mathcal{V}_A$ that is dominantly stretched into the $2N_A$ directions of $M\tra(t)\ell^{i_k}$ with Lyapunov exponents $\lambda_{i_k}$. Consequently, the logarithm of its metric volume behaves as $\log \mathrm{Vol}(M\tra(t)\mathcal{V}_A\color{black})\sim\sum^{2N_A}_{k=1}\lambda_{i_k}t$. In generic situations, $\lambda_{i_k}$ are just the $2N_A$ largest Lyapunov exponents, as explained in theorem~\ref{th:LambdaA-generic}. The quantity $\log\mathrm{Vol}(M\tra(t)\mathcal{V}_A)$ is related to the entanglement entropy $S_A$ as explained in section~\ref{sec2:Renyi-phase-space}.}
	\label{fig:phasespace-strething}
\end{figure}

This theorem, together with the fact that both $\mathcal{D}_{L}=(\ell^1,\dots,\ell^{2N})$ and $\mathcal{D}_A=(\theta^1,\dots,\theta^{2N_A})$ are symplectic bases, implies the following important property of the subsystem exponent $\Lambda_A$.

\begin{corollary}
	The subsystem exponent is non-negative, 
\begin{equation}
\Lambda_A\geq 0\,.
\label{}
\end{equation}
\end{corollary}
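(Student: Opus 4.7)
The plan is to use Theorem~\ref{th:LambdaA} to reduce $\Lambda_A\geq 0$ to a combinatorial statement about the pivot indices $i_1<\dots<i_{2N_A}$, and then to exploit the fact that $A^*$ is symplectic to pair those indices up so that each pair contributes a non-negative sum of Lyapunov exponents. Concretely, I would import the upper-triangular basis $\tilde{\theta}^k=\ell^{i_k}+\sum_{j>i_k}\tilde{T}^k_j\,\ell^j$ constructed in the proof of that theorem. These $2N_A$ vectors span $A^*$, which is symplectic by hypothesis, so the antisymmetric $2N_A\times 2N_A$ matrix $\Omega_{kl}:=\Omega(\tilde{\theta}^k,\tilde{\theta}^l)$ is non-degenerate, i.e.\ has non-vanishing Pfaffian.

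The key combinatorial observation is that the Lyapunov basis $\mathcal{D}_L$ is itself Darboux with $\Omega(\ell^a,\ell^b)\neq 0$ only when $a+b=2N+1$. Since $\tilde{\theta}^k$ is supported on Lyapunov components with index $\geq i_k$, a non-vanishing $\Omega_{kl}$ requires the existence of some $a\geq i_k$ with $2N+1-a\geq i_l$, that is, $i_k+i_l\leq 2N+1$. Combined with $\lambda_{2N+1-j}=-\lambda_j$ and the decreasing ordering of the Lyapunov spectrum, this yields the pointwise implication
\begin{equation*}
\Omega_{kl}\neq 0\quad\Longrightarrow\quad \lambda_{i_k}+\lambda_{i_l}\geq 0\,.
\end{equation*}

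The final step converts the non-vanishing Pfaffian into a perfect matching of $\{1,\dots,2N_A\}$ supported on non-zero $\Omega_{kl}$. Expanding $\mathrm{Pf}(\Omega)=\sum_\sigma\mathrm{sgn}(\sigma)\prod_{(k,l)\in\sigma}\Omega_{kl}\neq 0$ shows that at least one pair partition $\sigma$ must contribute non-trivially, meaning all $N_A$ of its factors $\Omega_{kl}$ are non-zero. Applying the pointwise inequality to each of its pairs and summing recovers
\begin{equation*}
\Lambda_A\,=\,\sum_{k=1}^{2N_A}\lambda_{i_k}\,=\,\sum_{(k,l)\in\sigma}(\lambda_{i_k}+\lambda_{i_l})\,\geq\,0\,.
\end{equation*}

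The step I expect to scrutinize most is the Pfaffian-to-matching extraction, since \emph{a priori} a non-vanishing Pfaffian could result from delicate cancellations between many pair partitions, none of whose factors are individually all non-zero. That worry is actually unfounded: the Pfaffian is a finite algebraic sum, so a non-zero total forces at least one summand to be non-zero, and that summand picks out a matching with all factors non-zero. With this combinatorial bridge in place, the rest is just tracking the sign of $\lambda_{i_k}+\lambda_{i_l}$ against the ordering of the Lyapunov spectrum.
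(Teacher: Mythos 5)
Your argument is correct, and it rests on the same structural fact as the paper's proof: the non-degeneracy of the symplectic form on $A^*$, combined with the Darboux property of the Lyapunov basis ($\Omega(\ell^a,\ell^b)\neq 0$ only for $a+b=2N+1$) and the support structure of the triangularized vectors $\tilde\theta^k$, forces index pairs with $i_k+i_l\leq 2N+1$ and hence $\lambda_{i_k}+\lambda_{i_l}\geq 0$. Where you differ is in how the pairing is extracted. The paper argues more informally that each $\tilde\theta^k$ built purely from momenta ($i_k>N$) must be accompanied by a partner containing the conjugate position, and then sums pairs together with the leftover singletons, whose exponents are individually non-negative; it leaves implicit that these conjugate partners can be chosen distinctly, i.e.\ that the assignment is injective. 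Your Pfaffian expansion closes exactly that gap: a non-vanishing $\mathrm{Pf}(\Omega_{kl})$ guarantees at least one summand, hence a genuine perfect matching of all $2N_A$ indices with every factor $\Omega_{kl}\neq 0$, and the inequality $\Lambda_A=\sum_{(k,l)\in\sigma}(\lambda_{i_k}+\lambda_{i_l})\geq 0$ follows pair by pair. The cost is that you must invoke the multilinear identity $\det = \mathrm{Pf}^2$ and the pairwise criterion for $\Omega_{kl}\neq 0$; the benefit is a fully explicit bijective pairing that does not require separating the $i_k\leq N$ and $i_k>N$ cases. Both routes use theorem~\ref{th:LambdaA} and the regularity relation $\lambda_{2N+1-i}=-\lambda_i$ in the same way.
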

\begin{proof}
Let us denote the elements of the Lyapunov basis by $\ell^i=Q_i$ for $i\leq N$ and $\ell^i=P_{2N+1-i}$ for $i>N$ so that $(\ell^1,\dots,\ell^{2N})=(Q_1,\dots,Q_N,\,P_N,\dots,P_1)$. Each vector $\tilde{v}_{i_k}$ with $i_k>N$ consists of a linear superposition of momenta $P_i$ only, as follows from (\ref{eq:wk}). As a result, to span a symplectic subspace, for each such $\tilde{v}_{i_k}$ there has to be a $\tilde{v}_{i_{k'}}$ with $i_{k'}\leq 2N+1-i_k$ so to contain the conjugate position $Q_i$ in the linear superposition. Therefore, negative Lyapunov exponents $\lambda_{i_k}$ with $i_k>N$ are paired with positive Lyapunov exponents $\lambda_{i_{k'}}$, resulting in a sum of non-negative terms $\lambda_{i_k}+\lambda_{i_{k'}}\geq 0$ in (\ref{eq:LambdaAsum}).
\end{proof}

We illustrate this result with some examples of subsystems and the associated exponents. Consider a system with $N=2$ degrees of freedom, Lyapunov spectrum 
\begin{equation}
(\lambda_1,\lambda_2,-\lambda_2,-\lambda_1)
\label{}
\end{equation}
and Lyapunov basis $\mathcal{D}_L=(\ell^1,\ell^2,\ell^3,\ell^4)=(Q_1,Q_2,P_2,P_1)$. A subsystem $A$ with $N_A=1$ degree of freedom can be identified by specifying a canonical couple $(\phi,\pi)$. Here we give three examples:
\begin{align}
\label{eq:sub-example}
(1)\,\,&\left\{
\begin{array}{cl}
\phi&=\,Q_1\\[4pt]
\pi&=\,Q_2+P_1
\end{array}
\right.&\Rightarrow\quad T=\left(\begin{array}{cccc}
1 & 0 & 0 & 0\\
0 & 1 & 0 & 1
\end{array}\right)\quad\Rightarrow\quad
\Lambda_A=\lambda_1+\lambda_2\geq 0\,,\\[1em]
(2)\,\,&\left\{
\begin{array}{cl}
\phi&=\,Q_1+Q_2\\[4pt]
\pi&=\,P_2
\end{array}
\right.&\Rightarrow\quad T=\left(\begin{array}{cccc}
1 & 1 & 0 & 0\\
0 & 0 & 1 & 0
\end{array}\right)\quad\Rightarrow\quad
\Lambda_A=\lambda_1-\lambda_2\geq 0\,,
\label{eq:l2-l3}\\[1em]
(3)\,\,&\left\{
\begin{array}{cl}
\phi&=\,Q_1+Q_2\\[4pt]
\pi&=\,P_1
\end{array}
\right.&\Rightarrow\quad T=\left(\begin{array}{cccc}
1 & 1 & 0 & 0\\
0 & 0 & 0 & 1
\end{array}\right)\quad\Rightarrow\quad
\Lambda_A=\lambda_2-\lambda_2=0\,.
\label{}
\end{align} 
In particular the subsystem given by a single couples $(Q_i,P_i)$ has vanishing subsystem exponent $\Lambda_A=0$. Note also that the difference of positive Lyapunov exponents can appear as in example (\ref{eq:l2-l3}). We will reconsider these examples in section~\ref{sec2:constrained-particle} and relate the subsystem exponents $\Lambda_A$ to the production of entanglement entropy.

\subsection{Relation of the exponent $\Lambda_A$ to the Kolmogorov-Sinai entropy rate}\label{sec2:LambdaA-KS}
From an information-theory perspective, the Hamiltonian evolution of a dynamical system with sensitive dependence on initial conditions produces entropy. This is because two initial conditions that are indistinguishable at a fixed resolution will evolve into distinguishable states after a finite time. 
%As a result the coarse-grained entropy of a probability distribution in phase space will increase in time. 
The Kolmogorov-Sinai entropy rate provides a quantitative characterization of this behavior: It measures the uncertainty remaining on the next state of a system, if
an infinitely long past is known \cite{kolmogorov1958new,Sinai:2009,zaslavsky2008hamiltonian,vulpiani2010chaos}. It is defined as follows.

We decompose the phase space $V$ into cells $(\mathcal{C}_1,\dots,\mathcal{C}_n)$ belonging to a partition $\mathcal{P}$. Given a sampling time $\Delta t$, we can compute the probability $\mu(\mathcal{C}_1,\dots,\mathcal{C}_n)$ that a trajectory starting in cell $\mathcal{C}_1$ will successively go through $\mathcal{C}_2$, $\mathcal{C}_3$ and so on. The entropy per unit time with respect to such a given partition is given by Shannon's formula,
\begin{equation}
	\mathfrak{h}(\mathcal{P})=-\lim_{\Delta t\to 0}	
	\lim_{n\to\infty}\frac{1}{n\,\Delta t}\sum_{\mathcal{C}_1,\dots,\mathcal{C}_n}\mu(\mathcal{C}_1,\dots,\mathcal{C}_n)\log\mu(\mathcal{C}_1,\dots,\mathcal{C}_n)\,.
\end{equation}
The Kolmogorov-Sinai entropy rate is then defined as the supremum over all possible partitions:
\begin{equation}
	h_\mathrm{KS}\equiv\mathrm{sup}_{\mathcal{P}}\,\mathfrak{h}(\mathcal{P})\,.
	\label{eq:hKSdef}
\end{equation}
The quantity $h_\mathrm{KS}$ is a global invariant of the system and it provides a quantitative characterization of the notion of deterministic chaos in a Hamiltonian system.

A positive Lyapunov exponent corresponds to the exponential divergence in time of some initially nearby trajectories. This phenomenon results in the unpredictability of the evolution at finite resolution, and therefore contributes to $h_\mathrm{KS}$. Pesin's theorem \cite{pesin1977characteristic,eckmann1985ergodic} states that, for Hamiltonian dynamical systems, the Kolmogorov-Sinai entropy rate is equal to the sum over all the positive Lyapunov exponents of the system. Let us call $N_I\leq N$ the number of non-vanishing positive Lyapunov exponents. Using the ordering (\ref{eq:lambdaOrder}) of the Lyapunov spectrum, we have
\begin{equation}
h_\mathrm{KS}=\sum_{i=1}^{N_I}\lambda_i\,.
\label{eq:Pesin}
\end{equation}
This formula, together with (\ref{eq:LambdaAsum}), clearly shows that the Kolmogorov-Sinai entropy rate provides an upper bound to the characteristic exponent $\Lambda_A$ of a subsystem,
\begin{equation}
\Lambda_A\leq h_\mathrm{KS}\,.
\label{eq:LleqKS}
\end{equation}
In the following we discuss when this inequality is saturated and show that, for a large class of system decompositions, the characteristic exponent $\Lambda_A$ equals the rate $h_\mathrm{KS}$. The following theorem is instrumental.

\begin{theorem}[Subsystem exponent -- generic subsystem]
\label{th:LambdaA-generic}
	The subsystem exponent of a generic subsystem $A$ of dimension $N_A$ is given by the sum of the first $2N_A$ Lyapunov exponents, a special case of (\ref{eq:LambdaAsum}),
	\begin{equation}
	\Lambda_{A\,\mathrm{generic}}=\sum_{i=1}^{2N_A}\lambda_i\,.
	\label{eq:LambdaGeneric}
	\end{equation}
This behavior holds for all subsystems $A\in V$, except for a set of measure zero. 
\end{theorem}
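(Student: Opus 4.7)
The plan is to invoke theorem~\ref{th:LambdaA} and reduce the claim to a genericity statement on the symplectic Grassmannian $\mathcal{G}$ of $2N_A$-dimensional symplectic subspaces of $V^*$. Concretely, the indices $i_k$ produced by step~3 of that theorem satisfy $i_k=k$ for all $k=1,\ldots,2N_A$ if and only if the first $2N_A$ columns of the transformation matrix $T$ in (\ref{eq:Tv}) are linearly independent, i.e.\ the $2N_A\times 2N_A$ minor $\det T_{\leq 2N_A}$ is non-zero. When this holds, formula (\ref{eq:LambdaAsum}) immediately delivers the desired expression $\Lambda_A=\sum_{i=1}^{2N_A}\lambda_i$.

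The condition $\det T_{\leq 2N_A}\neq 0$ is geometrically the requirement that $A^*\cap W=\{0\}$ where $W=\mathrm{span}(\ell^{2N_A+1},\ldots,\ell^{2N})$, or equivalently that the projection of $A^*$ onto $\mathrm{span}(\ell^1,\ldots,\ell^{2N_A})$ parallel to $W$ is a linear isomorphism. The complementary ``bad locus'' is the zero set of the polynomial $\det T_{\leq 2N_A}$ restricted to $\mathcal{G}$, hence a closed algebraic subvariety. Because $\mathcal{G}$ is the irreducible smooth homogeneous space $\mathrm{Sp}(2N)/(\mathrm{Sp}(2N_A)\times\mathrm{Sp}(2N-2N_A))$, this subvariety is either all of $\mathcal{G}$ or a proper subvariety of positive codimension; in the latter case it has measure zero with respect to any smooth measure on $\mathcal{G}$, proving the genericity claim.

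The main obstacle, and the only calculation required, is to exclude the degenerate alternative by exhibiting one symplectic $A^*$ for which $\det T_{\leq 2N_A}\neq 0$. I would do this by explicit construction in the regime $2N_A\leq N$: define
\[
\theta^k=\ell^k\quad(k=1,\ldots,N_A), \qquad \theta^{N_A+k}=\ell^{N_A+k}+\ell^{2N+1-k}\quad(k=1,\ldots,N_A).
\]
A direct verification using the canonical Lyapunov brackets $\{\ell^i,\ell^{2N+1-i}\}=1$ shows that the auxiliary indices $N_A+k$ and $2N+1-k$ never collide with each other's symplectic partner within this range, so that $(\theta^1,\ldots,\theta^{2N_A})$ forms a Darboux basis; consequently $A^*$ is symplectic and by construction the matrix $T_{\leq 2N_A}$ is the $2N_A\times 2N_A$ identity. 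The regime $2N_A>N$ reduces to the preceding one by passing to the complementary subsystem $B$ with $2N_B<N$: theorem~\ref{th:LambdaA} applied to $B$, together with the pair-cancellation identity $\sum_{i=1}^{2N_B}\lambda_i=\sum_{i=1}^{2N_A}\lambda_i$ arising from $\lambda_{2N+1-i}=-\lambda_i$ and the equality $\Lambda_A=\Lambda_B$ noted after theorem~\ref{th:SA}, yields (\ref{eq:LambdaGeneric}) for generic $A$, concluding the argument.
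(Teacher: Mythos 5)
Your proof is correct and shares its backbone with the paper's own argument: both reduce the claim via theorem~\ref{th:LambdaA} to the statement that, for generic symplectic $A^*$, the first $2N_A$ columns of the transfer matrix $T$ in (\ref{eq:Tv}) are linearly independent, so that $i_k=k$ in (\ref{eq:LambdaAsum}). Where you diverge is in how that genericity is certified. The paper realizes the symplectic Grassmannian as full-rank $2N\times 2N_A$ matrices with $\det(T\Omega T\tra)\neq 0$ modulo $\mathrm{GL}(2N_A)$, computes its dimension to be $4N_A(N-N_A)$, and bounds the bad locus by one dimension less using the single polynomial condition $\det T_{\leq 2N_A}=0$; that dimension count by itself already shows the bad locus is proper, so no witness point is required. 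You instead invoke connectedness of the homogeneous space to conclude that the zero set of $\det T_{\leq 2N_A}$ is either everything or measure zero, which obliges you to exhibit an explicit symplectic subspace with $T_{\leq 2N_A}=\mathbbm{1}$. Your construction $\theta^k=\ell^k$, $\theta^{N_A+k}=\ell^{N_A+k}+\ell^{2N+1-k}$ does satisfy the Darboux relations when $2N_A\leq N$ (all cross brackets vanish precisely because $3N_A\leq 2N$ and $4N_A\leq 2N$), and your reduction of the case $2N_A>N$ to the complement $B$ via $\Lambda_A=\Lambda_B$ together with the pair cancellation $\lambda_{2N+1-i}=-\lambda_i$ is sound; you should just add the one-line remark that $A\mapsto A^{\perp_\omega}$ is a diffeomorphism between the two Grassmannians, so that a measure-zero exceptional set for $B$ pulls back to a measure-zero set for $A$. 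The trade-off is that the paper's count is shorter and uniform in $N_A$, while your version makes the properness of the bad locus fully explicit at the cost of a witness construction and a case split.
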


\begin{proof}
The space of $2N_A$-dimensional symplectic subspaces of $V$ has the structure of a differentiable manifold and is called the symplectic Grassmannian $\mathrm{SpGr}(2N_A,V)$. Let us consider the set of points on this manifold where the generic asymptotics (\ref{eq:LambdaGeneric}) does \emph{not} apply. The statement of the theorem is that this set forms a lower dimensional submanifold. All standard measures on differentiable manifolds will therefore assign a measure zero to this subset.\\
By applying theorem \ref{th:LambdaA}, (\ref{eq:LambdaAsum}), we find $\Lambda_{A\,\mathrm{generic}}=\sum_{i=1}^{2N_A}\lambda_i$ whenever the first $2N_A$ columns of the transfer matrix $T$ are linearly independent. Let us therefore analyze for how many system decompositions this does not hold.
The space of $2N_A$-dimensional symplectic subspaces $\mathrm{SpGr}(2N_A,V)$ can be identified with the space of transformation matrices such that the restricted symplectic form $[\Omega]_A$ is non-degenerate, modulo $\mathrm{GL}(2N_A)$:
\begin{equation}
	\mathrm{SpGr}(2N_A,V)=\{T\in\mathrm{Mat}(2N\!\times\!2N_A)|\det{(T\Omega T\tra)}\neq 0\}\,/\,\mathrm{GL}(2N_A)\,.
\end{equation}
This follows from the fact that, for a given choice of Lyapunov basis and of a Darboux basis of $A$, every subspace $A \in  \mathrm{SpGr}(2N_A,V)$ defines a unique transfer matrix $T$. The different basis choices are equivalent to acting with a $\mathrm{GL}(2N_A)$-matrix on $T$ from the left. The space of full rank $(2N\!\times\! 2N_A)$-matrices is $(2N)(2N_A)$-dimensional and $\mathrm{GL}(2N_A)$ is $(2N_A)^2$-dimensional. The condition $\det{(T\Omega T^t)}\neq 0$ only cuts out a lower dimensional submanifold which does not change the dimension. This fact implies that the dimension of $\mathrm{SpGr}(2N_A,V)$ is $4N_A(N-N_A)$.\\
Let us now compare this to the space of subspaces for which the subsystem exponent $\Lambda_A$ is not given by the sum over the first $2N_A$ largest Lyapunov exponents. For this to happen, it is a necessary condition that the first $2N_A$ columns of the transfer matrix are linearly dependent. This space is $(4NN_A-1)$-dimensional which we still need to quotient by $\mathrm{GL}(2N_A)$. Therefore, the subset of spaces of subsystems $A$, for which we find $\Lambda_A\neq\Lambda_{A\,\mathrm{generic}}$, has a dimension of at most $4N_A(N-N_A)-1$. This is a set of measure zero with respect to any standard measure on $\mathrm{SpGr}(2N_A,V)$ because it lies in a lower dimensional submanifold.
\end{proof}

This behavior was conjectured by Zurek and Paz in \cite{Zurek:1994wd} and later discussed by Asplund and Berenstein \cite{Asplund:2015osa} and ourselves \cite{Bianchi:2015fra}.

Of the three examples discussed at the end of section~\ref{sec2:LambdaA}, only the one-dimensional subsystem $(\phi,\pi)$ with $\Lambda_A=\lambda_1+\lambda_2$ is generic, (\ref{eq:sub-example}). Note that most numerical algorithms for the computation of the Lyapunov exponents of a dynamical system start with the computation of the exponential rate of expansion of the volume of a subsystem \cite{bennetin1980lyapunov}. Lyapunov exponents are computed by taking the difference between the exponential rate of expansion of subsystems of different dimension. The efficiency of these algorithms relies on the generic behavior discussed above.\\

Now we investigate when the subsystem exponent  equals the rate $h_{\mathrm{KS}}$ assuming that the subsystem is generic, (\ref{eq:LambdaGeneric}). 

In a stable Hamiltonian system, all Lyapunov exponents vanish. The system becomes unstable as soon as a single Lyapunov exponent turns positive. We call $N_I$ the number of non-vanishing positive Lyapunov exponents. Pesin's formula for the Kolmogorov-Sinai entropy rate then reads $h_\mathrm{KS}=\sum^{N_I}_{i=1}\lambda_i$. On the other hand the characteristic exponent of a generic subsystem $A$ of dimension $N_A$ in the range $N_I\leq 2 N_A\leq 2N-N_I$ is given by $\Lambda_{A\,\mathrm{generic}}=\sum_{i=1}^{N_I}\lambda_i$. Therefore, we have the equality
\begin{equation}
\Lambda_{A\,\mathrm{generic}}=h_\mathrm{KS}\qquad\textrm{for}\qquad N_I\leq 2 N_A\leq 2N-N_I\,,
\label{eq:saturate}
\end{equation}
that identifies subsystems that saturates the inequality (\ref{eq:LleqKS}).

Note that unstable many-body systems often have a number of unstable directions $N_I$ that is much smaller that the number of degrees of freedom of the system, $N_I\ll N$. A generic subsystem that encompasses a fraction $f=N_A/N$ of the full system satisfies (\ref{eq:saturate}) if the fraction is in the range 
\begin{equation}
\frac{N_I}{2N}\,\leq\, f\,\leq 1-\frac{N_I}{2N}\,.
\label{}
\end{equation}
In particular, in the limit $N\to \infty$ with $N_I$ finite, we have $\Lambda_{A}=h_\mathrm{KS}$ for all partitions of the system into two complementary subsystems each spanning a finite fraction $f$ of the system, except for a set of partitions of measure zero.

\section{Proof, part II: quantum ingredients}\label{sec:entropy}%Entanglement entropy, R\'{e}nyi entropy and phase-space volumes
The main result, theorem~\ref{th:SA}, is proven in three steps which heavily rely on the three ingredient presented in the following subsections. We show how the Renyi entropy provides bounds for the entanglement entropy, we explain how the Renyi entropy can be understood as the logarithm of the volume of a region in the dual phase space and finally, we derive the time evolution of the Renyi entropy as the volume deformation of this region under the classical symplectic flow.

\subsection{Upper and lower bounds on the entanglement entropy}
\label{sec2:bounds}
\begin{figure}[t]
	\centering
	\noindent\makebox[\linewidth]{
		\includegraphics[width=.9\linewidth]{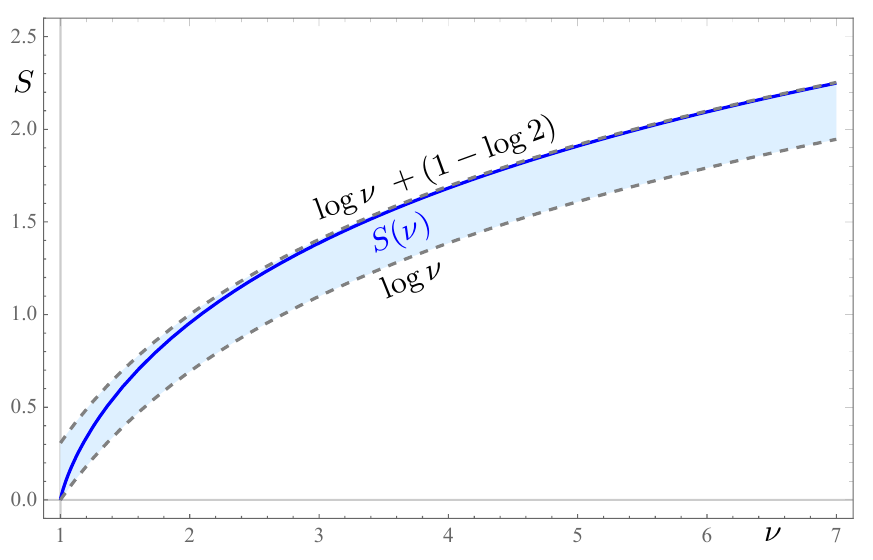}
	}
	\caption{\emph{Bounds on the entanglement entropy}. The plot shows how the contribution $S(\nu)$ to the entanglement entropy coming from a single entangled pair is bounded from below by $\log \nu$ and from above by $\log\nu\;+(1-\log 2)$. For large $\nu$, the asymptotic behavior is $S(\nu)\sim \log\nu\;+(1-\log 2)-O(\nu^{-2})$.}
	\label{fig:Renyi-Entanglement}
\end{figure}
We recall that there are different entanglement measures that quantify the amount of correlations in a state $|\psi\rangle$ with respect to some system decomposition into subsystems $A$ and $B$. Beside the entanglement entropy $S_A(|\psi\rangle)$, we have the class of Renyi entropies defined by
\begin{align}
	R_A^{(n)}(|\psi\rangle)=-\frac{1}{n-1}\log\mathrm{Tr}_{\mathcal{H}_A}(\rho_A{}^n)\,,\label{eq:Sentropy}
\end{align}
where $S_A(|\psi\rangle)=\lim_{n\to1} R_A^{(n)}(|\psi\rangle)$. For a Gaussian state $|J,\eta\rangle$ labeled by a complex structure $J$, all these entropies can be computed directly from the eigenvalues $\pm\ii\nu_i$ of $[J]_A$, the complex structure restricted to the subsystem $A$. If we take the positive value $\nu_i$ of each eigenvalue pair, the Renyi entropy\footnote{From now on, we will refer to the Renyi entropy of order 2 as \emph{the} Renyi entropy.} $R_A=R_A^{(2)}$ and the entanglement entropy $S_A$ are given by
\begin{align}
	R_A=\sum^{N_A}_{i=1}\log(\nu_i)\quad\text{and}\quad S_A=\sum^{N_A}_{i=1}S(\nu_i)\quad\text{with}\quad S(\nu)=\frac{\nu+1}{2}\log\frac{\nu+1}{2}-\frac{\nu-1}{2}\log\frac{\nu-1}{2}\,,
\end{align}
which is derived in appendix \ref{sec2:renyi}.
Here, we derive upper and lower bounds on the entanglement entropy of Gaussian states. Consider the function $S(\nu)$ defined in (\ref{eq:Sentropy}) and the inequality
\begin{equation}
0\leq \;S(\nu)-\log \nu \;<(1-\log 2)\,\approx 0.31
\label{}
\end{equation}
holding for $\nu\geq 1$ as shown in figure~\ref{fig:Renyi-Entanglement}. An immediate consequence of this inequality is that the entanglement entropy of a Gaussian state is bounded from below by the R\'{e}nyi entropy and from above by the R\'{e}nyi entropy plus a state-independent constant,
\begin{equation}
R_A(|J,\zeta\rangle)\;\leq\; S_A(|J,\zeta\rangle)\;<\; R_A(|J,\zeta\rangle)+ (1-\log 2)\,\mathrm{min}(N_A,N_B)\,.
\label{eq:SA-RA}
\end{equation}
This means that the Renyi entropy $R_A$ determines a corridor for the entanglement entropy $S_A$. This implies immediately that both of them will grow asymptotically with the same rate which we use for the main result of this paper.
%Using (\ref{eq:RAvol}) we can express the bound on the entanglement entropy of a Gaussian state in terms of phase-space volumes, namely
%\begin{equation}
%\log\mathrm{Vol}_G(\mathcal{V}_A)\;\leq\; S_A(|J,\zeta\rangle)\;<\; \log\mathrm{Vol}_G(\mathcal{V}_A)+ (1-\log 2)\,\mathrm{min}(N_A,N_B)\,.
%\label{}
%\end{equation}

\subsection{Renyi entropy as phase space volume}
\label{sec2:Renyi-phase-space}
A Gaussian state $|J,\eta\rangle$ equips the dual phase space $V^*$ with a metric $G^{ab}$ defined by (\ref{eq:Jg}), which is really just the covariance matrix of the state. The complex structure $J$ can be expressed in terms of the metric
\begin{equation}
J^a{}_b=-G^{ac}\omega_{cb}\,.
\label{eq:J=Og}
\end{equation}
Furthermore, the restriction of the complex structure to the subsystem $A$ can be written in matrix form as a product of the symplectic $\Omega_A$ and the restriction of the metric,
\begin{equation}
[J]_A=-[G]_A \, \omega_A\,.
\label{}
\end{equation}
In a Darboux basis, where we have $\det\omega_A\,=1$ and $\det [G]_A\,>0$, we find that the determinant of the restriction of the complex structure can be expressed in terms of the phase space volume $\mathrm{Vol}_G(\mathcal{V}_A)$ of a symplectic cube $\mathcal{V}_A$ (spanned by a Darboux basis and with symplectic volume $1$) measured with respect to the induced metric.%\footnote{Note that we can also use $J^a{}_b=-G^{ac}\omega_{cb}$ to find $|\det[\mathrm{i}J]_A|=\det\omega_A\,\det[G]_A=\left(\mathrm{Vol}G(\mathcal{V}_{A^*})\right)^2$. This equation is dual to (\ref{eq:Renyi=Vol2}) and implies that the R\'{e}nyi entropy is equivalently given by $\mathrm{Vol}_g(\mathcal{V}_{A^*})$, the volume measured with respect to the restricted inverse metric $[G]_A$ of the symplectic cube $\mathcal{V}_{A^*}$ in the dual space $A^*$ of linear observables in the subsystem. }
\begin{equation}
|\det[\mathrm{i}J]_A\big|=\det [G]_A\;\det \omega_A=\Big(\mathrm{Vol}_G(\mathcal{V}_A)\Big)^2.
\label{eq:Renyi=Vol2}
\end{equation}
As a result, the R\'{e}nyi entropy of a Gaussian state is given by the logarithm of the phase space volume of a symplectic cube $\mathcal{D}_A$ defining the subsystem, measured with respect to the metric $G^{ab}$ defined by the state,
\begin{equation}
R_A(|J,\zeta\rangle)=\log\mathrm{Vol}_G(\mathcal{V}_A)\,.
\label{eq:RAvol}
\end{equation}
Note that the symplectic cube $\mathcal{V}_V$ associated to a Darboux basis of the full system satisfies $\mathrm{Vol}_G(\mathcal{V}_V)=1$ and therefore the R\'{e}nyi entropy vanishes $R_V(|J,\zeta\rangle)$. On the other hand, the restriction to a subsystem $A$ can result in a larger volume $\mathrm{Vol}_G(\mathcal{V}_A)\geq 1$ and a non-vanishing R\'{e}nyi entropy.

\subsection{Entanglement entropy growth as phase space stretching}\label{sec2:stretching}
Let us consider a one-parameter family of Gaussian states $|J_t,\eta_t\rangle=U(t) |J_0,\eta_0\rangle$ generated under time evolution of some quadratic Hamiltonian. In particular, we have $J_t=M(t)J_0M^{-1}(t)$ where $M(t): V\to V$ is the classical Hamiltonian flow on phase space. We call $G_t$ the time-dependent metric associated with $J_t$, and $G_0$ the initial metric associated with $J_0$. The evolution of the R\'enyi entropy is given by \eqref{eq:RAvol}, where the volume is now measured with respect to the time-varying metric:
\begin{equation}
R_A(U(t) |J_0,\eta_0\rangle)=\log\mathrm{Vol}_{G_t}(\mathcal{V}_A)\, .
\label{eq:RAvol-t}
\end{equation}
In this formula, the symplectic cube $\mathcal{V}_A$ is kept fixed while the metric evolves. However, the same volume is obtained if we let the symplectic cube evolve according to $M\tra(t) \mathcal{V}_A$ for a fixed metric $G_0$. Hence, we can compute
\begin{equation}
R_A(U(t) |J_0,\eta_0\rangle)=\log\mathrm{Vol}_{G_0}(M\tra(t) \mathcal{V}_A)\, .
\label{eq:RAvol-t}
\end{equation}
The symplectic basis of the subsystem $A$ is stretched by the Hamiltonian flow $M\tra(t):V^* \to V^*$ on the dual phase space, and the variation in its volume determines the evolution of the R\'enyi entropy.

Since the absolute difference between the entanglement entropy and the R\'enyi entropy of a Gaussian states is bounded by a state independent constant, we have that:
\begin{equation}
\lim_{t\to\infty} \frac{1}{t} \left[ S_A(U(t)|J_0,\eta_0\rangle)- R_A(U(t)|J_0,\eta_0\rangle)\right] = 0\,,
\end{equation}
i.e., the asymptotic rate of growth of the entanglement entropy and of the R\'{e}nyi entropy coincide. This allows us to compute the asymptotic rate of growth of the entanglement entropy from \eqref{eq:RAvol-t} as:
\begin{equation}
\lim_{t\to\infty} \frac{S_A(U(t)|J_0,\eta_0\rangle)}{t} = \lim_{t\to\infty} \frac{1}{t} \log\mathrm{Vol}_{G_0}(M\tra(t) \mathcal{V}_A)\, ,
\end{equation}
in terms of the stretching of the symplectic cube under time-evolution.

\section{Discussion}\label{sec:discussion}
We discuss the role of interactions in the saturation phase of the entanglement growth, explain the relation to results on quantum quenches, present a conjecture on entanglement and chaos, and summarize our results.

\subsection{Interactions and the production of non-Gaussianities}\label{sec:interaction}
Quadratic Hamiltonians appear naturally in the analysis of small perturbations around equilibrium configurations, both stable and unstable. Let us consider for instance a dynamical system with a ``Mexican hat'' potential and an initial Gaussian state which is sufficiently peaked at the top of the potential. For short times the evolution of this initial state is well described by a perturbative quadratic Hamiltonian with unstable directions as in (\ref{eq:potential}). As a result, if the scales of the problem are sufficiently separated, the entanglement entropy of a subsystem will show an intermediate linear growth with rate $\Lambda_A$, followed by a non-Gaussian phase. In particular the linear growth driven by the perturbative instability stops when the spread of the state starts to probe the bottom of the potential and interactions become non-negligible. As the full Hamiltonian of the system is stable at the non-perturbative level, the entanglement entropy of the subsystem is bounded from above by the entanglement entropy $S_{\mathrm{eq}}$ of the thermal state with the same energy as the initial Gaussian state. In the presence of an equilibration and a thermalization mechanism, the entropy $S_{\mathrm{eq}}$ provides also the saturation value as shown in figure~\ref{fig:A-Illustration}.

Quadratic Hamiltonians appear also in the analysis of small perturbations of classical solutions. In this case the perturbative Hamiltonian inherits the time-dependence of the classical solution. For instance if the classical solution is periodic in time, then it provides a time-dependent background for the perturbations which leads to a perturbative Hamiltonian that is periodic in time. The stroboscopic dynamics of the system can be analyzed with the same methods discussed for the Hamiltonian (\ref{eq:Hperiodic}). In particular, in the presence of parametric resonances, the Floquet exponents of the system determine the subsystem exponent $\Lambda_A$ and the growth of the entanglement entropy as described in theorems~\ref{th:SA} and~\ref{th:SA-generic}. When the conditions (\ref{eq:SAsaturate}) for the subsystem are satisfied, the rate of growth is given by the classical Kolmogorov-Sinai entropy rate as discussed also in \cite{Asplund:2015osa}. After the initial phase of linear growth, two distinct phenomena render the parametric resonance inefficient and lead to a saturation phase. The first phenomenon is dephasing: large perturbations are not harmonic; their period depends on the amplitude of the oscillation and, when the period is driven far from resonance, the periodic background cannot pump energy efficiently into the perturbation. The second phenomenon is backreaction: clearly the linear entropy growth is accompanied by the production of a large number of excitations that at some point start to interact and backreact, thus leading to a saturation phase in which non-Gaussianities cannot be neglected. The phase of linear growth manifests itself only if the typical scales of the problem are sufficiently separated. A preliminary numerical investigation of the effect of interactions and non-Gaussianities on the entanglement growth can be found in \cite{Hackl:2017ndi}.

\subsection{Relation to linear growth in quantum quenches}
Quantum quenches lead also to a phase of linear growth of the entanglement entropy, followed by a saturation phase. This phenomenon has been studied extensively in free field theories \cite{Calabrese:2005in,Calabrese:2004eu,cotler2016entanglement} and in many-body quantum systems \cite{calabrese2007quantum,dechiara2006entanglement,fagotti2008evolution,eisler2008entanglement,lauchli2008spreading,kim2013ballistic,alba2017entanglement}. Despite the similarities in the behavior of the entanglement entropy, the mechanism behind this phenomenon is distinct from the one discussed in this article.

A standard example of global quantum quench is provided by an harmonic lattice similar to the one discussed in section~\ref{sec:periodic-quenches}. The Hamiltonian of the system consists of two terms,
\begin{equation}
	H_\kappa=\frac{1}{2}\sum^{N}_{i=1}\big(p_i^2+\,\Omega^2_0\;q_i^2\big)\; +\;\,\frac{1}{2}\kappa\sum^{N}_{i=1} \,(q_{i+1}-q_i)^2\,,
	\label{eq:Hquench}
\end{equation}
the first term is ultralocal, while the second encodes the coupling of first neighboring oscillators. A quantum quench consists in preparing the system in the ground state $|\psi_0\rangle$ of the Hamiltonian $H_0$ with vanishing coupling $\kappa=0$. At the time $t=0$ the coupling is instantaneously switched on, and the state is let to evolve unitarily, $|\psi(t)\rangle=e^{\text{i}H_\kappa\,t}|\psi_0\rangle$. The entanglement entropy of a local subset of the lattice evolves in a way similar to the one depicted in Fig.~\ref{fig:A-Illustration}. Instabilities play no role in this phenomenon. In fact the Hamiltonian (\ref{eq:Hquench}) is stable for $\kappa>-\Omega^2_0/4$, as it can be seen from Eq.~(\ref{eq:modek}). The relevant mechanism for the linear growth of the entanglement entropy in this quantum quench is not instabilities, but transport instead. The quench results in the local production of quasi-particles which travel at a finite speed. The phase of linear growth can be understood to be the result of the entanglement produced by the free propagation of entangled couples of quasi-particles, with the entanglement production rate determined by their propagation speed \cite{Calabrese:2005in,Calabrese:2004eu,cotler2016entanglement,alba2017entanglement}.

Interactions or coupling between many degrees of freedom, together with propagation of quasi-particles, play a key role in the phenomenon of entanglement growth in quantum quenches. On the other hand, the phenomenon studied in this paper relies on the existence of instabilities of some modes of a quantum system, as discussed in section~\ref{sec:applications-finite-dof} and 
\ref{sec:qft}. The difference between the two phenomena is easily illustrated by the case of bosonic systems and Gaussian states for which formula (\ref{eq:sentropy}) holds, \cite{braunstein2005quantum,ferraro2005gaussian,weedbrook2012gaussian,adesso:2014co}
\begin{equation}
	S_A=\sum_{i=1}^{N_e}\Big(  \frac{\nu_i+1}{2}\log\frac{\nu_i+1}{2}-\frac{\nu_i-1}{2}\log\frac{\nu_i-1}{2}\Big)\,.
	\label{eq:sentropy0}
\end{equation}
In the case of quantum quenches, the number of entangled pairs $N_e$ with fixed weight $\nu_i$ grows linearly in time until saturation. On the other hand, in the presence of instabilities, unstable modes have weight $\nu_i$ which grows exponentially in time until saturation, therefore leading to an entanglement growth of the form depicted in Fig.~\ref{fig:A-Illustration}. While the linear regime for quantum quenches can only be seen for a sufficiently large number of degrees of freedom in the system, linear growth due to instabilities can already occur for a system with two degrees of freedom and a single instability \cite{Hackl:2017ndi}. As a result, despite the intriguing similarity, the two phenomena are distinct.

\subsection{A conjecture on entanglement, chaos and thermalization times}
There is an intimate relationship between chaos, thermalization and entanglement \cite{polkovnikov2011colloquium,gogolin2016equilibration,d2016quantum,deutsch_91,srednicki_94,rigol_dunjko_08}. Here we discuss how semiclassical methods allow us to estimate the rate of growth of the entanglement entropy in the early phase of the thermalization process.

Let us consider a classical Hamiltonian system with linear phase space $(\mathbb{R}^{2N},\Omega)$ and a Hamiltonian $H$ which does not depend on time so that, as a result, the energy of the system is conserved. We assume also that the Hamiltonian is bounded from below and, at fixed energy, trajectories in phase space are bounded. This classical system displays a chaotic behavior if its Kolmogorov-Sinai entropy rate is non-vanishing, i.e. $h_{\mathrm{KS}}>0$  with $h_{\mathrm{KS}}$ defined by (\ref{eq:hKSdef}), \cite{gutzwiller2013chaos,haake2013quantum,reichl2013transition}. We are interested in the process of thermalization in the associated quantum system with Hamiltonian $H$. We argue that the Kolmogorov-Sinai entropy rate $h_{\mathrm{KS}}$ studied in this paper plays a central role in determining the relevant time scale in the process of quantum thermalization.\\

An isolated quantum system thermalizes when observables that probe only part of the system cannot distinguish a pure state from a thermal state. More precisely, let us consider a pure state and a thermal state with the same energy,
\begin{equation}
|\psi_t\rangle=e^{-\text{i}Ht}\,|\psi_0\rangle \qquad \text{and}\qquad \sigma=\frac{e^{-\beta H}}{Z}\,.
\label{eq:}
\end{equation}
The requirement that they have the same energy fixes the temperature $\beta^{-1}$ of the thermal state, i.e., $E=\langle\psi_t|H|\psi_t\rangle\,=\,\text{Tr}(H\sigma)$. Now we consider a subsystem $A$ and the subalgebra of bounded observables\footnote{Bounded observables have finite norm defined as $\|\mathcal{O}\|^2=\text{Tr}(\mathcal{O}^\dagger\mathcal{O})<\infty$.} $\mathcal{O}_A$ in $A$. We say that the subsystem $A$ thermalizes if all bounded observables $\mathcal{O}_A$ attain a thermal expectation value, i.e.,
\begin{equation}
\langle \psi_t|\mathcal{O}_A|\psi_t\rangle\,\longrightarrow\, \text{Tr}(\mathcal{O}_A\,\sigma)\,.
\label{eq:}
\end{equation}
This condition can be formulated in terms of entanglement between the subsystem $A$ and its complement $B$. Let us define the restricted states\footnote{Note that the operator $\tilde{H}_A$ is defined in terms of the restricted thermal state and in general it does not coincide with the restriction of the Hamiltonian $H$ to the subsystem $A$.} 
\begin{equation}
\rho_A(t)=\text{Tr}_B\big(|\psi_t\rangle\langle\psi_t|\big)\qquad \text{and} \qquad \sigma_A=\text{Tr}_B(\sigma)\,\equiv \frac{e^{-\beta \tilde{H}_A}}{Z_A}\,.
\label{eq:sigma}
\end{equation}
Thermalization in $A$ is a measure of how distinguishable is the restricted states $\rho_A(t)$ from the restricted thermal state $\sigma_A$. The relative entropy \cite{vedral2002role,ohya2004quantum},
\begin{equation}
S(\rho_A(t)\|\sigma_A)\equiv \text{Tr}_A(\rho_A\log\rho_A\;-\rho_A\log\sigma_A)\,,
\label{eq:}
\end{equation}
provides a measure of such distinguishability. In fact, using the inequalities $S(\rho\|\sigma)\geq \frac{1}{2}\|\rho-\sigma\|^2$ together with the Schwarz inequality $\|\sigma\|\geq \text{Tr}\big(\mathcal{O}\sigma\big)/\|\mathcal{O}\|$, one can prove the inequality
\begin{equation}
\frac{\Big(\langle \psi_t|\mathcal{O}_A|\psi_t\rangle\,-\, \text{Tr}(\mathcal{O}_A\,\sigma)\Big)^2}{2\,\|\mathcal{O}_A\|^2}\leq\;S(\rho_A(t)\|\sigma_A)\,,
\label{eq:}
\end{equation}
which holds for all bounded observables in $A$. Therefore, proving $S(\rho_A(t)\|\sigma_A)\to 0$ as $t\to\infty$ provides a proof of thermalization in $A$. Now, the relative entropy can be expressed in turn as the sum of two terms,
\begin{equation}
S(\rho_A(t)\|\sigma_A)\;=\; \Big(\,S_A^{\text{eq}}(E)-S_A(t)\,\Big)\;+\;\beta\,\Big(\langle \psi_t|\tilde{H}_A|\psi_t\rangle\,-\, \text{Tr}(\tilde{H}_A\,\sigma)\Big)\,.
\label{eq:}
\end{equation}
The first term is the difference between the equilibrium entropy $S_A^{\text{eq}}(E)=-\text{Tr}_A(\sigma_A\log\sigma_A)$ and the entanglement entropy $S_A(t)$ of the subsystem. The second term measures energy flow between the subsystem $A$ and its complement, as measured by the effective Hamiltonian $\tilde{H}_A$ of the subsystem defined in Eq.~(\ref{eq:sigma}). At equilibrium, both terms vanish independently. This paper and the following conjecture focus on the evolution of the first term, i.e., the growth and saturation of the entanglement entropy $S_A(t)$.\\

When a subsystem thermalizes, the entanglement entropy $S_A(t)$ approaches the equilibrium value $S_A^{\text{eq}}(E)$. The eigenstate thermalization hypothesis (ETH) \cite{deutsch_91,srednicki_94,rigol_dunjko_08} provides a sufficient condition for such subsystem thermalization to occur. In a chaotic quantum system with local interactions one observes that energy eigenstate, $H|E_n\rangle=E_n|E_n\rangle$, in the bulk of the energy spectrum have a non-trivial entanglement structure: their restriction to a local subsystem results in a thermal state of the form of Eq.~(\ref{eq:sigma}), i.e., $\text{Tr}_B(|E_n\rangle\langle E_n|)\approx \sigma_A(E_n)$. As a result, the restriction $\rho_A(t)$ of a pure state $|\psi_t\rangle=\sum_n e^{-\text{i}E_n t}c_n|E_n\rangle$ with support in a narrow band of energy $E$ is also well approximated by a thermal state when averaged over time, i.e. $\frac{1}{T}\int_0^T\rho_A(t)dt\;\approx \sigma_A(E)$ for large $T$. This condition is sufficient to prove thermalization in average, but it does not provide a time-scale for the thermalization process.\\

We propose a conjecture which complements previous arguments to the quantum thermalization of subsystems \cite{polkovnikov2011colloquium,gogolin2016equilibration,d2016quantum,deutsch_91,srednicki_94,rigol_dunjko_08}. The conjecture applies to semiclassical states and provides a time-scale for subsystem thermalization:
\begin{itemize}[leftmargin=*,noitemsep]
\item[] \emph{Given an initial state $|\psi_0\rangle=\sum_n c_n |E_n\rangle$ peaked on a classical configuration of energy $E=\langle\psi_0|H|\psi_0\rangle$ with $E$ large compared to the energy gap of the system, and a local subsystem $A$ such that its initial entanglement entropy is small compared to the thermal entropy at the same energy, $S_A(|\psi_0\rangle)\ll S_A^{\mathrm{eq}}(E)$, the time-evolution of the entropy $S_A(t)\equiv S_A(e^{-\mathrm{i}H t}|\psi_0\rangle)$ displays a linear phase $S_A(t)\sim \Lambda_A(E)\,t$ before saturating to the plateau at $S_A^{\mathrm{eq}}(E)$ as described in figure~\ref{fig:A-Illustration}. The rate $\Lambda_A(E)$ can be computed from the classical chaotic dynamics of the Hamiltonian $H$ on the energy-shell $E$. The rate is given by the subsystem exponent discussed in section~\ref{sec2:LambdaA} and, apart from its energy, it is largely independent of the initial state $|\psi_0\rangle$. In particular, the subsystem exponent sets the time-scale of subsystem thermalization, $\tau_\mathrm{eq}\sim S_A^{\mathrm{eq}}(E)/\Lambda_A(E)$.}
\end{itemize}
The conjecture is based on theorem~\ref{th:SA} presented in section~\ref{sec:growth}, together with semiclassical arguments. Let us consider a classical solution $(q^\text{cl}_i(t),p^\text{cl}_i(t))$ with energy $H(q^\text{cl}_i(t),p^\text{cl}_i(t))=E$. At the leading order in a semiclassical expansion, the evolution of a perturbation $\xi^a=(q^\text{cl}_i(t)+\delta q_i,\,p^\text{cl}_i(t)+\delta p_i)$ of the classical solution is governed by the perturbative Hamiltonian
\begin{equation}
H_{\text{pert}}(t)=\frac{1}{2}h_{ab}(t)\,\delta\xi^a\,\delta\xi^b
\label{eq:}
\end{equation}
where $\delta \xi^a=(\delta q_i,\delta p_i)$ and
\begin{equation}
h_{ab}(t)=\left.\frac{\partial^2 H}{\partial \xi^a\,\partial\xi^b}\right|_{\xi^a_{\text{cl}}(t)}\,.
\label{eq:}
\end{equation} 
The Lyapunov exponents of a non-perturbative chaotic system with Hamiltonian $H$ can be computed directly from the perturbative Hamiltonian $H_{\text{pert}}(t)$, which is quadratic time-dependent and for which our theorem~\ref{th:SA-generic} applies. In fact, because of ergodicity of a chaotic system, all trajectories on the same energy-shell $E$ (except a set of measure zero) have the same Lyapunov exponents.\footnote{For ergodic dynamics, time averages along an endless trajectory equal ensemble averages over the energy shell. Short periodic orbits may still retain their individual Lyapunov exponents, but they form a set of measure zero.} Moreover, under standard assumptions of regularity \cite{eckmann1985ergodic}, the Kolmogorov-Sinai rate $h_\mathrm{KS}(E)$ on the shell of energy $E$ is  given by Pesin's formula (\ref{eq:Pesin}) in terms of the positive Lyapunov exponents $\lambda_i(E)$. We consider now a symplectic subsystem $(A,\Omega_A)$ and define its subsystem exponent $\Lambda_A(E)$ as in section~\ref{sec2:LambdaA}. This is also the rate of growth of the entanglement entropy derived assuming a Gaussian state and a quadratic Hamiltonian in theorem~\ref{th:LambdaA}. The conjecture extends this result to a full non-quadratic system with bounded and chaotic motion, within the regime of validity of the semiclassical expansion. The inequality $\Lambda_A(E)\leq h_\mathrm{KS}(E)$ provides an upper bound on the rate of entanglement growth during the linear phase. Clearly, the linear phase ends when the semiclassical approximation breaks down, i.e. when the spread of the wavefunction is so large that higher-order terms in the expansion $H=E+H_{\text{pert}}(t)+\ldots$ cannot be neglected. An estimate of this time is provided by $\tau_\mathrm{eq}\sim S_A^{\mathrm{eq}}(E)/h_\mathrm{KS}(E)$ which measures the ratio between the accessible volume in phase space and the rate of growth of the phase space volume occupied by the perturbation.\\

The conjectured behavior of the entanglement entropy $S_A(t)$ depicted in figure~\ref{fig:A-Illustration} is expected to manifest itself only in the regime where the semiclassical approximation holds. This conjecture can be tested on a model system such as the one described by the Hamiltonian
\begin{equation}
H=\frac{1}{2}(p_x^2+p_y^2+p_z^2)+\frac{1}{2}(x^2y^2+x^2z^2+y^2z^2)\,.
\label{eq:}
\end{equation}
This is a well-studied model which appears in the analysis of the homogeneous sector of Yang-Mills gauge theory \cite{biro1994chaos,martens1989classical}. Its Lyapunov exponents are known to scale with the energy as $\lambda_i(E)\sim E^\frac{1}{4}$ and its equilibrium entropy, estimated as the $\log$ of the phase space volume at fixed energy, scales as $S^{\mathrm{eq}}(E)\sim \log E$. As a result, for a semiclassical initial state of energy $E$ we expect our conjecture to apply: the entanglement entropy of a subsystems such as $(x,p_x)$ is expected to initially grow linearly with a rate $\sim E^\frac{1}{4}$ and then saturate in a time $\tau_{\text{eq}}\sim E^{-1/4} \log E$. This behavior can in principle be tested via numerical investigations. The numerical analysis involves the unitary evolution of a pure state under a chaotic quantum Hamiltonian, which is beyond the scope of the present paper. A preliminary numerical analysis of the growth of the entanglement entropy in interacting systems prepared in a semiclassical state can be found in \cite{Hackl:2017ndi}.

We note that the conjecture is expected to apply only to initial states which are semiclassical, i.e. states with average energy much larger than the energy gap, small spread in energy and, in general, small spread around a point in phase space.
On the other hand, when the energy $E$ of the initial state is comparable to the energy gap of the Hamiltonian, classical orbits of that energy have an action comparable to $\hbar$ and there is no reason to expect that they provide a useful tool for predicting the behavior of the entanglement entropy in the linear regime of figure~\ref{fig:A-Illustration}. In fact, recent results from quantum field theories with a gravity duals \cite{Maldacena:2015waa} show that\,---\,at low energy\,---\,the rate of growth of the entanglement entropy is bounded from above by the energy of the subsystem divided by $\hbar$ and therefore deviates from the semiclassical prediction \cite{Berenstein:2015yxu}.

\subsection{Summary}
We studied the relationship between entropy production in classical dynamical systems and the growth of the entanglement entropy in their quantum analogue in the semiclassical regime. Most importantly, we found that in both cases the production rates are given by a sum over Lyapunov exponents $\lambda_i$ characterizing stable and unstable phase space directions. For classical systems, there is a standard notion of rate, the Kolmogorov-Sinai entropy rate
\begin{equation}
	h_{\mathrm{KS}}=\sum_{\lambda_i>0}\lambda_i
\end{equation}
given by the sum over all positive Lyapunov exponents. We have shown that for the associated quantum system and a subsystem $A$, the production rate of the entanglement entropy $S_A(t)\sim \Lambda_A\,t$ is given by a subsystem exponent $\Lambda_A$. We have shown that $\Lambda_A\leq h_{\mathrm{KS}}$ and found that this inequality is saturated for sufficiently large subsystems. Moreover we found that the rate $\Lambda_A$ is independent of the initial state of the system and\,---\,except for a set of measure zero of subsystems\,---\,it depends on the choice of subsystem $A$ only via its classical dimension $N_A$, i.e.
\begin{align}
\Lambda_{A\,\mathrm{generic}}=\sum^{2N_A}_{i=1}\lambda_{i}\,,
\end{align}
where $\lambda_i$ are the $2N_A$ largest Lyapunov exponents of the system. Our rigorous derivation of this result is based on the assumption of unstable quadratic Hamiltonian and Gaussian initial state. The derivation takes into account the case of time-dependent Hamiltonians with Floquet instabilities.

The derivation of the main theorem proving $S_A(t)\sim \Lambda_A\,t$ consists of three parts. First, the subsystem exponent $\Lambda_A$ is introduced at the classical level as a generalization of Lyapunov exponents and defined to encode the exponential rate of growth of the volume of a symplectic cube in a subsystem under Hamiltonian evolution. Second, the time evolution of a Gaussian initial state through an unstable quadratic Hamiltonian is conveniently encoded in terms of complex structures or equivalently phase space metrics and their classical Hamiltonian flow. Third, the evolution of entanglement entropy is shown to be asymptotically the same as the one of the Renyi entropy which can then be shown to grow with the rate of the subsystem exponent $\Lambda_A$. We interpret the exponent $\Lambda_A$ as a quantum analogue of the Kolmogorov-Sinai entropy rate of a given subsystem $A$. 

The predicted linear growth of the entanglement entropy shows up in a wide range of physical systems such as unstable quadratic potentials, periodic quantum quenches in many-body quantum systems and instabilities in quantum field theory models. We presented three examples of the latter where entanglement is produced through different mechanisms, namely unstable modes due to a symmetry-breaking instability, parametric resonance in models of post-inflationary reheating, and cosmological perturbations in an inflationary spacetime.

We believe that our results are also relevant in the context of thermalization of isolated quantum systems. A subsystem of a chaotic quantum system is expected to thermalize with equilibrium entropy $S_{\mathrm{eq}}(E)$ determined by the average energy $E$ of the initial state. In the semiclassical regime we conjecture that the time-scale of this equilibration process is $\tau_\mathrm{eq}\sim S_{\mathrm{eq}}(E)/\Lambda_A(E)$ where $\Lambda_A(E)$ is the subsystem exponent of the energy-shell $E$.\\

\medskip

\acknowledgments
We thank Abhay Ashtekar for extensive discussions on the use of complex structures in quantum field theory, and Alejandro Satz for fruitful comments on the use of smearing functions. We thank also Marcos Rigol and Ranjan Modak for discussions on extensions of the presented results to the case of non-Gaussian states. EB thanks Berndt M\"uller for inspiring conversations on the Kolmogorov-Sinai entropy rate. EB acknowledges also extensive discussions with Renaud Parentani and Bei-Lok Hu which took place during the $22^{\mathrm{nd}}$ Peyresq Physics workshop. LH thanks Pavlo Bulanchuk for a suggestion leading to the geometric representation of the R\'{e}nyi entropy. We thank Hal Haggard, Carlo Rovelli and Matteo Smerlak for multiple discussions and feedback at various stages of this project. The work of EB is supported by the NSF grant PHY-1404204. LH is supported by a Frymoyer fellowship. NY acknowledges support from CNPq, Brazil and  from the NSF grant PHY-1505411. This research was supported in part by the Perimeter Institute for Theoretical Physics.

\newpage
\appendix
\section{Dynamical systems and Lyapunov exponents}\label{app}
We summarize relevant properties of Lyapunov exponents in the context of Hamiltonian systems. In particular, we make precise the notion of \emph{regular Lyapunov system}.

\subsection{Linear Hamiltonian systems}\label{app:Hamiltonian}
We consider a $2N$-dimensional linear phase space $V$ with symplectic form $\omega$. A time-dependent Hamiltonian $H$ is a smooth map
\begin{equation}
H: V\times\mathbb{R}\to \mathbb{R}: (\xi,t)\to H(\xi,t)\,.
\end{equation}
The equations of motion are given by
\begin{equation}
\dot{\xi}^a(t)=\Omega^{ab}(dH)_b(t)\,,
\end{equation}
where $\Omega^{ab}$ satisfies $\omega_{ac}\Omega^{bc}=\delta^b_a$ and $(dH)_b(t)$ is the gradient of $H$ at time $t$. The solution of these equations can be conveniently described by a flow
\begin{equation}
\Phi_t: V\to V: \xi\to \Phi_t\,\xi\,.
\end{equation}
This map is a diffeomorphism that preserves the symplectic form $\omega$, namely the push-forward satisfies $(\Phi_t)_*\omega=\omega$. For a given point $\xi_0\in V$, the push-forward $(\Phi_t)_*$ maps a tangent vector $\delta\xi\in T_{\xi_0} V$ to the tangent vector $(\Phi_t)_*\delta\xi\in T_{\Phi_t(\xi_0)}V$. Due to the linearity of $V$, we can identify the tangent spaces at all points with $V$ itself. Formally, we have an isomorphism $\phi_{\xi}: V\to T_{\xi}V$ that maps $v\in V$ to the tangent vector $\phi_{\xi}v\in T_\xi V$ that acts on a function $f: V\to\mathbb{R}$ as $\phi_{\xi}v(f)=\frac{d}{dt}f(\xi+tv)$. Using $\phi_\xi$, we can to define the linear map $M_{\xi_0}(t): V\to V$
\begin{equation}
M_{\xi_0}(t)=\phi_{\Phi_t(\xi_0)}^{-1}\circ(\Phi_t)_*\circ\phi_{\xi_0}\,,
\end{equation}
that corresponds to the above push-forward after we identify $T_{\xi_0}V$ and $T_{(\Phi_t)\xi_0}V$ with $V$.\par
In the special case, where the Hamiltonian $H$ is given by a linear quadratic function
\begin{equation}
H(\xi,t)=f_a(t)\xi^a+\frac{1}{2}h_{ab}(t)\xi^a\xi^b
\end{equation}
for every $t$, the Hamiltonian flow is given by an inhomogeneous symplectic transformation $(M(t),\eta(t))$ via
\begin{equation}
\Phi_t\xi_0=M(t)\xi_0+\eta(t)\,,
\end{equation}
whose differential is given by $M_{\xi_0}(t)=M(t)$, independent of $\xi_0$. This follows from the fact that a quadratic Hamiltonian gives rise to linear and homogeneous equations of motion. The symplectic group element $M(t)$ is formally given by the time-ordered exponential
\begin{equation}
M(t)^a{}_b=\mathcal{T}\exp\left(\int^t_0 dt' K(t')^a{}_b\right)\quad \text{with}\quad K(t)^a{}_b=\Omega^{ac}h(t)_{cb}\,,
\end{equation}
where the generator $K(t)$ is an element of the symplectic Lie algebra $\mathrm{sp}(2N)$.\par
In the general case, where $H$ is not quadratic, we can still find the time-dependent generator $K_{\xi_0}(t)^a{}_b=\Omega^{ac}h_{\xi_0}(t)_{cb}$. In this case, however, the generator also depends on the initial $\xi_0$. To find $h_{\xi_0}(t)_{cb}$, we just need to Taylor expand the Hamiltonian $H(t)$ along the trajectory $\xi(t)=\Phi_t\xi_0$ which amounts to finding its Hessian
\begin{equation}
h_{\xi_0}(t)_{ab}=\partial_a\partial_b\,H(t)\big\rvert_{\Phi_t\xi_0}\,.
\end{equation}
At this point, we understand that the difference between the special (quadratic) and general case (arbitrary Hamiltonian) in regards of the linear map $M_{\xi_0}(t)$ are the following:
\begin{itemize}
	\item If the Hamiltonian is quadratic or affine quadratic, the linear map $M(t)$ describing the push-forward of the Hamiltonian flow $\Phi_t$ is independent of the starting point $\xi_0$ and completely characterized by the quadratic part $h(t)_{ab}$ of $H(t)$.
	\item For a more general Hamiltonian, we can still compute its quadratic part $h_{\xi_0}(t)_{ab}$ as Hessian of $H(t)$ along the trajectory $\xi(t)$. This means $h_{\xi_0}(t)_{ab}$ depends on the initial condition $\xi_0$ and the corresponding solution $\xi(t)$ with $\xi(t)=\xi_0$. In particular, the quadratic map $M_{\xi_0}(t)$ will differ for different initial conditions $\xi_0$.
\end{itemize}
The linear symplectic map $M_{\xi_0}(t)$ contains all the information about how two sufficiently close trajectories converge or diverge. This behavior will be captured in the so called Lyapunov exponents.\par
In order to define Lyapunov exponents, we need to equip phase space $V$ with a positive definite metric $g_{ab}$ that gives rise to a norm $\lVert\delta\xi\rVert=\sqrt{g_{ab}\delta\xi^a\delta\xi^b}$. Equivalently, we can use the inverse metric $G^{ab}$ to define the norm $\lVert\ell\rVert=\sqrt{G^{ab}\ell_a\ell_b}$ on the dual phase space $V^*$. We will show that Lyapunov exponents are actually independent of the specific choice of a positive metric. In order to show this, it is useful to have the following theorem at hand.
\begin{proposition}
	Given a finite dimensional, real vector space $V$ and two distinct positive metrics $g$ and $\tilde{g}$, we can compute the following two values
	\begin{equation}
	a:=\min_{\lVert v\rVert_g=1}\lVert v\rVert_{\tilde{g}}>0\,,\quad b:=\max_{\lVert v\rVert_g=1}\lVert v\rVert_{\tilde{g}}>0\,,
	\end{equation}
	which allow us to relate norms and angles measured by the different metrics:
	\begin{itemize}
		\item \textbf{Norm inequality}\\
		Given a vector $v\in V$, its norm $\lVert v\rVert_{\tilde{g}}$ with respect to $\tilde{g}$ is related to $\lVert v\rVert_g$ via:
		\begin{equation}
		a\lVert v\rVert_{g}\leq\lVert v\rVert_{\tilde{g}}\leq b\lVert v\rVert_{g}\,.\label{eq:norm-inq}
		\end{equation}
		\item \textbf{Angle inequality}\\
		Given an angle $\tilde{\psi}$ between two vectors measured with respect to $\tilde{g}$, it is related to the angle $\psi$ measured with respect to $g$ via the following inequality:
		\begin{equation}
		\frac{1-(b/a)^2\tan^2{(\psi/2)}}{1+(b/a)^2\tan^2{(\psi/2)}}\leq\cos\tilde\psi\leq\frac{1-(a/b)^2\tan^2{(\psi/2)}}{1+(a/b)^2\tan^2{(\psi/2)}}\,.\label{eq:angle-inq}
		\end{equation}
		This inequality can be simplified to the slightly weaker version given by:
		\begin{equation}
		\frac{a\psi}{b}\leq \tilde{\psi}\leq \frac{b\psi}{a}\,.\label{eq:angle-inq-simple}
		\end{equation}
		\item \textbf{Volume inequality}\\
		Given the $d$-volume $\mathrm{Vol}_{\tilde{g}}(\mathcal{V}_A)$ of some region $\mathcal{V}_A$ in an arbitrary $d$-dimensional subspace $A\subset V$ measured by the metric $\tilde{g}$, it is related to the $d$-volume $\mathrm{Vol}_{g}(\mathcal{V}_A)$ measured by $g$ via the following inequality:
		\begin{equation}
		a^{d}\,\mathrm{Vol}_{\tilde{g}}(\mathcal{V}_A)\leq\mathrm{Vol}_{\tilde{g}}(\mathcal{V}_A)\leq b^{d}\, \mathrm{Vol}_{g}(\mathcal{V}_A)\,.
		\label{eq:volume-inq}
		\end{equation}
	\end{itemize}
If consider the same equations for the dual phase space $V^*$ with the replacements $g\to G$ and $\tilde{g}\to \tilde{G}$, all inequalities hold if we replace $a\to 1/b$ and $b\to1/b$.
\end{proposition}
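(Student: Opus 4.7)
The plan is to establish each of the three inequalities in turn, beginning with the observation that $a$ and $b$ are well-defined: the unit sphere $\{v\in V\mid \lVert v\rVert_g=1\}$ is compact (since $V$ is finite-dimensional and $g$ is positive definite) and the function $v\mapsto \lVert v\rVert_{\tilde g}$ is continuous, so the minimum $a$ and the maximum $b$ are attained. Moreover $a>0$ because $\tilde g$ is positive definite, so $\lVert v\rVert_{\tilde g}>0$ whenever $v\neq 0$.

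For the norm inequality (\ref{eq:norm-inq}), the argument is a one-line scaling: if $v\neq 0$, then $w=v/\lVert v\rVert_g$ lies on the $g$-unit sphere, so by definition $a\leq \lVert w\rVert_{\tilde g}\leq b$; multiplying through by $\lVert v\rVert_g$ gives the claim, which also covers the trivial case $v=0$.

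For the volume inequality (\ref{eq:volume-inq}), I would pass to the subspace $A$ and choose a basis $(e_1,\ldots,e_d)$ of $A$ that is $g$-orthonormal. In this basis the matrix of $g|_A$ is the identity, and the matrix of $\tilde g|_A$ is symmetric and positive definite. Its eigenvalues are exactly the values $\tilde g(u,u)=\lVert u\rVert_{\tilde g}^{\,2}$ for $g$-unit eigenvectors $u\in A$, which by the norm inequality satisfy $a^2\leq \lVert u\rVert_{\tilde g}^{\,2}\leq b^2$. Hence $a^{2d}\leq \det(\tilde g|_A)\leq b^{2d}$. Since the volume forms are related by $\mathrm{Vol}_{\tilde g}(\mathcal V_A)=\sqrt{\det(\tilde g|_A)/\det(g|_A)}\;\mathrm{Vol}_g(\mathcal V_A)$ and $\det(g|_A)=1$ in our basis, taking square roots yields (\ref{eq:volume-inq}). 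The dual-space version follows by noting that the inverse metric $G$ on $V^*$ is represented, in the dual basis, by the inverse of the matrix of $g$, so extremal eigenvalue ratios invert and $(a,b)$ are replaced by $(1/b,1/a)$.

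The angle inequality (\ref{eq:angle-inq}) is the main obstacle and I would attack it by reduction to two dimensions. If the two vectors are collinear or antiparallel the statement is trivial, so assume they span a $2$-plane $P\subset V$. Restricting both metrics to $P$ and simultaneously diagonalizing $\tilde g|_P$ with respect to $g|_P$ gives an orthonormal basis $(e_1,e_2)$ for $g|_P$ in which $\tilde g|_P=\mathrm{diag}(\mu_1,\mu_2)$ with $a^2\leq \mu_i\leq b^2$ (by the same eigenvector argument as above). Parametrizing the two unit vectors in this basis by their polar angles and computing $\cos\tilde\psi$ directly, one can then use the tangent half-angle substitution $t=\tan(\psi/2)$, which converts the trigonometric identities into rational expressions in $t$ whose extrema over $\mu_1/\mu_2\in[a^2/b^2,b^2/a^2]$ yield precisely the bound (\ref{eq:angle-inq}); the extremes are attained when the two eigenvalues saturate the bound ratio $(b/a)^2$. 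The weaker form (\ref{eq:angle-inq-simple}) is then deduced from (\ref{eq:angle-inq}) either by the elementary estimate $\cos\tilde\psi \geq \cos(b\psi/a)$ obtained from the monotonicity of $\tan$ and the inequality $\tan(x)\geq x$ for $x\in[0,\pi/2)$, or directly from the observation that in the diagonal basis the map $e_i\mapsto \sqrt{\mu_i}\, e_i$ realizes $\tilde g$ as $g$ pulled back through a linear map whose ratio of singular values is bounded by $b/a$, and such a map distorts angles at most by the factor $b/a$.
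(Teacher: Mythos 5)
Your proposal is correct and, for the norm and angle inequalities, follows essentially the same route as the paper: compactness of the $g$-unit sphere plus scaling for \eqref{eq:norm-inq}, and for \eqref{eq:angle-inq} a reduction to the $2$-plane spanned by the two vectors, simultaneous diagonalization of $\tilde g|_P$ against $g|_P$, and a tangent half-angle computation extremized over the eigenvalue ratio (the paper additionally makes explicit the extremization over the orientation of the vector pair relative to the eigenbasis, its angle $\phi$, which your sketch leaves implicit but which is needed before the ratio is extremized). Where you genuinely diverge is the volume inequality: you compute $\mathrm{Vol}_{\tilde g}/\mathrm{Vol}_g=\sqrt{\det(\tilde g|_A)}$ in a $g$-orthonormal basis of $A$ and bound the eigenvalues of $\tilde g|_A$ by Rayleigh quotients, whereas the paper argues via the inclusion of unit balls $B_g(a)\subset B_{\tilde g}\subset B_g(b)$; both are elementary, yours makes the proportionality constant explicit, the paper's avoids choosing a basis. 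A further small point in your favor: your derivation of the weaker bound \eqref{eq:angle-inq-simple} from \eqref{eq:angle-inq} via $\tan(\kappa x)\geq\kappa\tan(x)$ (together with the trivial case $\kappa\psi\geq\pi$) establishes it for all angles, whereas the paper only offers a small-angle Taylor expansion even though the inequality is used as stated for general $\psi$.
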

\begin{proof}
	Let us prove the different inequalities:
	\begin{itemize}
		\item Norm inequality\\
		Let us take two different norms induced by the two positive metrics $g$ and $\tilde{g}$. Over a finite dimensional vector space $V$ the set $S=\{v\in V|\lVert v\rVert_g=1\}$ is compact. This means that the continuous function $\lVert v\rVert_{\tilde{g}}$ will take a minimal and maximum value on $S$:
		\begin{equation}
		a:=\min_{v\in S}\lVert v\rVert_{\tilde{g}}>0\,,\quad b:=\max_{v\in S}\lVert v\rVert_{\tilde{g}}>0\,.
		\end{equation}
		Linearity of the induced norm implies than the inequality that we wanted to prove:
		\begin{equation}
		a\lVert v\rVert_{g}\leq\lVert v\rVert_{\tilde{g}}\leq b\lVert v\rVert_{g}\quad\text{for all}\quad v\in V\,.
		\end{equation}
		\item Angle inequality\\
		Let us choose a two-dimensional plane $P\subset V$. On this plane, we have the restricted metrics $g|_P$ and $\tilde{g}|_P$. The two are related by a linear map $D: P\to P$ with
		\begin{equation}
		(\tilde{g}|_P)_{ab}=D^c{}_aD^d{}_b\,(g|_P)_{cd}\,,
		\end{equation}
		where $D$ is not unique. We can always choose it to be diagonalizable with ordered eigenvalues $d_i$ and eigenvectors $e_i$. At this point, we can identify the inner product with respect to $\tilde{g}$ as the one with respect to $g$ after having acted with $D$ on the vectors. This implies $a\leq d_i\leq b$ to not violate the norm inequality. Let us choose two unit vectors $v,w\in P$ that form an angle $\psi$ with respect to $g$ and whose angle bisector lies at an angle of $\phi$ to $e_1$:
		\begin{align}
		v&=\cos(\phi+\psi/2)e_1+\sin(\phi+\psi/2)e_2\\
		w&=\cos(\phi-\psi/2)e_1+\sin(\phi-\psi/2)e_2
		\end{align}
		We can compute the deformed angle $\tilde{\psi}(\psi,\phi)$ from the deformed vectors
		\begin{align}
		Dv&=d_1\cos(\phi+\psi/2)e_1+d_2\sin(\phi+\psi/2)e_2\\
		Dw&=d_1\cos(\phi-\psi/2)e_1+d_2\sin(\phi-\psi/2)e_2\,,
		\end{align}
		by using the arctangent rules with respect to $g$ based on $\langle v,w\rangle_{\tilde{g}}=\langle Dv, Dw\rangle_g$:
		\begin{equation}
		\tilde{\psi}(\psi,\phi)=\arctan\left(d_2/d_1\,\tan(\phi+\psi/2)\right)-\arctan\left(d_2/d_1\,\tan(\phi-\psi/2)\right)\,.
		\end{equation}
		By taking the derivative with respect to $\phi$, we can find the minimum and maximum of this function for fixed $\psi$. The minimum is at $\phi=0$ and the maximum at $\phi=\pi/2$ (recall that we chose $d_2>d_1$). Evaluating $\tilde{\psi}(\psi,\phi)$ at these values leads to the inequality
		\begin{equation}
		\frac{1-(d_2/d_1)^2\tan^2{(\psi/2)}}{1+(d_2/d_1)^2\tan^2{(\psi/2)}}\leq\cos\tilde\psi\leq\frac{1-(d_1/d_2)^2\tan^2{(\psi/2)}}{1+(d_1/d_2)^2\tan^2{(\psi/2)}}\,.
		\end{equation}
		This interval becomes maximal when $d_1/d_2$ is as small as possible, but for a given metric $\tilde{g}$, we have $d_1/d_2\in[a/b,1]$ for any plane $P\in V$. Thus, we find the following bound
		\begin{equation}
		\frac{1-(b/a)^2\tan^2{(\psi/2)}}{1+(b/a)^2\tan^2{(\psi/2)}}\leq\cos\tilde\psi\leq\frac{1-(a/b)^2\tan^2{(\psi/2)}}{1+(a/b)^2\tan^2{(\psi/2)}}\,,
		\end{equation}
		which holds in general. For small angles, we can Taylor expand this and find
		\begin{equation}
		\frac{a\psi}{b}\leq \tilde{\psi}\leq \frac{b\psi}{a}\,.
		\end{equation}
		\item Volume inequality:\\
		If we use a metric to measure the volume of some region $\mathcal{V}_A\subset A$, we use the Lebesgue measure in $\mathbb{R}^d$ by identifying with $A$ with $\mathbb{R}^d$ by choosing an orthonormal basis in $A$. For two metrics $g$ and $\tilde{g}$, linearity implies that there exists a unique number $c$, such that $\mathrm{Vol}_{\tilde{g}}(\mathcal{V}_A)=c\mathrm{Vol}_{g}(\mathcal{V}_A)$ holds for any region $\mathcal{V}_A\subset A$. In order to bound this constant, we can use the norm inequality to show that the $d$-dimensional unit ball $B^d_{\tilde{g}}=\left\{v\in A\,\text{with}\,\lVert v\rVert_{\tilde{g}}\leq 1\right\}$ contains the ball $B^d_{g}(a)=\left\{v\in A\,\text{with}\,\lVert v\rVert_{\tilde{g}}\leq a\right\}$ and is contained in the ball $B^d_{g}(b)=\left\{v\in A\,\text{with}\,\lVert v\rVert_{\tilde{g}}\leq b\right\}$. This implies $a^d\leq c\leq b^d$ which leads to the volume inequality
		\begin{equation}
		a^{d}\,\mathrm{Vol}_{g}(\mathcal{V}_A)\leq\mathrm{Vol}_{\tilde{g}}(\mathcal{V}_A)\leq b^{d}\, \mathrm{Vol}_{g}(\mathcal{V}_A)\,,
		\end{equation}
		we wanted to prove.
	\end{itemize}
	If we replace $V\to V^*$ and accordingly $g\to G$ and $\tilde{g}\to\tilde{G}$, we can run exactly the same arguments, but we need to compute
	\begin{equation}
	1/b=\min_{\lVert v\rVert_G=1}\lVert v\rVert_{\tilde{G}}\,,\quad 1/a=\max_{\lVert v\rVert_G=1}\lVert v\rVert_{\tilde{G}}>0\,.
	\end{equation}
	This follows from the fact that $a$ and $b$ are the smallest and largest eigenvalue of the linear map $(G\tilde{g})^a{}_b=G^{ac}\tilde{g}_{cb}$. Under above replacement, we need to consider its inverse map $(g\tilde{G})_a{}^b=g_{ac}\tilde{G}^{cb}$ whose smallest and largest eigenvalues are therefore $1/b$ and $1/a$, respectively.
\end{proof}

\subsection{Lyapunov exponents}\label{app:Lyapunov}
In what follows, we restrict ourselves to quadratic systems where $M(t)$ is independent of the initial value $\xi_0$. This generalizes to non-quadratic systems by replacing $M(t)$ by $M_{\xi_0}(t)$. In this case, Lyapunov exponents and vectors depend on the specific trajectory $\xi(t)=\Phi_t(\xi_0)$.
\begin{definition}[Lyapunov exponent]
	Given a linear Hamiltonian flow $M(t):V\to V$ and a vector $\delta\xi\in V$, we define the \underline{Lyapunov exponent} $\lambda_{\delta\xi}$ as the limit
	\begin{equation}
	\lambda_{\delta\xi}=\lim_{t\to\infty}\frac{1}{t}\log\frac{\lVert M(t)\,\delta\xi\rVert_g}{\lVert\delta\xi\rVert_g}\,,
	\end{equation}
	provided it exists. This definition is independent of the positive definite metric $g$ that induces the norm $\lVert\cdot\rVert$. Analogously, we define the Lyapunov exponent of a dual vector $\ell\in V^*$ as the limit
	\begin{equation}
	\lambda_{\ell}=\lim_{t\to\infty}\frac{1}{t}\log\frac{\lVert M\tra(t)\,\ell\rVert_G}{\lVert\ell\rVert_G}\,,
	\end{equation}
	provided it exists. Here, the definition is independent of the inverse metric $G$.
\end{definition}
\begin{proof}
	We need to prove the independence of this definition from the chosen norm $\lVert\cdot\rVert_g$ induced by some metric $g$. We can use the norm inequality (\ref{eq:norm-inq}) to show $\lVert M(t)\,\delta\xi\lVert_{\tilde{g}}=c_t\lVert M(t)\,\delta\xi\lVert_{g}$ with factor $c_t\in[a,b]$. Let $\lambda_p$ be the Lyapunov exponent of $\delta\xi\in V$ with respect to the norm $\lVert\cdot\rVert_g$. We can now compute
	\begin{equation}
	\tilde\lambda_{\delta\xi}=\lim_{t\to\infty}\frac{1}{t}\log\frac{\lVert M(t)\, \delta\xi\rVert_{\tilde{g}}}{\lVert\delta\xi\rVert_{\tilde{g}}}=\lim_{t\to\infty}\frac{1}{t}\log\frac{\lVert M(t)\, \delta\xi\rVert_{g}}{\lVert \delta\xi\rVert_{g}}+\underbrace{\lim_{t\to\infty}\frac{c_t}{t}\frac{\lVert \delta\xi\rVert_{g}}{\lVert \delta\xi\rVert_{\tilde{g}}}}_{=0}=\lambda_{\delta\xi}\,,
	\end{equation}
	where the second term vanishes because $c_t$ is a bounded function. For dual Lyapunov vectors $\ell\in V^*$, we can use the same arguments where only our bounds for $c_t$ change to $c_t\in [1/b,1/a]$.
\end{proof}
To characterize the Lyapunov exponents of all vectors in a $2N$-dimensional vector space, it is sufficient to select a representative sample of $2N$ vectors. Such a basis is called Lyapunov basis and is defined as follows.

\begin{definition}[Lyapunov basis and spectrum]
Given the linear flow $M(t)$, we define the limit matrix 
\begin{equation}
L_a{}^b\equiv\lim_{t\to\infty}\frac{1}{2t}\log \Big(g_{ac}\,M^{c}{}_d\,G^{de}\,M^{b}{}_e \Big)\,,
\end{equation}
provided it exists. We then define a complete set of eigenvectors as \underline{Lyapunov basis} $\mathcal{D}_L=(\ell^1,\dots,\ell^{2N})$ if it is chosen as Darboux basis, such that $\{\ell^i,\ell^{2N-i+1}\}=1$ for $i=1,\dots,N$ are the only non-trivial Poisson brackets and such that the associated Lyapunov exponents $\lambda_i:=\lambda_{\ell_i}$ are ordered with $\lambda_i\geq\lambda_{i+1}$. The set $(\lambda_1,\dots,\lambda_{2N})$ is called \underline{Lyapunov spectrum}.
\end{definition}
\begin{proof}
The construction of a Lyapunov basis as eigenvectors of the limiting matrix $L$ is an important part of Oseledets multiplicative ergodic theorem. A comprehensible proof with further details can be found \cite{eckmann1985ergodic}. The fact that the eigenvectors can always be chosen to form a Darboux basis follows from the fact that $L_a{}^b$ is an element of the symplectic algebra $\mathrm{sp}(2N,\mathbb{R})$.
\end{proof}

When restricting to a subsystem $A\subset V$, it is natural to ask what is the Lyapunov spectrum of the subsystem.
\begin{definition}[Subsystem Lyapunov basis and spectrum]
Given the linear flow $M(t)$ and a symplectic subspace $A\subset V$, we define the \underline{subsystem Lyapunov basis} of $A$ as the $2N_A$ vectors $(\ell^1_A,\dots,\ell^{2N_A}_A)$ with associated \underline{subsystem Lyapunov spectrum}
\begin{equation}
	\lambda_1^A\geq\dots\geq\lambda^A_{2N_A}\,,
\end{equation}
such that a linear observable $\theta\in A^*$ with $\theta=\sum^{2N_A}_{i=1}c_i\ell^{i}_A$ has Lyapunov exponent $\lambda^A_j$ where $j\geq$ is the smallest number, such that $T_j\neq 0$.
\end{definition}
The subsystem Lyapunov spectrum does in general not consist of conjugated pairs $(\lambda,-\lambda)$. Moreover, it is important to emphasize that the Lyapunov spectrum of $A$ is defined as those Lyapunov exponents of linear observables $\theta\in A^*$, rather than of perturbations $\delta\xi\in A$, because the two are not the same.\\

The following proposition explains in detail how one can compute the subsystem Lyapunov basis and spectrum when the Lyapunov basis and spectrum of the full system is known.
\begin{proposition}\label{app:prop2}
Given a the linear flow $M(t)$ with Lyapunov basis $\mathcal{D}_L$ and Lyapunov spectrum $(\lambda_1,\dots,\lambda_{2N})$, we can compute the subsystem Lyapunov basis and spectrum of a subsystem $A\subset V$ using the following three steps:
\begin{enumerate}
	\item Choose a Darboux basis $\mathcal{D}_A=(\theta^1,\dots,\theta^{2N_A})$ of the symplectic subspace $A^*\subset V^*$.
	\item Compute the unique transformation matrix $T$ that expresses $\mathcal{D}_A$ in terms of the Lyapunov basis $\mathcal{D}_{L}=(\ell^1,\dots,\ell^{2N})$:
	\begin{equation}
	\left(\begin{array}{c}
	\theta^1\\
	\vdots\\
	\theta^{2N_A}
	\end{array}\right)=\left(\begin{array}{ccc}
	\smash{\fbox{\color{black}\rule[-37pt]{0pt}{1pt}$\,\,T^1_1\,\,$}} & \cdots & \smash{\fbox{\color{black}\rule[-37pt]{0pt}{1pt}$\,T_{2N}^1$}}\\
	\vdots & \ddots & \vdots\\
	\myunderbrace{\mystrut{1.5ex}T_1^{2N_A}}{\vec{t}_1} & \cdots & \myunderbrace{\mystrut{1.5ex}T_{2N}^{2N_A}}{\vec{t}_{2N}}
	\end{array}\right)\left(\begin{array}{c}
	\ell^1\\
	\vdots\\
	\ell^{2N}
	\end{array}\right)\color{white}{\begin{array}{c}
		.\\ \\ \\ \\ ,
		\end{array}}\color{black}
	\label{eq:Tv2}
	\end{equation}
	\vspace{.5em}
	We refer to the $2N$ columns of $\,T$ as $\vec{t}_i$.
	\item Find the first $2N_A$ linearly independent \footnote{Here we mean that $\vec{t}_i$ cannot be expressed as a linear combination of the vectors $(\vec{t}_1,\cdots,\vec{t}_{i-1})$ standing to the left in the matrix $T$.} columns $\vec{t}_i$ of $T$ which we can label by $\vec{t}_{i_k}$ with $k$ ranging from $1$ to $2N_A$. The result is a map $k\mapsto i_k \in (1,\ldots,2N)$ with $i_{k+1}>i_k$.
\end{enumerate}
The subsystem Lyapunov spectrum is given by $(\lambda_1^A,\dots,\lambda^A_{2N_A})$ with $\lambda_k^A=\lambda_{i_k}$ and the subsystem Lyapunov basis is given by $(\ell^1_A,\dots,\ell^{2N_A}_A)$ with
\begin{equation}
	\ell^k_A=(U^{-1}\theta)^k\,,
\end{equation}
where $U=\left(\vec{t}_{i_1},\dots,\vec{t}_{i_{2N_A}}\right)$ is the invertible $2N_A\times 2N_A$ matrix consisting of the columns $\vec{t}_{i_k}$.
\end{proposition}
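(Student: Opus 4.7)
The plan is to follow the template of the proof of Theorem~\ref{th:LambdaA}, adapted to extract the structure of the subsystem basis rather than just its volume. First, I would examine the matrix $\tilde T=U^{-1}T$ obtained after changing basis inside $A^*$ via $U^{-1}$. Because the columns of $U$ are, by construction, the first $2N_A$ linearly independent columns of $T$, every non-pivot column $\vec t_j$ (with $j<i_k$) lies in the span of the pivots of index strictly smaller than $j$. Left-multiplication by $U^{-1}$ therefore sends the pivot column $\vec t_{i_k}$ to the standard basis vector $e_k$ and kills the first $k-1$ entries of every $\vec t_j$ with $j<i_k$. This yields exactly the staircase pattern $\tilde T^k_{i_k}=1$ and $\tilde T^k_j=0$ for $j<i_k$ displayed in the proof of Theorem~\ref{th:LambdaA}, so that
\begin{equation}
\ell^k_A=(U^{-1}\theta)^k=\ell^{i_k}+\sum_{j>i_k}\tilde T^k_j\,\ell^j.
\end{equation}

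Second, I would evaluate the Lyapunov exponent of each $\ell^k_A$ by applying $M^\intercal(t)$ to this expansion. By the defining property of the Lyapunov basis, $\lVert M^\intercal(t)\ell^j\rVert_G\sim e^{\lambda_j t}$, and since every index $j>i_k$ in the sum carries $\lambda_j\leq\lambda_{i_k}$, the dominant contribution to $\lVert M^\intercal(t)\ell^k_A\rVert_G$ comes from the leading term $\ell^{i_k}$. This gives $\lambda_{\ell^k_A}=\lambda_{i_k}$, hence $\lambda_k^A=\lambda_{i_k}$.

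Third, I would upgrade this to an arbitrary linear combination $\theta=\sum_{k}c_k\,\ell^k_A\in A^*$. The crucial observation is that the upper-triangular structure of $\tilde T$ implies that the coefficient of $\ell^{i_{k_0}}$ in the Lyapunov-basis expansion of $\theta$ equals $c_{k_0}$, where $k_0$ is the smallest index with $c_{k_0}\neq 0$: contributions from $\ell^k_A$ with $k<k_0$ vanish because $c_k=0$, while those with $k>k_0$ only involve Lyapunov vectors $\ell^j$ with $j>i_k>i_{k_0}$. All remaining terms therefore have Lyapunov exponents bounded above by $\lambda_{i_{k_0}}$, so $\lambda_\theta=\lambda_{i_{k_0}}=\lambda^A_{k_0}$. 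This is exactly the property required by the definition of a subsystem Lyapunov basis.

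The main obstacle is the degenerate case in which several Lyapunov exponents $\lambda_j$ with $j>i_k$ coincide with $\lambda_{i_k}$: a naive linear combination of vectors sharing a Lyapunov exponent could in principle cancel the leading exponential behavior. This is precisely where the regularity assumption summarized in appendix~\ref{app:regular} enters: by the Oseledets construction the eigenspaces of the limiting matrix $L$ are well defined and exponentially decoupled, so that generic linear combinations within a degenerate eigenspace retain the common Lyapunov exponent and no exponential collinearity can spoil the leading-term argument. Once this point is in place, the three steps above assemble into the claim that $\lambda_k^A=\lambda_{i_k}$ and that $\ell^k_A=(U^{-1}\theta)^k$ is a subsystem Lyapunov basis.
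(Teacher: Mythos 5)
Your proposal is correct and follows essentially the same route as the paper's proof: construct $U$ from the pivot columns, reduce $T$ to the staircase form $\tilde T=U^{-1}T$, read off $\ell^k_A=\ell^{i_k}+\sum_{j>i_k}\tilde T^k_j\,\ell^j$, and conclude that the leading term fixes the exponent. You actually supply more detail than the paper (which dismisses the dominance step and the arbitrary-linear-combination step with ``clearly''), including the correct treatment of degenerate exponents via the Oseledets filtration; the only blemish is the backwards phrasing that $U^{-1}$ ``kills the first $k-1$ entries'' of $\vec t_j$ for $j<i_k$ --- it annihilates entries $k$ through $2N_A$ --- but the staircase conclusion you draw from it is right.
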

\begin{proof}
	The rectangular matrix $T$ in (\ref{eq:Tv2}) allows us to express the elements of the Darboux basis $\mathcal{D}_A$ of the subsystem in terms of the Lyapunov basis, $\theta^r=\sum^{2N}T^r_i \ell^i$. Denoting the columns of $T$ by $\vec{t}_i$ we can select the first $2N_A$ linearly independent columns in the ordered set $(\vec{t}_1,\dots,\vec{t}_{2N})$. We label them $\vec{t}_{i_k}$ and organize them in the $2N_A\times 2N_A$ square matrix $U$,
	\begin{equation}
	U=\left(\begin{array}{c|c|c}
	\vec{t}_{i_1} & \dots & \vec{t}_{i_{2N_A}}
	\end{array}\right)\,.
	\end{equation}
	Due to their linear independence, the inverse $U^{-1}$ exists and turns $T$ into an upper triangular matrix $\tilde{T}$ of the form
	\begin{equation}
	\tilde{T}=U^{-1}T=\left(\begin{array}{lll}
	0 &\cdots \, 0\;  1 \,*\, * \cdots\cdots\cdots\cdots\cdots\cdots&*\\
	0 &\cdots \cdots\cdots \, 0\;  1 \,*\, *\cdots\cdots\cdots\cdots&*\\
	\,\vdots &\qquad\qquad \vdots\qquad\qquad \vdots\qquad\qquad&\vdots\\
	0 &\cdots \cdots\cdots\cdots \cdots \cdots 0\;  1 \,*\, *\;\cdots&*
	\end{array}\right)\,,
	\end{equation}
	where the $*$ represents an unspecified value. Acting with $U^{-1}$ on the left and the right-hand side of (\ref{eq:Tv2}) and acting on $\theta^k$, we find
	\begin{equation}
	\ell_A^k:=(U\theta)^k=\ell^{i_k}+\sum^{2N}_{j>i_k}\tilde{T}^k_j\, \ell^j\,,
	\end{equation}
	where $\ell_A^k=(U^{-1}\theta)^k=\sum^{2N_A}_{j=1}U^k_j\theta^j$. Clearly, the vectors $\ell_A^k$ satisfy 
	\begin{equation}
	\lim_{t\to\infty}\frac{1}{t}\log \lVert M\tra(t)\ell_A^k\rVert/\lVert\ell_A^k\rVert\,=\,\lambda_{i_k}\,.
	\label{eq:}
	\end{equation}
Given an arbitrary vector $\theta=\sum^{2N_A}_{i=1}c_i\ell^i_A$, its Lyapunov exponent is clearly given by the $\lambda^A_k=\lambda_{i_k}$ where $k$ is the smalles $i\geq 1$, for which $c_i$ is non-zero.
\end{proof}
In our geometric representations of the R\'{e}nyi entropy, we are interested in how the volume of some initial region changes under the Hamiltonian flow $M\tra(t)$. Due to the linearity of $M\tra(t)$, we can restrict ourselves to studying the time-dependent volume of parallelepipeds laying in some subspace $A\subset V$. The evolution will in general evolve this parallelepiped out of $A$.
\begin{definition}[Subsystem exponent]
	Given the linear flow $M(t)$ and a symplectic subspace $A\subset V$ of dimension $2N_A$, we can define the \underline{subsystem exponent} as the limit
	\begin{equation}
	\Lambda_A=\lim_{t\to\infty}\frac{1}{t}\log\frac{\mathrm{Vol}_G\left(M\tra(t)\mathcal{V}_A\right)}{\mathrm{Vol}_G\left(\mathcal{V}_A\right)}\,,
	\end{equation}
	provided it exists. The set $\mathcal{V}_A\subset A$ is an arbitrary parallelepiped spanning all dimensions of $A$. This definition is independent of the metric that one uses to measure the volume and independent of the choice of parallelepiped $\mathcal{V}_A$.
\end{definition}
\begin{proof}
We need to prove the independence of this definition from the choice of positive definite metric $G$. We can use the volume inequality (\ref{eq:volume-inq}) which ensures that for a different metric $\tilde{G}$, we have $\mathrm{Vol}_{\tilde{G}}\left(M\tra(t)\mathcal{V}_A\right)=c_t\mathrm{Vol}_G\left(M\tra(t)\mathcal{V}_A\right)$ with $c_t\in [(1/b)^{2N_A},(1/a)^{2N_A}]$. We compute
\begin{equation}
	\tilde{\Lambda}_A=\lim_{t\to\infty}\frac{1}{t}\log\frac{\mathrm{Vol}_{\tilde{G}} \left(M\tra(t)\,\mathcal{V}_A\right)}{\mathrm{Vol}_{\tilde{G}}\left(\mathcal{V}_A\right)}=\lim_{t\to\infty}\frac{1}{t}\log\frac{\mathrm{Vol}_{G} \left(M\tra(t)\,\mathcal{V}_A\right)}{\mathrm{Vol}_{G}\left(\mathcal{V}_A\right)}+\underbrace{\lim_{t\to\infty}\frac{c_t}{t}\log\frac{\mathrm{Vol}_{G} \left(\mathcal{V}_A\right)}{\mathrm{Vol}_{\tilde{G}}\left(\mathcal{V}_A\right)}}_{=0}=\Lambda_A\,,
\end{equation}
where the second term vanishes because $c_t$ is a bounded function.
\end{proof}

\subsection{Regular Hamiltonian systems}\label{app:regular}
The central theorem of this paper connects quantum mechanical entanglement with the classical notion of Lyapunov exponents. In order to avoid technical complications, we introduce the class of \emph{regular Hamiltonian Lyapunov systems}. Most standard Hamiltonian systems that one studies in classical or quantum physics with a finite number of bosonic degrees of freedom fall into this class.
\begin{definition}
	A \underline{regular Hamiltonian Lyapunov system} consists of a finite dimensional phase space and a (possibly time-dependent) Hamiltonian $H(t): V\to\mathbb{R}$ with linerized flow $M_{\xi_0}(t)$, such that the following two conditions are satisfied:
	\begin{enumerate}
		\item[(i)] All Lyapunov exponents are well defined. This means that for an arbitrary initial condition $\xi_0$ as well as for every initial separation $\delta\xi\in T_{\xi_0}V$, the corresponding limit
		\begin{equation}
		\lambda_{\delta\xi}=\lim_{t\to\infty}\frac{1}{t}\log\frac{\lVert M(t)\delta\xi\rVert_G}{\lVert \delta\xi(0)\rVert_G}
		\end{equation}
		exists.
		\item[(ii)] All Lyapunov exponents appear in conjugate pairs $(\lambda,-\lambda)$, such that the geometric multiplicity of the two conjugate exponents agrees.
	\end{enumerate}
\end{definition}
In short, condition (i) excludes systems with above-exponential or below-exponential growth, while condition (ii) excludes systems where two or more vectors become exponentially fast collinear under evolution by $M_{\xi_0}(t)$. Let us give an example for each condition that is not a regular Hamiltonian Lyapunov system. For both examples, we consider a single degree of freedom, such that we can express everything with respect to the Darboux basis $\mathcal{D}_V=(q,p)$.
\begin{enumerate}
	\item[(i)] Above-exponential growth and decay\\
	The time-dependent quadratic Hamiltonian $H(t)=e^{t}qp$ leads to the Hamiltonian flow
	\begin{equation}
	M(t)=\left(\begin{array}{cc}
	e^{e^t} & 0\\
	0 & e^{-e^t}
	\end{array}\right)\,,
	\end{equation}
	for which the Lyapunov exponents are ill defined because the defining limits do not exist. Thus, this system violates the first condition of regular Hamiltonian Lyapunov systems.
	\item[(ii)] Exponential collinearity\\
	The time dependent quadratic Hamiltonian $H(t)=\frac{1}{2}e^{t}p^2$ leads to the Hamiltonian flow 
	\begin{equation}
	M(t)=\left(\begin{array}{cc}
	1 & 0\\
	e^t & 1
	\end{array}\right)\,,
	\end{equation}
	which has Lyapunov exponents given by $\lambda_1=1$ and $\lambda_2=0$. The symplectic volume is still preserved under time evolution because arbitrary initial vectors become exponentially fast collinear, for instance
	\begin{equation}
	M\tra(t)q=q\,,\qquad M(t)\,p=p+e^{t}q\,,
	\end{equation}
	where the angle between the two vectors behaves as
	\begin{equation}
	\theta(t)=\cos^{-1}\left(\frac{\langle M\tra(t)q,M\tra(t)p\rangle_G}{\lVert M\tra(t)q\rVert_G\,\lVert M\tra(t)p\rVert_G}\right)\sim e^{-t}\quad\text{as}\quad t\to\infty\,,
	\end{equation}
	regardless of which positive definite metric we use. Clearly, this system does not have two conjugate Lyapunov exponents and does not fall into the class of regular Hamiltonian Lyapunov systems.
\end{enumerate}

The following notion of Lyapunov defect is important to show that for regular Hamiltonian systems the subsystem exponent can be simply computed using theorem~\ref{th:LambdaA}.
\begin{definition}[Subsystem defect]
	Given the inverse linear flow $M(t)$ and a subsystem $A\subset V$ with Lyapunov associated subsystem Lyapunov spectrum $(\lambda^A_1,\dots,\lambda^A_{2N_A})$, we define the \underline{Lyapunov defect}
	\begin{equation}
	\Lambda^*_A=\sum^{2N_A}_{i=1}\lambda^A_i-\Lambda_A\,,
	\end{equation}
	where $\mathcal{V}_{A}\subset A^*$ is an arbitrary $2N_A$-dimensional parallelepiped in $A$. If this limit exists, it is independent of the metric $G$ with which we measure the volume and we have $\Lambda^*_A\geq 0$.
\end{definition}
\begin{proof}
	The volume of a parallelepiped can be computed from the length of its $2N_A$ sides $M\tra(t)\,\ell^i$ and the $(2N_A-1)$ angles $\psi_i(t)$, which is the angle between $M\tra(t)\ell^i$ and the hyperplane spanned by the vectors $M\tra(t)\ell^j$ with $j=1,\dots,i-1$. The time dependent volume is then given by
	\begin{equation}
	\mathrm{Vol}_G\left(M\tra(t)\mathcal{V}_{A}\right)=\prod^{2N_A}_{i=1}\lVert M\tra(t)\,\ell^i\rVert\sin\psi_i(t)\,.
	\label{eq:Lyapunov-defect-volume}
	\end{equation}
	Given two distinct metrics $G$ and $\tilde{G}$, we can use the angle and length inequalities from above, to find the volume inequality
	\begin{equation}
	(1/b)^{2N_A}\,\mathrm{Vol}_G(M\tra(t)\mathcal{V}_{A})\leq\mathrm{Vol}_{\tilde{G}}(M\tra(t)\mathcal{V}_{A})\leq (1/a)^{2N_A}\,\mathrm{Vol}_G(M\tra(t)\mathcal{V}_{A})\,.
	\end{equation}
	This inequality already insures that the above limit is independent of the chosen metric. Moreover, the explicit expression in (\ref{eq:Lyapunov-defect-volume}) shows also that the volume is bounded from above by
	\begin{equation}
	\mathrm{Vol}_G(M\tra(t)\mathcal{V}_{A})\leq \prod^{2N_A}_{i=1}\lVert M\tra(t)\ell^i\rVert\propto \exp\sum^{2N_A}_{i=1}\lambda^A_i t\,.
	\end{equation}
	This implies $\Lambda_A\leq\sum^{2N_A}$ and thus, $\Lambda^*_A\geq 0$.
\end{proof}

For regular Hamiltonian systems, we can prove the following statements that we will need in the proof of our central theorem of this paper.
\begin{proposition}\label{app:prop3}
	In a regular Hamiltonian system, the Lyapunov defect $\Lambda^*_A$ of any subspace $A\in V$ vanishes. This implies that the asymptotic behavior of any volume $\mathcal{V}_A\subset A\subset V$ is given by
	\begin{equation}
	\Lambda_A=\sum^{2N_A}_{i=1}\lambda_i^A
	\end{equation}
	where $\lambda^A_i$ refers to subsystem Lyapunov spectrum of $A$.
\end{proposition}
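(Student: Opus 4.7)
The plan is to show the matching lower bound $\Lambda_A \geq \sum_{i=1}^{2N_A}\lambda_i^A$, since the upper bound $\Lambda_A \leq \sum_{i=1}^{2N_A}\lambda_i^A$ was already established in the proof that $\Lambda_A^* \geq 0$. I would start from the representation
\begin{equation}
\mathrm{Vol}_G\!\left(M\tra(t)\mathcal{V}_A\right) \;=\; \prod_{k=1}^{2N_A}\|M\tra(t)\ell_A^k\|\,\sin\psi_k(t),
\end{equation}
where $\mathcal{V}_A$ is the parallelepiped spanned by a subsystem Lyapunov basis $(\ell_A^1,\dots,\ell_A^{2N_A})$ of $A$, and $\psi_k(t)$ is the angle between $M\tra(t)\ell_A^k$ and the hyperplane spanned by $\{M\tra(t)\ell_A^j\}_{j<k}$. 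By the very definition of the subsystem Lyapunov spectrum, $\tfrac{1}{t}\log\|M\tra(t)\ell_A^k\| \to \lambda_k^A$, so the desired asymptotics follows if I can prove that $\tfrac{1}{t}\log\sin\psi_k(t) \to 0$ for each $k$.

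To reduce the angle estimate to a statement about the full Lyapunov basis, I would use Proposition A.2, which expresses $\ell_A^k = \ell^{i_k} + \sum_{j>i_k}\tilde{T}^k_j\,\ell^j$ with $\lambda_k^A=\lambda_{i_k}$. Splitting each $M\tra(t)\ell_A^k$ into its Lyapunov components and normalizing by $e^{\lambda_{i_k}t}$, the dominant direction is that of $M\tra(t)\ell^{i_k}$ up to corrections of size $e^{-(\lambda_{i_k}-\lambda_j)t}$ for $j>i_k$ (subleading by the ordering of exponents, possibly after incorporating sub-exponential polynomial factors). Therefore the angles $\psi_k(t)$ between the evolved subsystem vectors are controlled, up to sub-exponential errors, by the angles between the evolved vectors $M\tra(t)\ell^{i_1},\dots,M\tra(t)\ell^{i_{2N_A}}$ of the full Lyapunov basis. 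Moreover, since these vectors are selected so that no later $\vec t_i$ depends linearly on earlier ones, they remain linearly independent for all $t$ and the angles $\psi_k(t)$ stay strictly positive.

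The core step — and where I expect the real work to lie — is to promote ``strictly positive'' to ``sub-exponentially decaying.'' That is, I must show that for any pair of distinct Lyapunov directions $\ell^i$ and $\ell^j$ of the full system, the angle between $M\tra(t)\ell^i$ and $M\tra(t)\ell^j$ satisfies $\tfrac{1}{t}\log\sin\angle\bigl(M\tra(t)\ell^i,M\tra(t)\ell^j\bigr)\to 0$. This is precisely the content of the regularity assumption (i)--(ii): condition~(i) forces all exponential rates to be genuine Lyapunov exponents, and condition~(ii), by enforcing that conjugate pairs $(\lambda,-\lambda)$ occur with matching geometric multiplicity, rules out the Jordan-type degeneracies that produced the exponential collinearity in the counterexample $H(t)=\tfrac{1}{2}e^tp^2$. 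Concretely, I would invoke the Oseledets decomposition associated with the limiting matrix $L$ defined in Section~\ref{sec:classical}: the phase space splits into $L$-eigenspaces whose pairwise angles under $M\tra(t)$ can decay at most sub-exponentially, because any faster decay would produce, in the log-limit defining $L$, a spurious eigenvalue inconsistent with the regular Lyapunov spectrum.

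Combining these ingredients, the volume formula gives
\begin{equation}
\frac{1}{t}\log\mathrm{Vol}_G\!\left(M\tra(t)\mathcal{V}_A\right) \;=\; \sum_{k=1}^{2N_A}\frac{1}{t}\log\|M\tra(t)\ell_A^k\| \;+\; \sum_{k=1}^{2N_A}\frac{1}{t}\log\sin\psi_k(t) \;\longrightarrow\; \sum_{k=1}^{2N_A}\lambda_k^A,
\end{equation}
so $\Lambda_A = \sum_{k=1}^{2N_A}\lambda_k^A$ and $\Lambda_A^*=0$, as claimed. Independence from $\mathcal{V}_A$ follows because any two such parallelepipeds differ by a fixed linear transformation on $A$, contributing only an additive constant that is washed out by the $1/t$ prefactor.
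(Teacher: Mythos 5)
Your overall strategy (factor the Gram volume as $\prod_k\lVert M\tra(t)\ell_A^k\rVert\sin\psi_k(t)$ and show each angle factor is sub-exponential) is a legitimate route, but it leaves the decisive step unproven, and the heuristic you offer for it does not substitute for a proof within the paper's framework. The claim that angles between evolved Lyapunov directions decay sub-exponentially cannot simply be "invoked" from the existence of the limiting matrix $L$: the counterexample $H(t)=\tfrac{1}{2}e^{t}p^2$ of appendix~\ref{app:regular} satisfies condition (i) (all Lyapunov exponents exist) and still exhibits exponential collinearity, so the mechanism must come entirely from condition (ii), and the way (ii) actually enters is quantitative, not structural. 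The paper's proof supplies exactly this missing piece in two lines: since $M\tra(t)$ preserves the total $2N$-volume, $\Lambda_V=0$; since the exponents come in conjugate pairs with equal multiplicity, $\sum_{i=1}^{2N}\lambda_i=0$; hence $\Lambda_V^*=\sum_i\lambda_i-\Lambda_V=0$. Then the Hadamard-type inequality $\mathrm{Vol}_G(M\tra(t)\mathcal{V})\leq\mathrm{Vol}_G(M\tra(t)\mathcal{V}_A)\,\mathrm{Vol}_G(M\tra(t)\mathcal{V}_B)$ gives $-\Lambda_V^*\leq-\Lambda_A^*-\Lambda_B^*$, and non-negativity of each defect forces $\Lambda_A^*=0$. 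Your "core step" is essentially the statement $\Lambda_V^*=0$ together with its restriction to sub-collections, so as written the argument is circular at its center; you would need either to reproduce this Liouville-plus-pairing computation or to give an independent proof that forward regularity implies sub-exponential decay of the Oseledets angles.

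There is also a more technical defect in the reduction of the subsystem angles to angles among $M\tra(t)\ell^{i_1},\dots,M\tra(t)\ell^{i_{2N_A}}$: the corrections you discard are of relative size $e^{-(\lambda_{i_k}-\lambda_j)t}$ for $j>i_k$, which are $O(1)$ rather than subleading whenever $\lambda_j=\lambda_{i_k}$. Degenerate Lyapunov exponents occur in essentially all the applications of sections~\ref{sec:applications-finite-dof} and~\ref{sec:qft}, and in that case $M\tra(t)\ell_A^k$ has no single dominant direction, so the pairwise-angle comparison breaks down as stated (one must instead compare spans, e.g. via Gram determinants). The paper's defect argument avoids this entirely because it never identifies asymptotic directions, only compares volumes.
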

\begin{proof}
	Let us recall that there is a special class of metrics on $V$, for which every symplectic transformation $M(t)$ and thus also $M\tra(t)$ preserves the $2N$-dimensional volume. These are all the metrics that give rise to the same volume form as the one induced by the symplectic form. This implies that the asymptotic behavior of every $2N$-dimensional region $\mathcal{V}\subset V$ shows the following asymptotic behavior
	\begin{equation}
	\Lambda_V=\lim_{t\to\infty}\frac{1}{t}\log\frac{\mathrm{Vol}_G\left(M\tra(t)\mathcal{V}\right)}{\mathrm{Vol}_G\left(\mathcal{V}\right)}=0\,,
	\end{equation}
	which holds with respect to all metrics $G$.\\
	From our previous discussion, we also recall that we must have
	\begin{equation}
	\Lambda_V=\sum^{2N}_{i=1}\lambda_i-\Lambda^*_V\,.
	\end{equation}
	If all Lyapunov exponents $\lambda_i$ come in conjugate pairs with equal multiplicities the sum in this expression vanishes. Thus, we have $\Lambda_V=-\Lambda^*_V$ which implies $\Lambda^*_V=0$ due to $\Lambda_V=0$.\\
	At this point, we only need to show that $\Lambda^*_V=0$ for the full system implies that we also have $\Lambda^*_A=0$ for all subsystems $A\subset V$. This follows from the fact that we can choose an initial parallelepiped $\mathcal{V}=\mathcal{V}_A\times\mathcal{V}_B$ with well known inequality
	\begin{equation}
	\mathrm{Vol}_G\left(M\tra(t)\mathcal{V}\right)\leq\mathrm{Vol}_G\left(M\tra(t)\mathcal{V}_A\right)\,\mathrm{Vol}_G\left(M\tra(t)\mathcal{V}_B\right)\,.
	\end{equation}
	This inequality implies $-\Lambda_V^*\leq -\Lambda_A^*-\Lambda_B^*$ where we recall $\lambda_A^*>0$ and $\Lambda_B^*>0$. Thus, $\Lambda_V^*=0$ implies that $\Lambda_A^*=0$ for all subspaces $A\subset V$ leading to $\Lambda_A=\sum^{2N_A}_{i=1}\lambda_i^A$.
\end{proof}
Let us emphasize that proposition~\ref{app:prop2} and~\ref{app:prop3} together provide an alternative full proof of theorem~\ref{th:LambdaA}, the main result of this paper. Put simply, the subsystem exponent $\Lambda_A$ for regular Hamiltonian systems is just given by the sum over the subsystem spectrum $\lambda_i^A$ which can be computed using the procedure explained in theorem~\ref{th:LambdaA} or equivalently in proposition~\ref{app:prop2}.

\section{Gaussian states and quadratic time-dependent Hamiltonians}\label{sec:quantum}
We review how symplectic methods and complex structures provide a tool for describing Gaussian states and their quantum evolution. These methods are instrumental in the derivation of a relation between symplectic volumes and the asymptotic growth of the entanglement entropy.

\subsection{Bosonic quantum systems and the symplectic group}\label{sec2:Fock}
We consider a quantum system with $N$ bosonic degrees of freedom \cite{berezin:1966}. The Hilbert space $\mathcal{H}$ of the system carries a regular representation of the commutation relations
\begin{equation}
	[\,\hat{\xi}^a,\hat{\xi}^b\,]=\mathrm{i}\,\Omega^{ab}\,.
	\label{eq:CR}
\end{equation}
Here $\Omega^{ab}$ is the symplectic structure discussed in section~\ref{sec2:phasespace} and the operators $\hat{\xi}^a$ can be understood as the quantization of the classical linear observables $\xi^a$ with  Poisson brackets $\{\xi^a,\xi^b\}=\Omega^{ab}$. A Fock representation of the commutation relations (\ref{eq:CR}) is obtained by introducing creation and annihilation operators with canonical commutation relations $[\hat{a}_i,\hat{a}_j^\dagger]=\delta_{ij}$, $[\hat{a}_i,\hat{a}_j]=0$, $[\hat{a}_i^\dagger,\hat{a}_j^\dagger]=0$. These operators define a set of orthonormal vectors $|n_1,\dots,n_N;\mathcal{D}\rangle$ with $n_i\in\mathbb{N}$, a Fock basis. The Fock vacuum $|0,\dots,0;\mathcal{D}\rangle$ is defined by
\begin{equation}
	\hat{a}_i\, |0,\dots,0;\mathcal{D}\rangle\,=0\,,\qquad i=1,\dots,N
	\label{eq:Fock vacuum}
\end{equation}
and the $n$-excitations state $|n_1,\dots,n_N;\mathcal{D}\rangle$ by
\begin{equation}
	|n_1,\dots,n_N;\mathcal{D}\rangle=\left(\prod^N_{i=1}\frac{(\hat{a}^\dagger_i)^{n_i}}{\sqrt{n_i!}}\right)|0,\dots,0;\mathcal{D}\rangle\,.
	\label{}
\end{equation}
The Hilbert space $\mathcal{H}$ is obtained by completing the span of these vectors in the norm induced by the scalar product $\langle 0,\dots,0;\mathcal{D}|0,\dots,0;\mathcal{D}\rangle=1$. The label $\mathcal{D}$ refers to a Darboux basis $\mathcal{D}=(q_i,p_i)$ of the classical phase space $V$. It enters in the definition of the representation of the commutation relations (\ref{eq:CR}) in the following way. We define position and momentum operators $\hat{q}_i=q_{ia}\,\hat{\xi}^a\,$, $\hat{p}_i=p_{ia}\,\hat{\xi}^a$ with $\Omega^{ab}q_{ia}\,p^j_{b}=\delta^{ij}$ and relate them to the creation and annihilation operators via\footnote{Following our index convention, it would be more natural to write $\hat{a}_i=a_{ib}\hat{\xi}^b$ to emphasize their relation to vectors in the complexified phase space $V_{\mathbb{C}}$, but we follow the standard convention of writing creation and annihilations operators as $\hat{a}_i^\dagger$ and $\hat{a}_i$.}
\begin{equation}
	\hat{a}_i=\frac{\hat{q}_i+\mathrm{i}\,\hat{p}_i}{\sqrt{2}}\,,\qquad \hat{a}^\dagger_i=\frac{\hat{q}_i-\mathrm{i}\,\hat{p}_i}{\sqrt{2}}\,.
	\label{eq:adef}
\end{equation}
These relations can be inverted to represent the operator $\hat{\xi}^a$ in terms of $\hat{a}_i$ and $\hat{a}_i^\dagger$,
\begin{equation}
	\hat{\xi}^b=\sum_{i=1}^N (u^b_i\,\hat{a}_i+u^{*b}_i\,\hat{a}^\dagger_i)\,,
	\label{}
\end{equation}
with coefficients $u^a_i$  determined by the choice of Darboux basis $\mathcal{D}$. With these definitions, $[\,\hat{\xi}^a,\hat{\xi}^b\,]=\mathrm{i}\,\Omega^{ab}$ on the Hilbert space $\mathcal{H}$.\\

The Hilbert space $\mathcal{H}$ carries a projective unitary representation of the inhomogeneous symplectic group $\mathrm{ISp}(2N,\mathbb{R})= \mathbb{R}^{2N}\ltimes \mathrm{Sp}(2N,\mathbb{R})$, which is the semi-direct product of phase space translations and the symplectic group \cite{folland:1989,degosson:2006,woit2015quantum}. An element of $\mathbb{R}^{2N}\ltimes \mathrm{Sp}(2N,\mathbb{R})$ can be uniquely parametrized by a pair $(\eta,M)$,
\begin{equation}
	\hat{\xi}^a\mapsto\, M^a{}_b\, \hat{\xi}^b\,+\,\eta^a\qquad\textrm{with}\quad M^a{}_b\in \mathrm{Sp}(2N,\mathbb{R})\quad\textrm{and}\quad \eta^a\in \mathbb{R}^{2N}\,.
	\label{}
\end{equation}
A unitary representation of the inhomogeneous symplectic group,
\begin{equation}
	U(M,\eta)\,\hat{\xi}^a\,U(M,\eta)^{-1}=\,M^a{}_b\, \hat{\xi}^b\,+\,\eta^a\,,
	\label{eq:UMz}
\end{equation}
is provided by the unitary operator $U(M,\eta)$ given by
\begin{equation}
	U(M,\eta)=\exp\big(\mathrm{i}\,\Omega_{ab}\eta^a\hat{\xi}^b\big)\,\exp\big(\mathrm{i}\,\frac{1}{2}h_{ab}\hat{\xi}^a\hat{\xi}^b\big)\,,
	\label{}
\end{equation}
where the symmetric matrix $h_{ab}$ is defined in terms of the generator of a symplectic transformation by $M^a{}_b=e^{\Omega^{ac}h_{cb}}$.

An immediate consequence of (\ref{eq:UMz}) is that, for systems with a finite number of degrees of freedom, two Fock space representations associated to different choices of Darboux basis $\mathcal{D}$ and $\tilde{\mathcal{D}}=M\mathcal{D}$ are related by the unitary transformation $U(M)$. This is a special case of the Stone-von Neumann theorem \cite{v1931eindeutigkeit,folland:1989}. A second consequence is that classical quadratic observables $\mathcal{O}=\frac{1}{2}h_{ab}\xi^a\xi^b$ promoted to operators  $\hat{\mathcal{O}}$ with symmetric (Weyl) ordering have commutation relations that reproduce the classical Poisson brackets, $[\hat{\mathcal{O}}_1,\hat{\mathcal{O}}_2]=\mathrm{i}\,\{\mathcal{O}_1,\mathcal{O}_2\}$.\footnote{This property cannot be extended to higher order observables as shown by the Groenewold-Van Hove no-go theorem \cite{groenewold1946principles}.} 
A third consequence  of (\ref{eq:UMz}) is that the unitary evolution generated by a quadratic Hamiltonian can be fully described in terms of linear symplectic transformations in phase space. This fact plays a major role in the analysis of this paper.

\subsection{Gaussian states and the complex structure $J$}
In (\ref{eq:Fock vacuum}) we defined the Fock vacuum $|0,\dots,0;\mathcal{D}\rangle$ as the state annihilated by all operators $a_i$ associated to an arbitrary choice of Darboux basis $\mathcal{D}$ in $(V,\Omega)$. Gaussian states provide a generalization of this notion. The relevant structure needed to define a Gaussian state is a complex structure $J^a{}_b$ compatible with the symplectic structure $\Omega^{ab}$ defined on phase space. A compatible complex structure $J^a{}_b$ is a linear map on phase space that (i) squares to minus the identity, (ii) is symplectic and (iii) gives rise to a symmetric positive definite metric $g_{ab}$:
\begin{equation}
	\textrm{(i)}\quad J^a{}_c\,J^c{}_b=-\delta^a{}_b,\qquad\textrm{(ii)}\quad J^a{}_c\, J^b{}_d\,\Omega^{cd}=\Omega^{ab} ,\qquad\textrm{(iii)}\quad
	g_{ab}=\omega_{ac}\,J^c{}_b\,.
	\label{eq:Jg}
\end{equation}
We define also the map $G^{ab}$ obtained by raising the indices of the metric $g_{ab}$ with the symplectic structure $\Omega^{ab}$,
\begin{equation}
	G^{ab}\equiv \Omega^{ac}g_{cd}\Omega^{db}\,,
	\label{}
\end{equation}
Note that by construction $G^{ab}$ is the inverse of the metric $g_{ab}$, i.e.  $G^{ac}g_{cb}=\delta^a{}_b$.\\

We define the Gaussian state $|J,\zeta\rangle$ as the state annihilated by the operator $a^b_{J\zeta}$, i.e. the solution of the equation
\begin{equation}
	a^b_{J\zeta}|J,\zeta\rangle=0\qquad \textrm{with}\qquad a^b_{J\zeta}\equiv \frac{(\hat{\xi}^b-\zeta^b)+\mathrm{i}\,J^b{}_a (\hat{\xi}^a-\zeta^a)}{\sqrt{2}}\,,
	\label{eq:defGaussian}
\end{equation}
%\begin{equation}
%(\delta^a{}_b-\mathrm{i}\,J^a{}_b)(\hat{\xi}^b-\zeta^b)\,|J,\zeta\rangle\,=\,0\,,
%\label{eq:defGaussian}
%\end{equation}
where  $J^a{}_b$ is a compatible complex structure and $\zeta^a\in \mathbb{R}^{2N}$ a vector in phase space. This expression provides a formalization and generalization of (\ref{eq:Fock vacuum}).\\

The Fock vacuum $|0,\dots,0;\mathcal{D}\rangle$ defined in (\ref{eq:Fock vacuum}) is an example of Gaussian state. It corresponds to the complex structure $J^a{}_b=\sum_i \big(\Omega^{ac}q_{ic}\, q_{ib}+\Omega^{ac}p_{ic}\, p_{ib}\big)$ and zero shift vector $\zeta^a=0$, i.e. $|0,\dots,0;\mathcal{D}\rangle=|J,0\rangle$. Different choices $J$ and $\tilde{J}$ of complex structure are related by a symplectic transformation, $\tilde{J}=M^{-1}J M$. In the language of creation and annihilation operators this operation corresponds to a Bogoliubov transformation \cite{berezin:1966}. Given a choice of Fock vacuum $|J,0\rangle$, the state $|\tilde{J},0\rangle$ obtained by acting with a Bogoliubov transformation is generally called a squeezed vacuum \cite{walls2007quantum}. On the other hand, a displaced Fock vacuum corresponds to a translation $\zeta^a$ in phase space, $|J,\zeta\rangle$, also called a coherent state. For any choice of Darboux basis $\mathcal{D}=(q_i,p_i)$, the Schr\"odinger representation function $\psi(q_i)=\langle q_i|J,\zeta\rangle$ of a Gaussian state is given by a complex Gaussian function of $q_i$, which explains their name.\\

The one-point and the two-point correlation functions of a Gaussian state can be computed directly from the definition (\ref{eq:defGaussian}) and are given by
\begin{align}
	\langle J,\zeta|\,\hat{\xi}^a\,|J,\zeta\rangle\;=&\;\;\zeta^a\,,\label{eq:1point}\\[5pt]
	\langle J,\zeta|\,\hat{\xi}^a\,\hat{\xi}^b\,|J,\zeta\rangle\;=&\;\;\frac{G^{ab}+\mathrm{i}\,\Omega^{ab}}{2}\;+\;\zeta^a\,\zeta^b\,\qquad \textrm{with}\qquad G^{ab}=-J^a{}_c\Omega^{cb}\,.
	\label{eq:2point}
\end{align}
Higher $n$-point functions are determined by Wick theorem applied to the operator $\hat{\xi}^a-\zeta^a$. This property corresponds to the absence of non-Gaussianities: Correlations are completely determined by $J$ and $\eta$. Conversely, given the expectation value $\zeta^a$ and the connected symmetric part $G^{ab}$ of the $2$-point correlation function, the Gaussian state $|J,\zeta\rangle$ is determined by $J^a{}_b=-G^{ac}\omega_{cb}$.

\subsection{Quadratic time-dependent Hamiltonians}
We consider a quadratic time-dependent Hamiltonian $H(t)$,
\begin{equation}
	H(t)=\frac{1}{2}h_{ab}(t)\,\hat{\xi}^a\hat{\xi}^b+f_a(t)\,\hat{\xi}^a\,.
	\label{eq:quadHhat}
\end{equation}
The unitary evolution operator solves the Schr\"odinger equation $\,\mathrm{i}\,\frac{\partial}{\partial t}U(t)=H(t)\,U(t)$ and is given by the time-ordered exponential
\begin{equation}
	U(t)=\mathcal{T}\exp\left(-\mathrm{i}\int_0^tH(t')dt'\right)\,.
	\label{}
\end{equation}
The evolution of the observable $\hat{\xi}^a$ is then given by
\begin{equation}
	U(t)\,\hat{\xi}^a\,U(t)^{-1}\,=\,M^a{}_b(t)\, \hat{\xi}^b\,+\,\eta^a(t)\,,
	\label{}
\end{equation}
where $M^a{}_b(t)$ and $\eta^a(t)$ are defined by the classical Hamiltonian evolution and given in (\ref{eq:Mh}).\\

An important property of Gaussian states is that they provide exact solutions of the Schr\"odinger equation for a time-dependent quadratic Hamiltonian (\ref{eq:quadHhat}),
\begin{equation}
	\mathrm{i}\frac{\partial}{\partial t}\,|J_t,\zeta_t\rangle=H(t)|J_t,\zeta_t\rangle\,.
	\label{eq:Schr}
\end{equation}
Given a Gaussian state $|J_0,\zeta_0\rangle$ at the time $t=0$, the state at the time $t$ is 
\begin{equation}
	|J_t,\zeta_t\rangle\,=\,U(t)\,|J_0,\zeta_0\rangle\,,
	\label{eq:Jt}
\end{equation}
with $J_t$ and $\zeta_t$ determined as follows. The equation (\ref{eq:Schr}) defined on the Hilbert space $\mathcal{H}$ results in linear equations for the matrix $J_t$ and the vector $\eta_t$ on phase space,
\begin{align}
	&\frac{\partial}{\partial t}J_t=K(t)\,J_t-J_t \,K(t)\,,\\[1em]
	&\frac{\partial}{\partial t}\zeta_t=K(t)\,\zeta_t\,+\,k(t)\,,
	\label{}
\end{align}
with the matrix $K^a{}_b(t)=(\Omega^{ac}h_{cb}(t))$ and the vector $k^a(t)=(\Omega^{ab}f_b(t))$ are defined in terms of the parameters of the quadratic time-dependent Hamiltonian $H(t)$, (\ref{eq:quadHhat}). The linear equations for the complex structure $J_t$ and the shift $\eta_t$ can be solved as time-ordered series,
\begin{align}
	&J_t=\,M^{-1}(t) \;J_0\;M(t)\\
	&\zeta_t=M(t)\zeta_0\,+M(t)\int_0^t M^{-1}(t')\,k(t')\,dt'
	\label{}
\end{align}
where $J_0$ and $\zeta_0$ are initial conditions and 
\begin{equation}
	M(t)=\mathcal{T}\exp\left(\int_0^t K(t')\,dt'\right)
	\label{}
\end{equation}
is the symplectic matrix discussed in (\ref{eq:Mh}). 

Given an initial state $|J_0,\zeta_0\rangle$ and a quadratic time-dependent Hamiltonian $H(t)$, the evolution of the one-point and two-point correlation functions are given by (\ref{eq:1point}) and (\ref{eq:2point}) with $\zeta=\zeta_t$ and $G^{ab}=-J_t{}^a{}_c\,\Omega^{cb}$.

\subsection{Subsystems and the restricted complex structure}
We consider a bosonic quantum system consisting of two subsystems $A$ and $B$ with $N_A$ and $N_B$ degrees of freedom. The Hilbert space of the system decomposes in the tensor product of the Hilbert spaces of the two subsystems,
\begin{equation}
	\mathcal{H}=\mathcal{H}_A\otimes \mathcal{H}_B\,.
	\label{}
\end{equation}
The density matrix $\rho_A$ of a pure state $|\psi\rangle\in\mathcal{H}$ restricted to the subsystem $A$ is defined by
\begin{equation}
	\rho_A=\mathrm{Tr}_{\mathcal{H}_B}(|\psi\rangle\langle\psi|)\,.
	\label{}
\end{equation}
With this definition, the expectation value of any observable in the subsystem $A$ can be computed directly from the density matrix as a trace over the Hilbert space $\mathcal{H}_A$,
\begin{equation}
	\langle\psi|\mathcal{O}_A|\psi\rangle=\mathrm{Tr}_{\mathcal{H}_A}(\mathcal{O}_A\,\rho_A)\,.
	\label{}
\end{equation}
From an operational point of view a subsystem is determined by a subalgebra of observables, i.e. by a restriction of the set of measurements performed on the system. We discuss how the choice of subalgebra of observables identifies the subsystem $A$, its complement $B$, and allows us to compute the density matrix of a Gaussian state.\\

The observables of a bosonic quantum system form a Weyl algebra $\mathcal{A}_V=\mathrm{Weyl}(2N,\mathbb{C})$ generated by linear observables $\hat{\xi}^a$ with commutation relations $[\,\hat{\xi}^a,\hat{\xi}^b\,]=\mathrm{i}\,\Omega^{ab}\,$. We define a subsystem with $N_A$ degrees of freedom by choosing a subalgebra $\mathcal{A}_A\subset \mathcal{A}_V$ generated by a set of $N_A$ linear observables $\hat{\theta}^r$,  
\begin{equation}
	\hat{\theta}^r=\theta_a^r\,\hat{\xi}^a\qquad \textrm{with} \qquad r=1,\dots,2N_A
	\label{}
\end{equation}
and canonical commutation relations 
\begin{equation}
	[\,\hat{\theta}^r,\hat{\theta}^s\,]=\mathrm{i}\,\Omega_A^{rs}
	\label{eq:commOmegaA}
\end{equation}
where $\Omega_A^{rs}=\Omega^{ab}\theta_a^r\theta_b^s$ is required to be a symplectic structure on the vector space $A=\mathbb{R}^{2N_A}$, so that the couple $(A,\Omega_A)$ is a symplectic vector space. The Hilbert space $\mathcal{H}_A$ is obtained as a Fock representation of the Weyl algebra $\mathcal{A}_A=\mathrm{Weyl}(2N_A,\mathbb{C})$ as discussed in section~\ref{sec2:Fock}. We call $\phi_i,\pi_i$ a set of canonical observables in $A$ associated to the Darboux basis $\mathcal{D}_A=(\phi_1,\dots,\phi_{N_A},\pi_1,\dots,\pi_{N_A})$.

The algebra of observables describing the rest of the system is given by $\mathcal{A}'_A$, the commutant of $\mathcal{A}_A$ in  $\mathcal{A}_V$ defined by
\begin{equation}
	\mathcal{A}'_A\equiv \{\mathcal{O}\in \mathcal{A}_V\;|\;[\mathcal{O}_A,\mathcal{O}]=0\;\;\textrm{for all}\;\; \mathcal{O}_A\in \mathcal{A}_A\},
	\label{}
\end{equation}
i.e. the set of all operators which commute with all operators in $\mathcal{A}_A$. Here, the commutant $\mathcal{A}'_A$ is generated by linear operators with coefficients in $B$, the symplectic complement of $A$. Let us consider the subalgebra $\mathcal{A}_B\subset \mathcal{A}_V$ generated by a set of $N_B$ linear observables $\hat{\varTheta}^k$,  
\begin{equation}
	\hat{\varTheta}^k=\varTheta_a^k\,\hat{\xi}^a\qquad \textrm{with} \qquad k=1,\dots,2N_B
	\label{}
\end{equation}
and canonical commutation relations 
\begin{equation}
	[\,\hat{\varTheta}^k,\hat{\varTheta}^h\,]=\mathrm{i}\,\Omega_B^{kh}\qquad \textrm{and}\qquad [\,\hat{\theta}^r,\hat{\varTheta}^k\,]=0
	\label{}
\end{equation}
where $\Omega_B^{kh}=\Omega^{ab}\varTheta_a^k \varTheta_b^h$ is required to be a symplectic structure on the vector space $B^*=\mathbb{R}^{2N_B}$, so that $(B^*,\Omega_B)$ is a symplectic space. The requirement that $B^*$ is the symplectic complement of $A^*$ results in the commutation relation $[\,\hat{\theta}^r,\hat{\varTheta}^k\,]=0$. The Hilbert space $\mathcal{H}_B$ is obtained as a Fock representation of the Weyl algebra $\mathcal{A}'_A=\mathcal{A}_B=\mathrm{Weyl}(2N_B,\mathbb{C})$. We call $\Phi_i,\Pi_i$ a set of canonical observables in $B^*$ dual to the Darboux basis $\mathcal{D}_B=(\Phi_1,\dots,\Phi_{N_B},\Pi_1,\dots,\Pi_{N_B})$.

The subalgebra  $\mathcal{A}_A$ has a trivial center,\footnote{We give an example of subsystem defined by a subalgebra with non-trivial center. Consider a bosonic system with $N=3$ degrees of freedom. The algebra $\mathcal{A}_V$ of observables of the system is generated by elements of the Darboux basis $\mathcal{D}_V=(q_1,q_2,q_3,p_1,p_2,p_3)$. Let us consider the subalgebra $\mathcal{A}_C$ generated by $(q_1,p_1,\,q_2)$. Its commutant is $\mathcal{A}'_C=(q_3,p_3,\,q_2)$. As a result this subalgebra has a non-trivial center $\mathcal{Z}_C\equiv\mathcal{A}_C\cap \mathcal{A}'_C=(\mathbbm{1},q_2)''$. In this case the algebra of observables of the system decomposes in $\mathcal{A}_V=\oplus_\lambda\;(\mathcal{A}_C{}^{(\lambda)}\otimes \mathcal{A}'_C{}^{(\lambda)})$ and the Hilbert space decomposes in a direct sum of tensor products $\mathcal{H}=\oplus_\lambda\;(\mathcal{H}_C{}^{(\lambda)}\otimes \mathcal{H}'_C{}^{(\lambda)})$ where $\lambda$ is a basis of simultaneous eigenstates of the operators in the center (eigenstates of $q_2$ in this example). Choosing a symplectic subspace as done in (\ref{eq:commOmegaA}) guaranties that the center of the subalgebra is trivial and the Hilbert space decomposes into a tensor product.} i.e. $\mathcal{A}_A\cap\mathcal{A}'_A=\mathbbm{1}$. As a result the algebra of observables of the systems decomposes in a tensor product over the subsystem $A$ and its complement, $\mathcal{A}_V=\mathcal{A}_A\otimes \mathcal{A}_B$, and the Hilbert space of the system decomposes in the tensor product $\mathcal{H}=\mathcal{H}_A\otimes \mathcal{H}_B$. This decomposition reproduces at the quantum level the decomposition of phase space $V$ in two symplectic complements $A$ and $B$ with Darboux basis $\mathcal{D}_V=(\mathcal{D}_A,\mathcal{D}_B)$.\\

Given a subsystem $A$, the Gaussian state $|J,\zeta\rangle\in \mathcal{H}$ admits a Schmidt decomposition that selects the Darboux basis $\mathcal{D}_A$ and $\mathcal{D}_B$ in the two complementary subsystems so that the state can be written in the form \cite{Bianchi:2015fra}
\begin{equation}
	|J,\zeta\rangle=\!\sum_{n_i=0}^\infty\!\!\left(\prod_{i=1}^{N_e}\sqrt{\frac{2\,(\nu_i -1)^{n_i}}{(\nu_i +1)^{n_i+1}}\!\!}\;\right)U(\zeta_A)|n_1, .\,.\,,n_{N_e},0, .\,.\,;\mathcal{D}_A\rangle\otimes U(\zeta_B)|n_1, .\,.\,,n_{N_e},0, .\,.\,;\mathcal{D}_B\rangle.
	\label{eq:Schmidt}
\end{equation}
The unitary operator $U(\zeta_A)$ generates a shift in $A$ with parameter $\zeta_A^r=\theta^r_a\zeta^a$. Note that $U(\zeta)=U(\zeta_A)\otimes U(\zeta_B)$. The parameters $\nu_i$ are the positive eigenvalues of the matrix $[\mathrm{i}J]_A=(\mathrm{i}\theta_a^r\,J^a{}_b \,\vartheta^b_s)$ obtained as the restriction to $A$ of $\mathrm{i} J$,
\begin{equation}
	\textrm{Eig}(\,[\mathrm{i}J]_A)=\{\pm\nu_i\} \qquad \textrm{with}\quad \nu_i\geq 1\,.
	%=\{\pm\cosh{2r_i}\}\,.
	\label{}
\end{equation}
The condition  $\nu_i\geq 1$ follows from the fact that the matrix $[J]_A$ is the restriction of a complex structure in $V$. We define the number of \emph{entangled pairs} in the decomposition $\mathcal{H}_A\otimes\mathcal{H}_B$ as the number of non-trivial terms in the sum in (\ref{eq:Schmidt}). This is also the number of positive eigenvalues of $[\mathrm{i}J]_A$ that differ from $+1$, or equivalently the rank of the matrix $\mathbbm{1}-([\mathrm{i}J]_A)^2$,
\begin{equation}
	N_e\equiv \mathrm{rank}\big(\mathbbm{1}-([\mathrm{i}J]_A)^2\big)\;\leq \;\min(N_A,N_B)\,.
	\label{}
\end{equation}
Note that the positive eigenvalues $\nu_i\neq 1$ of $[\mathrm{i}J]_A$ and of $[\mathrm{i}J]_B$ coincide, see figure~\ref{fig:entangled-pairs}. If all $\nu_i=1$, then $N_e=0$ and the Gaussian state $|J,\zeta\rangle$ factorizes in a tensor product of Gaussian states.

The reduced density matrix $\rho_A$ of a Gaussian state is immediate to obtain once the Schmidt decomposition is known,
\begin{equation}
	\rho_A(J,\zeta)=\mathrm{Tr}_{\mathcal{H}_A}\big(|J,\zeta\rangle\langle J,\zeta|\big)=\,U(\zeta_A)\,\rho_A(J)\,U(\zeta_A)^{-1}
	\label{eq:rho=UrhoU}
\end{equation}
where
\begin{equation}
	\rho_A(J)=\sum_{n_i=0}^\infty\left(\prod_{i=1}^{N_e} \frac{2}{\nu_i+1}\Bigg(\frac{\nu_i -1}{\nu_i +1}\Bigg)^{\!\!n_i}\right)\;|n_1, .\,.\,,n_{N_e},0, .\,.\,;\mathcal{D}_A\rangle\langle n_1, .\,.\,,n_{N_e},0, .\,.\,;\mathcal{D}_A|.
	\label{eq:rhoJ}
\end{equation}
We note that the density matrix can be written in the compact operatorial form
\begin{equation}
	\rho_A(J)=e^{-H_A}
	\label{}
\end{equation}
with the modular Hamiltonian $H_A$ given by
\begin{equation}
	H_A=\frac{1}{2}q_{rs}\hat{\theta}^r\hat{\theta}^s\,+\,E_0\qquad\textrm{and}\qquad q_{rs}=2\,\mathrm{i}\,\omega_{rk}\,\mathrm{arcoth}\big(\mathrm{i}\,J^k{}_s\big),
	\label{}
\end{equation}
where $(J^r{}_s)=(\theta_a^r\,J^a{}_b\,\vartheta^b_s)=[J]_A$, $\;(\omega_{rs})=(\vartheta^a_r\,\omega_{ab} \vartheta^b_s)=[\omega]_A=[\Omega^{-1}]_A$, and $E_0$ is a constant that fixes the normalization $\mathrm{Tr}_{\mathcal{H}_A} \rho_A\,=1$.\\

\begin{figure}
	\centering
	\noindent\makebox[\linewidth]{
		\includegraphics[width=.75\linewidth]{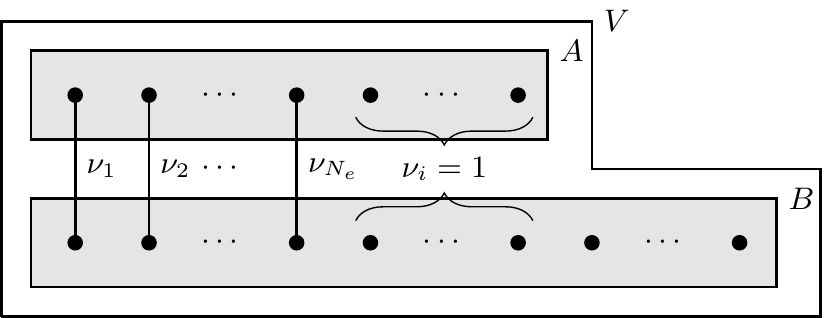}
	}
	\caption{\emph{Entanglement structure of Gaussian states}. We illustrate the entanglement structure of an arbitrary squeezed vacuum $|J\rangle$ with subsystems $A$ and $B$: We can always find a Darboux frame $\mathcal{D}_V=(\mathcal{D}_A,\mathcal{D}_B)$, such that only pairs of degrees of freedom are entangled across $A$ and $B$ with squeezing parameters $\nu_i$. Every black dot represents a degree of freedom, or equivalently a conjugate variable pair $(\varphi_i,\pi_i)$ appearing as basis vectors in $\mathcal{D}_A$ or $\mathcal{D}_B$, every link represents the entanglement between the two connected degrees of freedom. Note that we take $N_A\leq N_B$ and find that only up to $N_A$ pairs can be entangled. The remaining $N_B-N_A$ degrees of freedom in subsystem $B$ do not have a partner in subsystem $A$ leading to squeezing parameters $\nu_i=0$ for $i>N_A$. This is the reason why the maximal number of entangled degrees of freedom is dictated by the smaller of the two subsystems.}
	\label{fig:entangled-pairs}
\end{figure}

\subsection{Entanglement entropy and R\'{e}nyi entropy of Gaussian states}\label{sec2:renyi}
Complete knowledge of the state of a system does not imply knowledge of the state of its subsystems. This genuinely quantum-mechanical property is captured by the notion of entanglement entropy. The entanglement entropy $S_A(|\psi\rangle)$ of a pure state $|\psi\rangle$ restricted to the subsystem $A$ is given by the von Neumann entropy of the reduced state,
\begin{equation}
	S_A(|\psi\rangle)\equiv -\mathrm{Tr}_{\mathcal{H}_A}\big(\rho_A\log \rho_A\big)\,.
	\label{}
\end{equation}
To compute the entanglement entropy it is useful to introduce the function $Z(\beta)$ defined as the trace of the density matrix raised to the power $\beta$,
\begin{equation}
	Z(\beta)=\mathrm{Tr}_{\mathcal{H}_A}\big(\rho_A{}^\beta\big)\,.
	\label{eq:Zbeta}
\end{equation}
By construction $Z(0)=N_A$ and $Z(1)=1$. The function $Z(\beta)$ provides an efficient method for computing the entanglement entropy,
\begin{equation}
	S_A(|\psi\rangle)=\left.\Big(1-\beta\frac{\partial}{\partial \beta}\Big)\log Z(\beta)\right|_{\beta=1}.
	\label{}
\end{equation}
In the case of a Gaussian state $|J,\zeta\rangle$, the function $Z(\beta)$ can be expressed in term of the eigenvalues of $[\mathrm{i}J]_A$ using formulae (\ref{eq:rho=UrhoU}) and (\ref{eq:rhoJ}),
\begin{equation}
	\log Z(\beta)=-\sum_{i=1}^{N_e}\log\left(\Big(\frac{\nu_i+1}{2}\Big)^\beta-\Big(\frac{\nu_i-1}{2}\Big)^\beta\right)\,.
	\label{}
\end{equation}
It can also be expressed as a trace over the vector space $A$ of a function of the matrix $[\mathrm{i}J]_A$, 
\begin{equation}
	\log Z(\beta)=-\frac{1}{2}\,\mathrm{tr}\,\log\left| 
	\left|\frac{\mathbbm{1}+[\mathrm{i}J]_A}{2}\right|^\beta - \left|\frac{\mathbbm{1}-[\mathrm{i}J]_A}{2}\right|^\beta 
	\right|\,.
	\label{}
\end{equation}
The entanglement entropy of a Gaussian state can be computed from $Z(\beta)$ and expressed in terms of the eigenvalues $\nu_i$, \cite{braunstein2005quantum,ferraro2005gaussian,weedbrook2012gaussian,adesso:2014co}
\begin{equation}
	S_A(|J,\zeta\rangle)=\sum_{i=1}^{N_e} S(\nu_i)\qquad\textrm{where}\quad 
	S(\nu)\equiv \frac{\nu+1}{2}\log\frac{\nu+1}{2}-\frac{\nu-1}{2}\log\frac{\nu-1}{2}\,,
	\label{eq:sentropy}
\end{equation}
or equivalently in terms of the matrix $[\mathrm{i}J]_A$, \cite{Bianchi:2015fra}
\begin{equation}
	S_A(|J,\zeta\rangle)=\textrm{tr}\left(\frac{\mathbbm{1}+[\text{i} J]_A}{2}\log\Big|\frac{\mathbbm{1}+[\text{i} J]_A}{2}\Big|\right)\,.
	\label{eq:BHY}
\end{equation}
We can also compute the R\'{e}nyi entropy of order two,\footnote{The R\'{e}nyi entropy of order $n$ is defined as $R^{(n)}_A(|\psi\rangle)\equiv-\frac{1}{n-1}\log\textrm{Tr}_{\mathcal{H}_A}(\rho_A{}^n)=-\frac{1}{n-1}\log Z(n)$.}
\begin{equation}
	R_A(|\psi\rangle)\equiv -\log \textrm{Tr}_{\mathcal{H}_A}(\rho_A{}^2)\;=\;-\log Z(2)\,.
	\label{}
\end{equation}
The R\'{e}nyi entropy of a Gaussian state is
\begin{equation}
	R_A(|J,\zeta\rangle)=\sum_{i=1}^{N_e}\log \nu_i\,,
	\label{}
\end{equation}
which can be expressed in terms of the determinant of the matrix $[\mathrm{i}J]_A$,
\begin{equation}
	R_A(|J,\zeta\rangle)=\frac{1}{2}\log\big|\det\big([\mathrm{i}J]_A\big)\big|\,.
	\label{}
\end{equation}
This expression plays a central role in the analysis presented in this paper in section~\ref{sec:growth}.

\providecommand{\href}[2]{#2}\begingroup\raggedright\endgroup

%\bibliographystyle{JHEP}
%\bibliography{bib-entanglement-instabilities}

\end{document}